\newcommand{\inbody}[1]{}
\def\E{\mathop{\mathbb{E}}}
\def\R{\mathbb{R}}
\def\eps{\epsilon}
\def\del{\delta}
\newcommand{\D}{{\mathcal D}}
\newcommand{\Z}{{\mathcal Z}}
\newcommand{\A}{{\mathcal A}}
\newcommand{\B}{{\mathcal B}}
\newcommand{\hide}[1]{}
\def\F{\mathcal {F}}
\def\1{\mathbf{1}}
\def\F{\mathcal {F}}
\newcommand{\VSTAT}{{\mbox{VSTAT}}}
\newcommand{\etal}{{\em et al.\ }}
\newcommand{\MVSTAT}{{\mbox{MVSTAT}}}
\newcommand{\MSAMPLE}{{\mbox{1-MSTAT}}}
\newcommand{\cS}{\mathcal{S}}
\newcommand{\dc}{\kappa_2}
\newcommand{\SDN}{{\mathrm{SDN}}}
\newcommand{\SAMPLE}{{\mbox{1-STAT}}}
\newcommand{\ra}{\rangle}
\newcommand{\la}{\langle}
\newcommand{\sgn}{\mathsf{sign}}
\newcommand{\on}{\{\pm 1\}}
\DeclareMathOperator*{\argmin}{arg\,min}
\newcommand{\equ}[1]{
\begin{equation}
#1
\end{equation}}
\newcommand{\alequ}[1]{\begin{align} #1 \end{align}}
\newcommand{\alequn}[1]{\begin{align*} #1 \end{align*}}
\newcommand{\cond}{\ |\ }
\newtheorem{thm}{Theorem}[section]
\newtheorem{lem}[thm]{Lemma}
\newtheorem{rem}[thm]{Remark}
\newtheorem{cor}[thm]{Corollary}
\newtheorem{defn}[thm]{Definition}
\newtheorem{definition}[thm]{Definition}
\newtheorem{prop}[thm]{Proposition}
\newtheorem{conj}[thm]{Conjecture}
\newcommand{\paramini}[1]{
\smallskip
\noindent {\bf #1}
}
\begin{document}

\title{On the Complexity of Random Satisfiability Problems\\ with Planted Solutions\footnote{An extended abstract of this paper appeared at STOC, 2015 \cite{feldman2015complexity}.}}
\author{Vitaly Feldman\thanks{Google Research. Work done while at IBM Research and visiting the Simons Institute, UC Berkeley}
\and Will Perkins\thanks{University of Birmingham. Part of the work done while at Georgia Tech. Supported in part by an NSF postdoctoral fellowship.}
\and Santosh Vempala\thanks{Georgia Tech. Supported in part by NSF award CCF-1217793.}
}
\date{}
\maketitle

\begin{abstract}
The problem of identifying a planted assignment given a random $k$-SAT formula consistent with the assignment exhibits a large algorithmic gap: while the planted solution becomes unique and can be identified given a formula with $O(n\log n)$ clauses, there are distributions over clauses for which the best known
efficient algorithms require $n^{k/2}$ clauses. We propose and study a unified model for planted $k$-SAT, which captures well-known special cases. An instance is described by a planted assignment $\sigma$ and a distribution on clauses with $k$ literals. We define its {\em distribution complexity} as the largest $r$ for which the distribution is not $r$-wise independent ($1 \le r \le k$ for any distribution with a planted assignment).

Our main result is an unconditional lower bound, tight up to logarithmic factors, for {\em statistical} (query) algorithms \cite{kearns1998efficient,FeldmanGRVX:12}, matching  known upper bounds, which, as we show, can be implemented using a statistical algorithm.
Since known approaches for problems over distributions have statistical analogues (spectral, MCMC, gradient-based, convex optimization etc.), this lower bound provides a rigorous explanation of the observed algorithmic gap. The proof introduces a new general technique for the analysis of statistical query algorithms. It also points to a geometric {\em paring} phenomenon in the space of all planted assignments.

We describe consequences of our lower bounds to Feige's refutation
hypothesis \cite{feige2002relations}
and to lower bounds on general
convex programs that solve planted $k$-SAT. Our bounds also extend to other planted $k$-CSP models,
and, in particular, provide concrete evidence for the security of Goldreich's one-way function and the associated pseudorandom generator when used with a sufficiently hard predicate \cite{goldreich2000candidate}.

\end{abstract}

\thispagestyle{empty}
\newpage

\tableofcontents

\setcounter{page}{1}

\newpage


\section{Introduction}

Boolean satisfiability and constraint satisfaction problems are central to complexity theory; they are canonical NP-complete problems and their approximate versions are also hard. Are they easier on average for natural distributions? An instance of random satisfiability is generated by fixing a distribution over clauses, then drawing i.i.d.~clauses from this distribution. The average-case complexity of satisfiability problems is also motivated by its applications to models of disorder in physical systems, and to cryptography, which requires problems that are hard on average.

Here we study {\em planted satisfiability}, in which an assignment is fixed in advance, and clauses are selected from a distribution defined by the planted assignment.  Planted satisfiability and, more generally, random models with planted solutions appear widely in several different forms such as network clustering with planted partitions (the stochastic block model and its variants), random $k$-SAT with a planted assignment, and a proposed one-way function from cryptography \cite{goldreich2000candidate}.

It was noted in \cite{barthel2002hiding} that drawing satisfied $k$-SAT clauses uniformly at random from all those satisfied by an assignment  $\sigma \in \{ \pm 1\}^n$ often does not result in a difficult instance of satisfiability even if the number of observed clauses is relatively small. However, by changing the proportions of clauses depending on the number of satisfied literals under $\sigma$, one can create more challenging distributions over instances. Such ``quiet plantings" have been further studied in \cite{jia2005generating,achlioptas2005hiding,krzakala2009hiding,krzakala2012reweighted}.
Algorithms for planted $3$-SAT with various relative proportions were given by Flaxman \cite{flaxman2003spectral} and Coja-Oghlan \etal \cite{coja2010efficient}, the first of which works for $\Theta( n\log n)$ clauses but excludes distributions close to $3$-XOR-SAT, and the second of which works for all planted $3$-SAT distributions but requires $\Theta(n^{3/2} \ln^{10} n)$ clauses (note that a satisfiable $k$-XOR-SAT formula can be viewed as a satisfiable $k$-SAT formula with the same literals since XOR implies OR).
As $k$ increases, the problem exhibits a larger algorithmic gap:  the number of clauses required by known algorithms to efficiently identify a planted assignment is $\Omega(n^{k/2})$ while the number at which the planted assignment is the unique satisfying assignment is $O(n \log n)$.

We give a simple model for producing instances of planted $k$-SAT that generalizes and unifies past work on specific distributions for planted satisfiability. In this model, each clause $C$, a $k$-tuple of the $2n$ literals (variables and their negations), is included in the random formula with probability proportional to $Q(y)$ where $y \in \{\pm 1 \}^k$ is the value of the literals in $C$ on the planted assignment $\sigma$.
Here $Q$ can be an arbitrary probability distribution over $\{ \pm 1\}^k$. By choosing $Q$ supported only on $k$-bit strings with at least one true value, we can ensure that only satisfiable $k$-SAT formulas will be produced, but the model is more general and allows ``noisy" versions of satisfiability. We refer to an instance obtained by taking $Q$ to be uniform over $k$-bit strings with an even number of 1's as $k$-XOR-SAT (since each clause also satisfies an XOR constraint).

We identify the parameter of $Q$ that determines (up to lower order terms) the number of clauses that existing efficient algorithms require. It is the largest $r$ such that the distribution $Q$ is $(r-1)$-wise independent but not $r$-wise. Equivalently, it is the size of the smallest non-empty subset of $k$ indices for which the discrete Fourier coefficient of $Q$ is nonzero. This is always an integer between $1$ and $k$ for any distribution besides the uniform distribution on all clauses. Known algorithms  use $\tilde{O}(n^{r/2})$ clauses in general to identify the planted solution (with the exception of special cases which can be solved using Gaussian elimination and other algebraic techniques; see more details below). In \cite{FeldmanPV14} we gave an algorithm based on a subsampled power iteration that uses $\tilde{O}(n^{r/2})$ clauses to identify the planted assignment for any $Q$.

Our general formulation of the planted $k$-SAT problem and the notion of distribution complexity reveal a connection between planted $k$-SAT and the problem of inverting a PRG based on Goldreich's candidate one-way function \cite{goldreich2000candidate}, for which the link between $r$-wise independence and algorithmic tractability was known before~\cite{mossel2006varepsilon,austrin2009approximation,bogdanov2009security,applebaum2012dichotomy}. In this problem for a fixed predicate $P:\on^k \rightarrow \{-1,1\}$, we are given access to samples from a distribution $P_\sigma$, for a planted assignment $\sigma \in \on^n$. A random sample from this distribution is a randomly and uniformly chosen ordered $k$-tuple of variables (without repetition) $x_{i_1},\ldots,x_{i_k}$ together with the value $P(\sigma_{i_1},\ldots,\sigma_{i_k})$. As in the problem above, the goal is to recover $\sigma$ given $m$ random and independent samples from $P_\sigma$ or at least to be able to distinguish any planted distribution from one in which the value is a uniform random coin flip (in place of $P(\sigma_{i_1},\ldots,\sigma_{i_k})$). The number of evaluations of $P$ for which the problem remains hard determines the {\em stretch} of the pseudo-random generator (PRG). We note that despite the similarities between these two types of planted problems, we are not aware of any reductions between them (in Section \ref{sec:gen-planted} we show some relationships between these models and an even more general planted CSP model of Abbe and Montanari \cite{AbbeM15}).

Bogdanov and Qiao \cite{bogdanov2009security} show that an SDP-based algorithm of Charikar and Wirth \cite{CharikarWirth:04} can be used to find the input (which is the planted assignment) for any predicate that is {\em not} pairwise-independent using $m=O(n)$ such evaluations. The same approach can be used to recover the input for any $(r-1)$-wise (but not $r$-wise) independent predicate using $O(n^{r/2})$ evaluations~\cite{applebaum2016cryptographic}.

 Another important family of algorithms for recovering the planted assignment in Goldreich's PRG is algebraic, based on Gaussian elimination and its generalizations~\cite{mossel2006varepsilon,ApplebaumLovett15}. These attack algorithms are not captured by the framework of statistical algorithms we work with in this paper.  While algebraic approaches also apply to planted satisfiability problems, almost all planting functions $Q$ (in a measure-theoretic sense) are resilient against such algorithms, and any planted satisfiability problem can be made resistant by adding an $\eps$-fraction of uniformly random constraints.

The assumption that recovering the planted assignment in this problem is hard for some predicate has been used extensively in complexity theory and cryptography \cite{Alekhnovich11a,goldreich2000candidate, ishai2008cryptography, applebaum2010public, applebaum2012pseudorandom}, and the hardness of a decision version of this planted $k$-CSP is stated as the DCSP hypothesis in \cite{barak2013optimality}.  Applebaum~\cite{applebaum2012pseudorandom} reduced the search problem (finding the planted assignment) to the decision problem (distinguishing the output from uniformly random).  Our lower bounds below will be for this second, a priori easier, task.

Nearly optimal integrality gaps for LP and SDP hierarchies were recently given for this problem \cite{o2013goldreich} (and references therein) for $\Omega(n^{r/2-\eps})$  evaluations of a predicate that is $(r-1)$-wise but not $r$-wise independent.  Goldreich's PRG is shown to be an $\eps$-biased generator in \cite{mossel2006varepsilon, applebaum2012dichotomy}, and lower bounds against DPLL-style algorithms are given in \cite{cook2009goldreich}.  Applebaum and Lovett~\cite{ApplebaumLovett15} give lower bounds against algebraic attacks in a framework based on polynomial calculus.  

For a survey of these developments, see~\cite{applebaum2016cryptographic}.

\subsection{Summary of results}
\label{sec:our-results}
For the planted $k$-SAT problems and the planted $k$-CSPs arising from Goldreich's construction we address the following question:
{\em How many random constraints are needed to efficiently recover the planted assignment?} 

For these problems we prove unconditional lower bounds for a broad class of algorithms. Statistical (query) algorithms, defined by Kearns in the context of PAC learning \cite{kearns1998efficient} and by Feldman \etal \cite{FeldmanGRVX:12} for general problems on distributions, are algorithms that can be implemented without explicit access to random clauses, only being able to estimate expectations of functions of a random constraint to a desired accuracy. Many of the algorithmic approaches used in machine learning theory and practice have been shown to be implementable using statistical queries (e.g.~\cite{blum1998polynomial,DunaganV08,BlumDMN:05,ChuKLYBNO:06,BalcanF15}; see \cite{Feldman16:easq} for a brief overview) including most standard approaches to convex optimization \cite{FeldmanGV:15}. Other common techniques such as Expectation Maximization (EM)~\cite{DempsterLR77}, MCMC optimization \cite{TannerW87,GelfandSmith90}, (generalized) method of moments \cite{Hansen:82},  and simulated annealing~\cite{KirkpatrickGV83, Cerny1985Thermodynamical} are also known to fit into this framework. The only known problem for which a superpolynomial separation between the complexity of statistical algorithms and the usual computational complexity is known is solving linear equations over a finite field (which can be done via Gaussian elimination).

The simplest form of algorithms that we refer to as statistical are algorithm that can be implemented using evaluations of Boolean functions on a random sample. Formally, for a distribution $D$ over some domain (in our case all $k$-clauses) $\SAMPLE$ oracle is the oracle that given any function $h: X \rightarrow \{0,1\}$ takes a random sample $x$ from $D$ and returns $h(x)$. While lower bounds for this oracle are easiest to state and interpret, the strongest form of our lower bounds is for algorithms that use $\VSTAT$ oracle defined in \cite{FeldmanGRVX:12}. $\VSTAT(t)$ oracle captures the information about the expectation of a given function that is obtained by estimating it on $t$ independent samples.
\begin{definition}
  Let $D$ be the input distribution over the domain $X$. For an integer parameter $t > 0$,  for any query  function $h: X \rightarrow [0,1]$, $\VSTAT(t)$ returns a value $v \in \left[p- \tau, p + \tau\right]$ where
$p = \E_D[h(x)]$ and $\tau = \max \left\{\frac{1}{t}, \sqrt{\frac{p(1-p)}{t}}\right\}$.
\end{definition}
This oracle is based on the well-known statistical query oracle defined by Kearns \cite{kearns1998efficient} that
uses the same tolerance $\tau$ for all query functions. The $\VSTAT(t)$ oracle corresponds more tightly to access to $t$ samples and allows us to prove upper and lower bounds that closely correspond to known algorithmic bounds.

We show that the distribution complexity parameter $r$ characterizes the number of constraints (up to lower order terms) that an efficient statistical algorithm needs to solve instances of either problem. For brevity we state the bound for the planted $k$-SAT problem but identical bounds apply to Goldreich's $k$-CSP.
Our lower bound shows that any polynomial time statistical algorithm needs $\tilde{\Omega}(n^r)$ constraints to even {\em distinguish} clauses generated from a distribution with a planted assignment from uniformly random constraints (the decision problem). In addition, exponential time is required if $\tilde{\Omega}(n^{r-\eps})$ clauses are used for any $\eps > 0$.

More formally, for a clause distribution $Q$ and an assignment $\sigma$ let $Q_\sigma$ denote the distribution over clauses proportional to $Q$ for the planted assignment $\sigma$ (see Section~\ref{sec:prelims} for a formal definition). Let $U_k$ denote the uniform distribution over $k$-clauses.
\begin{thm}\label{thm:lower-bound-intro}
Let $Q$ be a distribution over $k$-clauses of complexity $r$. Then any (randomized) statistical algorithm that, given access to a distribution $D$ that equals $U_k$ with probability $1/2$ and equals $Q_\sigma$ with probability $1/2$ for a randomly and uniformly chosen $\sigma \in \on^n$, decides correctly whether $D = Q_\sigma$ or $D=U_k$ with probability at least $2/3$ needs either:
\begin{enumerate}
\item $\Omega(q)$ calls to $\VSTAT(\frac{n^{r}}{(\log q)^{r}})$ for any $q \geq 1$, or,
\item $\Omega((\frac{n}{\log n})^r)$ calls to $\SAMPLE$.
\end{enumerate}
\end{thm}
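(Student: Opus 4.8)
The plan is to use the standard framework for statistical query lower bounds via a "statistical dimension" argument, adapted to the planted $k$-SAT setting. The core object is the family of distributions $\mathcal{D} = \{Q_\sigma : \sigma \in \on^n\}$ together with the reference distribution $U_k$. First I would introduce the right notion of discrimination between $Q_\sigma$ and $U_k$: for a query function $h$, let $D_\sigma(h) = \E_{Q_\sigma}[h] - \E_{U_k}[h]$. The key quantity controlling the lower bound is a pairwise correlation: for two assignments $\sigma, \sigma'$, one studies
\[
\rho(\sigma,\sigma') \;=\; \left|\,\E_{U_k}\!\left[\Big(\tfrac{Q_\sigma}{U_k}-1\Big)\Big(\tfrac{Q_{\sigma'}}{U_k}-1\Big)\right]\right|,
\]
i.e.\ the inner product of the two likelihood-ratio-minus-one functions under $U_k$. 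I expect to show, via Fourier expansion of $Q$ on $\on^k$, that $Q_\sigma/U_k - 1$ is (up to normalization) a sum of characters $\chi_S$ in the variables $\sigma$ indexed by the nonzero Fourier sets $S$ of $Q$, all of size at least $r$ by the definition of distribution complexity. Consequently $\rho(\sigma,\sigma')$ decays like $(|\sigma \cap \sigma'|/n)^{r}$-type quantities, and for random $\sigma, \sigma'$ the typical overlap is $O(\sqrt{n\log n})$, giving $\rho$ of order $\tilde O(n^{-r})$ for most pairs. This is the quantitative heart of the argument, and it is where the exponent $r$ enters.

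Next I would invoke (or re-derive) the general statistical-dimension machinery of Feldman et al.\ \cite{feldman2013statistical}: if for the "average" pair of assignments the correlation $\rho(\sigma,\sigma')$ is at most $\bar\rho$, and the family has $2^n$ members, then distinguishing a random $Q_\sigma$ from $U_k$ requires roughly $1/\bar\rho$ queries to $\VSTAT(1/\bar\rho)$, or else exponentially many queries at a weaker tolerance — formally one uses the notion of randomized statistical dimension with average correlation (SDA). The main technical step is therefore a concentration bound: controlling not just the typical overlap but the tail, so that the fraction of "bad" pairs $(\sigma,\sigma')$ with large $\rho$ is small enough to be absorbed. Here the logarithmic slack in the statement — $\VSTAT(n^r/(\log q)^r)$ and $(n/\log n)^r$ — comes precisely from the $\sqrt{n\log n}$-scale of typical overlaps and the union bound over the $q$ queries; I would track these factors carefully, using Hoeffding/Chernoff on the $\pm1$ coordinates of $\sigma \oplus \sigma'$.

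Finally, I would translate the $\VSTAT$ lower bound into the $\SAMPLE$ (i.e.\ $\mbox{1-STAT}$) lower bound using the known simulation relating the two oracles: $q$ calls to $\SAMPLE$ can be simulated by $\VSTAT(O(q))$-type queries up to logarithmic losses, which yields clause (2) from clause (1) after adjusting constants. The main obstacle I anticipate is the Fourier/combinatorial estimate in the first paragraph: one must argue that the normalization of $Q_\sigma$ (the partition function making it a probability distribution) does not spoil the $r$-wise independence structure, and that cross terms among different Fourier sets $S, S'$ of $Q$ in the product $(Q_\sigma/U_k-1)(Q_{\sigma'}/U_k-1)$ all either vanish under $\E_{U_k}$ or contribute at the same $\tilde O(n^{-r})$ order — this requires understanding how the random choice of the $k$-tuple of variable indices in a clause interacts with the sets $S$, and is the step where a genuinely new "paring" argument (in the language of the abstract) seems to be needed rather than a black-box application of prior work.
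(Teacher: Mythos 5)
There is a genuine gap, and it sits exactly at what you call ``the quantitative heart of the argument.'' First, your estimate of the typical pairwise correlation is off by a square: $Q_\sigma/U_k-1$ expands into characters $\chi_A(\sigma)$ over variable sets $A$ of size $\ell\ge r$, so $\rho(\sigma,\sigma')$ behaves like a degree-$r$ polynomial in the normalized overlap $\frac{1}{n}\sum_i\sigma_i\sigma'_i$, which for random pairs is $\Theta(n^{-1/2})$ (up to $\sqrt{\log}$ factors). Hence the typical correlation is $\tilde O(n^{-r/2})$, not $\tilde O(n^{-r})$. Second, and more fundamentally, the route you propose --- plug average/pairwise correlations into the SDA machinery of Feldman et al. --- cannot give the stated theorem even with the correct estimate: that framework yields a lower bound only against $\VSTAT(O(1/\bar\rho))$, i.e.\ $\VSTAT(\tilde O(n^{r/2}))$ here, which is quadratically weaker than the claimed $\VSTAT(n^{r}/(\log q)^{r})$. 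The paper states explicitly that the correlation-based dimensions were its starting point and top out at $\VSTAT(n^{r/2})$, which is why a new notion is introduced.

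The actual proof replaces average correlation by the \emph{discrimination norm}: for a large subset $\cS$ of assignments one bounds $\kappa=\max_{\|h\|_{U_k}=1}\E_{\sigma\sim\cS}\bigl[\,\bigl|\E_{Q_\sigma}[h]-\E_{U_k}[h]\bigr|\,\bigr]$, a first-moment quantity, and the dimension-to-lower-bound theorem gives hardness against $\VSTAT(1/(3\kappa^2))$ --- note the $1/\kappa^{2}$, which is the source of the quadratic gain. Concretely, $\E_{Q_\sigma}[h]-\E_{U_k}[h]$ is decomposed via the Fourier coefficients $\hat Q(S)$, $|S|\ge r$, into $\ell$-XOR-SAT discrepancies $\Gamma_\ell(\sigma,h_S)$; each $\Gamma_\ell(\sigma,h_S)$ is a degree-$\ell$ polynomial in $\sigma$ with $\ell_2$-norm $O(\|h\|_2/\sqrt{|X_\ell|})$, and hypercontractive concentration of low-degree polynomials (Bonami--Beckner) bounds its average absolute value over any subset of density $1/d$ by $O((\ln d)^{\ell/2}\|h\|_2/\sqrt{|X_\ell|})$. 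This gives $\kappa=O_k((\log q)^{r/2}/n^{r/2})$ for subsets of density $1/q$, hence $\VSTAT(n^{r}/(\log q)^{r})$, and the $\SAMPLE$ bound then follows (as you suggest) by simulating sampling oracles with $\VSTAT$. Your closing worry about the normalization of $Q_\sigma$ is a non-issue (the partition function is independent of $\sigma$), but without replacing the correlation argument by a first-moment/concentration argument of this kind, your plan proves a strictly weaker statement than the theorem.
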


It is easy to see that this lower bound is essentially tight for statistical algorithms using the $\VSTAT$ oracle (since noisy $r$-XOR-SAT can be solved using a polynomial (in $n^r$) number of queries to $\VSTAT(O(n^r))$ that can determine the probability of each clause). Surprisingly, this lower bound is quadratically larger than the upper bound of $\tilde{O}(n^{r/2})$ that can be achieved using samples themselves \cite{FeldmanPV14}. While unusual, this is consistent with a common situation where an implementation using a statistical oracle requires polynomially more samples (for example in the case of algorithms for learning halfspaces). Still this discrepancy is an interesting one to investigate in order to better understand the power of statistical algorithms and lower bounds against them. We show that there exist natural strengthenings of the $\VSTAT$ and $\SAMPLE$ oracles that bridge this gap. Specifically, we extend the oracles to functions with values in a larger discrete range $\{0,1,\ldots,L-1\}$ for $L \geq 2$: $\MSAMPLE(L)$ oracle is the oracle that given any function $h: X \rightarrow \{0,1,\ldots,L-1\}$ takes a random sample $x$ from $D$ and returns $h(x)$ and $\VSTAT$ is extended similarly to $\MVSTAT$ (we postpone the formal details and statements for this oracle to Section \ref{sec:define-stat-oracles}). This strengthening interpolates between the full access to samples which corresponds to $L = |X_k|$ and the standard statistical query oracles (corresponding to $L=2$) and hence is a natural one to investigate.

We prove nearly matching upper and lower bounds for the stronger oracle: $(a)$ there is an efficient statistical algorithm that uses $\tilde O(n^{r/2})$ calls to $\MSAMPLE(O(n^{\lceil r/2\rceil}))$ and identifies the planted assignment; $(b)$ there is no algorithm that can solve the problem described in Theorem \ref{thm:lower-bound-intro} using less than $\tilde O(n^{r/2})$ calls to $\MSAMPLE(n^{r/2})$. We state the upper bound more formally:
\begin{thm}\label{thm:algo1-intro}
Let $Q$ be a clause distribution of distribution complexity $r$. Then there exists an algorithm
that uses $O(n^{r/2} \log^2 n)$ calls to $\MSAMPLE(n^{\lceil r/2\rceil})$ and time linear in the number of oracle calls to identify the planted assignment with probability $1-o(1)$.
\end{thm}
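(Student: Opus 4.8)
The plan is to show that the subsampled power iteration of \cite{feldman2014algorithm}, which identifies $\sigma$ from $\tilde O(n^{r/2})$ clauses of $Q_\sigma$, can be executed using only the $\MSAMPLE$ oracle, with range $O(n^{\lceil r/2\rceil})$ and the claimed number of calls. Recall the structure of that algorithm. Since $Q$ has distribution complexity $r$, there is a subset $S$ of the $k$ coordinates, $|S|=r$, with $\hat Q(S)\neq 0$, while all Fourier coefficients on smaller nonempty subsets vanish. From a clause $C\sim Q_\sigma$ one reads off the $S$-subclause: its $r$ underlying variables (distinct except on an $O(1/n)$ fraction of clauses, which we discard) and the product $\chi(C)\in\{\pm1\}$ of the signs of its $S$-literals. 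Splitting these $r$ variables canonically (e.g.\ by index) into a $\lfloor r/2\rfloor$-subset and a $\lceil r/2\rceil$-subset yields, for each clause, a pair $(A(C),B(C))$ and the sign $\chi(C)$. One then works with the (implicit) matrix $M$ whose rows are indexed by $\lfloor r/2\rfloor$-subsets and columns by $\lceil r/2\rceil$-subsets of $[n]$ (a square matrix when $r$ is even), with $M\approx\gamma\,\tilde\sigma^{(\lfloor r/2\rfloor)}(\tilde\sigma^{(\lceil r/2\rceil)})^{\!\top}$ plus a sparse noise term, where $\tilde\sigma^{(s)}_A=\prod_{i\in A}\sigma_i$ and $\gamma$ is a nonzero multiple of $\hat Q(S)$. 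The algorithm runs $O(\log n)$ rounds of power iteration on $M$ (equivalently on $M M^\top$, resp.\ $M^\top M$), each round using a \emph{fresh} independent batch of $\tilde O(n^{r/2})$ clauses to form an empirical estimate $\hat M_j$ of $M$ and applying coordinate-wise rounding to $\{\pm1\}$; after $O(\log n)$ rounds the running vector is $(1-o(1))$-correlated with $\tilde\sigma^{(\lceil r/2\rceil)}$, from which $\sigma$ is extracted (up to the global sign that $Q$ may not determine, which is resolved by one extra test).

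The key observation for the reduction is that a single round amounts to the following: given a \emph{known} vector $v\in\{\pm1\}^{\binom{[n]}{s}}$ on one side ($s\in\{\lfloor r/2\rfloor,\lceil r/2\rceil\}$), compute $\hat M_j v$ (or $\hat M_j^\top v$) from $m'=\tilde O(n^{r/2})$ fresh clauses. But after discarding the degenerate clauses, each clause $C$ contributes to exactly one coordinate of $\hat M_j v$ — coordinate $A(C)$ when we apply $M$, or $B(C)$ when we apply $M^\top$ — an amount equal to $\chi(C)\,v_{B(C)}\in\{\pm1\}$ (resp.\ $\chi(C)\,v_{A(C)}$). Hence we take the query $h_v(C)$ that outputs an encoding of the pair consisting of the coordinate index and this $\pm1$ value; to keep the range at most $n^{\lceil r/2\rceil}$ exactly rather than up to a constant factor, we instead issue two queries per round, one for the $+1$ contributions and one for the $-1$ contributions, each outputting just the coordinate index (and $0$ otherwise). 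The coordinate index ranges over at most $\binom{[n]}{\lceil r/2\rceil}$ values, so $h_v$ takes values in $\{0,\dots,L-1\}$ with $L=O(n^{\lceil r/2\rceil})$. Crucially $h_v$ is a deterministic function of the clause alone, so it is a legitimate query. Each $\MSAMPLE(L)$ call consumes one fresh clause and returns its bucket; accumulating $\pm1$ according to the $m'$ returned buckets reconstructs $\hat M_j v$ \emph{exactly}, so the round is implemented with $m'=\tilde O(n^{r/2})$ calls and $O(m'\cdot k)=O(m')$ time.

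Putting this together: $O(\log n)$ rounds, each costing $\tilde O(n^{r/2})$ calls to $\MSAMPLE(O(n^{\lceil r/2\rceil}))$ — together with the $O(1)$ extra queries for final extraction and, if desired, $O(\log n)$ independent repetitions to drive the failure probability to $o(1)$ — give $O(n^{r/2}\log^2 n)$ calls in total, with running time linear in the number of calls (each call is $O(k)=O(1)$ work; reconstructing $\hat M_j v$ and rounding between rounds is $O(n^{r/2})$ time since these vectors are sparse). Correctness is inherited verbatim from \cite{feldman2014algorithm}, since we have reproduced that algorithm's output exactly.

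The main obstacle is purely the verification in the second paragraph — that the clause-to-bucket map is a valid clause-only query, that the number of buckets is genuinely at most $n^{\lceil r/2\rceil}$ (this is where the $+$/$-$ splitting of each round into two queries matters, and where one must double-check small $r$), and that summing the returned bucket values recovers $\hat M_j v$ with no loss. The heavier analytic facts — that $\tilde O(n^{r/2})$ fresh clauses per round suffice for $\hat M_j$ to be within the spectral gap of $M$ (a concentration bound for the sparse noise matrix), and that coordinate-wise rounding between rounds preserves the geometric growth of the correlation with $\tilde\sigma$ (an inductive argument) — are exactly the contents of \cite{feldman2014algorithm} and are invoked as a black box, so they are not re-proved here.
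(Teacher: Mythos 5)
Your proposal is correct and follows essentially the same route as the paper's proof: reduce to a noisy $r$-XOR distribution by restricting each clause to the coordinates in $S$, run the rounded (discretized) subsampled power iteration on the half-clause matrix with fresh clauses in every round, and implement each matrix--vector product by two multi-valued queries (one per sign of the contribution) that map a clause to the index of its half-clause, giving range $O(n^{\lceil r/2\rceil})$ and $O(n^{r/2}\log^2 n)$ calls to $\MSAMPLE$ in total. The only divergences are minor: the paper indexes the matrix by ordered literal tuples (so the sign is carried in the index rather than by a separate $\chi(C)$), it re-proves the correctness of the discretized iteration (Chernoff/Berry--Esseen analysis of the doubling of the correlation, separate even/odd $r$ cases) instead of invoking \cite{feldman2014algorithm} as a black box, and your claim that the simulation reconstructs $\hat M_j v$ ``exactly'' should be softened, since the $+$ and $-$ queries are necessarily evaluated on disjoint fresh samples --- a harmless variant with the same concentration guarantees, which is exactly how the paper's own $h^+/h^-$ implementation works.
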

We prove this bound by showing that the algorithm from \cite{FeldmanPV14} based on a subsampled power iteration can be implemented using statistical query oracles. The same upper bound holds for Goldreich's planted $k$-CSP.

 In addition to providing a matching lower bound, the algorithm gives an example of statistical query algorithm for performing power iteration to compute eigenvectors or singular vectors. Spectral algorithms are among the most commonly used for problems with planted solutions (including Flaxman's algorithm \cite{flaxman2003spectral} for planted satisfiability) and our lower bounds can be used to derive lower bounds against such algorithms. The alternative approach for solving planted constraint satisfaction problems with $O(n^{r/2})$ samples is to use an SDP solver as shown in \cite{bogdanov2009security} (with the ``birthday paradox" as shown in \cite{o2013goldreich}; see also~\cite{applebaum2016cryptographic}). This approach can also be implemented using statistical queries, although
 a direct implementation using a generic SDP solver such as the one we describe in Section \ref{sec:statalgpowersec} will require quadratically more samples and will not give a non-trivial statistical algorithm for the problem (since solving using $O(n^r)$ clauses is trivial).

We now briefly mention some of the corollaries and applications of our results.

\subsubsection{Evidence for Feige's hypothesis:}
A closely related problem is refuting the satisfiability of a random $k$-SAT formula (with no planting), a problem conjectured to be hard by Feige \cite{feige2002relations}. A refutation algorithm takes a $k$-SAT formula $\Phi$ as an input and returns either SAT or UNSAT. If $\Phi$ is satisfiable, the algorithm always returns SAT and for $\Phi$ drawn uniformly at random from all $k$-SAT formulas of $n$ variables and $m$ clauses the algorithm must return UNSAT with probability at least $2/3$. For this refutation problem, an instance becomes unsatisfiable w.h.p.~after $O(n)$ clauses, but algorithmic bounds are as high as those for finding a planted assignment under the noisy XOR distribution: $O(n^{k/2})$ clauses suffice\cite{friedman2005recognizing, coja2004techniques, han2009note, goerdt2003recognizing, feige2004easily, AllenOW15}.

To relate this problem to our lower bounds we define an equivalent distributional version of the problem. In this version the input formula is obtained by sampling $m$ i.i.d. clauses from some unknown distribution $D$ over clauses. The goal is to say UNSAT (with probability at least $2/3$) when clauses are sampled from the uniform distribution and to say SAT for every distribution supported on simultaneously satisfiable clauses.

In the distributional setting, an immediate consequence of Theorem \ref{thm:lower-bound-intro} is that Feige's hypothesis holds for the class of statistical query algorithms. The proof (see Theorem~\ref{cor:feige}) follows from the fact that our decision problem (distinguishing between a planted $k$-SAT instance and the uniform $k$-SAT instance) is a special case of the distributional refutation problem.

\subsubsection{Hard instances of $k$-SAT:}
Finding distributions of planted $k$-SAT instances that are algorithmically intractable has been a pursuit of researchers in both computer science and physics. The distribution complexity parameter defined here generalizes the notion of ``quiet plantings" studied in physics \cite{barthel2002hiding,jia2005generating,krzakala2009hiding,krzakala2012reweighted}
to an entire hierarchy of ``quietness".  In particular, there are easy to generate distributions of satisfiable $k$-SAT instances with distribution complexity as high as $k-1$ ($r=k$ can be achieved using XOR constraints but these instances are solvable by Gaussian elimination).  These instances can also serve as strong tests of industrial SAT solvers as well as the underlying hard instances in cryptographic applications. In recent work, Blocki \etal extended our lower bounds from the Boolean setting to $\Z_d$ and applied them to show the security of a class of humanly computable password protocols \cite{BBDV14}.

\subsubsection{Lower bounds for convex programs:} Our lower bounds imply limitations of using convex programs to recover
planted solutions. For example, any convex program whose objective is the sum of
objectives for individual constraints (as is the case for canonical
LPs/SDPs for CSPs) and distinguishes between a planted CSP instance and a uniformly generated one must have dimension at least $\tilde{\Omega}(n^{r/2})$. In
particular, this lower bound applies to lift-and-project hierarchies
where the number of solution space constraints increases (and so does the cost of
finding a violated constraint), but the dimension remains the same.
Moreover, since our bounds are for detecting planted solutions, they
imply large integrality gaps for convex relaxations of this dimension.
These bounds follow from statistical implementations of
algorithms for convex optimization given in \cite{FeldmanGV:15}. We emphasize that the lower bounds apply to convex relaxations themselves and make no assumptions on how the convex relaxations are solved (in particular, the solver does not need to be a statistical algorithm). An example of such lower bound is given below. Roughly speaking, the corollary says that any convex program whose objective value
is significantly higher for the uniform distribution over clauses, $U_k$, compared to a planted distribution $Q_\sigma$ must have a large dimension, independent of the number of constraints.
\begin{cor}
\label{cor:lower-convex-program-intro}
Let $Q$ be a distribution over $k$-clauses of complexity $r$. Assume that there exists a mapping that maps each $k$-clause $C\in X_k$ to a convex function $f_C:K \rightarrow[-1,1]$ over some bounded, convex and compact $N$-dimensional set $K$. Further assume that for some $\eps > 0$ and $\alpha \in \R$:
$$\Pr_{\sigma \in \on^n}\left[ \min_{x\in K}\left\{ \E_{C\sim Q_\sigma}[ f_{C}(x)]  \right\} \leq \alpha\right] \geq 1/2.$$
and
$$\min_{x\in K} \left\{\E_{C\sim U_k}[f_{C}(x)] \right\} > \alpha + \eps .$$
Then $N = \tilde{\Omega}\left(n^{r/2}\cdot \eps \right)$.
\end{cor}
We note that conditions on the value of the convex program that we make are weaker than the standard conditions that a convex relaxation must satisfy. Specifically, it is usually assumed that a convex relaxation does not increase the value of the objective (for example, the value for a satisfiable instance must be 0) and also that the minimum of the objective function for all ``bad" instances will be noticeably larger than that of the ``good" instances. In Section \ref{sec:statalgpowersec} we also prove lower bounds against convex programs in exponentially high dimension as long as the appropriate norms of points in the domain and gradients are not too large. We are not aware of this form of lower bounds against convex programs for planted satisfiability stated before. We also remark that our lower bounds are incomparable to lower bounds for programs given in \cite{o2013goldreich} since they analyze a specific SDP for which the mapping $\cal M$ maps to functions over an $O(n^k)$-dimensional set $K$ is defined using a high level of the Sherali-Adams or Lov\'asz-Schrijver hierarchies.
Further details are given in Section \ref{sec:statalgpowersec}.

\subsection{Overview of the technique}
Our proof of the lower bound builds on the notion of \textit{statistical dimension} given in \cite{FeldmanGRVX:12} which itself is based on ideas developed in a line of work on statistical query learning \cite{kearns1998efficient,BlumFJ+:94,Feldman:12jcss}.

Our primary technical contribution is a new, stronger notion of statistical dimension and its analysis for planted $k$-CSP problems. The statistical dimension in \cite{FeldmanGRVX:12} is based on upper-bounding average or maximum pairwise correlations between appropriately defined density functions. While these dimensions can be used for our problem (and, indeed, were a starting point for this work) they do not lead to the tight bounds we seek. Specifically, at best they give lower bounds for $\VSTAT(n^{r/2})$, whereas we will prove lower bounds for $\VSTAT(n^r)$ to match the current best upper bounds.

Our stronger notion directly examines a natural operator,  which, for a given function, evaluates how well the expectation of the function discriminates between different distributions. We show that a norm of this operator for large sets of input distributions gives a lower bound on the complexity of any statistical query algorithm for the problem.
Its analysis for our problem is fairly involved and a key element of the proof is the use of concentration of polynomials on $\on^n$ (derived from the hypercontractivity results of Bonami and Beckner \cite{bonami1970etude,beckner1975inequalities}).

We remark that the $k$-XOR-SAT problem is equivalent to PAC learning of (general) parity functions from random $k$-sparse examples. The latter is the classic problem addressed by Kearns' original lower bound \cite{kearns1998efficient}. While
superficially the planted setting is similar to learning of $k$-sparse parities from random uniform examples for which optimal statistical query lower bounds are well-known and easy to derive, the problems, techniques and the resulting bounds are qualitatively different. One significant difference is that the correlation between parity functions on the uniform distribution is 0, whereas in our setting the distributions are not uniform and pairwise correlations between them can be relatively large. Moreover, as mentioned earlier, the techniques based on pairwise correlations do not suffice for the strong lower bounds we give.

Our stronger technique gives further insight into the complexity of statistical algorithms\hide{. In fact, a norm of the operator we introduce here gives a characterization of the complexity of statistical algorithms \cite{Feldman-characterization}. The technique also} and has a natural interpretation in terms of the geometry of the space of all planted assignments with a metric defined (between pairs of assignments) to capture properties of statistical algorithms.  The fraction of solutions that are at distance greater than some threshold from a fixed assignment goes up sharply from  exponentially small to a polynomial fraction as the distance threshold increases. We call this a `paring' transition as a large number of distributions become amenable to being separated from the planted solution and discarded.

We conjecture that our lower bounds hold for {\em all} algorithms with the exception of those based on Gaussian elimination. Formalizing ``based on Gaussian elimination" requires substantial care. Indeed, in an earlier version of this work we excluded Gaussian elimination by only excluding density functions of low algebraic degree. (Here algebraic degree refers to the degree of the polynomial over $\Z_2^k$ required to represent the function. For example, the parity function equals to $x_1 + x_2 + \cdots + x_k$ and therefore has algebraic degree $1$). This resulted in a conjecture that was subsequently disproved by Applebaum and Lovett \cite{ApplebaumLovett15} using an algorithm that combines Gaussian elimination with fixing some of the variables. An alternative approach to excluding Gaussian elimination-based methods is to exploit their fragility to even low rates of random noise. Here random noise would correspond to mixing in of random and uniform constraints to the distribution. In other words for  $\alpha\in [0,1]$, $Q$ becomes $Q^\alpha = (1-\alpha)Q + \alpha U_k$. Observe that for all constant $\alpha<1$, the complexity of $Q^\alpha$ is the same as the complexity of $Q$.
\begin{conj}\label{conj:mainconj}
Let $Q$ be any distribution over $k$-clauses of complexity $r$ and $\alpha \in (0,1)$. Then any polynomial-time (randomized) algorithm that, given access to a distribution $D$ that equals either $U_k$ or $Q_\sigma^\alpha$ for some $\sigma \in \on^n$, decides correctly whether $D = Q_\sigma^\alpha$ or $D=U_k$ with probability at least $2/3$ needs at least $\tilde \Omega(n^{r/2})$ clauses.
\end{conj}
We conjecture that an analogous statement also holds for Goldreich's $k$-CSP. Note that in this case mixing in an $\alpha$-fraction of random and uniform constraints can be equivalently seen as flipping the given value of the predicate with probability $\alpha/2$ randomly and independently for each constraint.

\subsection{Other related work}
\label{sec:prior}
\paramini{Hypergraph Partitioning.} Another closely related model to planted satisfiability is random hypergraph partitioning, in which a partition of the vertex set is fixed, then $k$-uniform hyperedges added with probabilities that depend on their overlap with the partition.  To obtain a planted satisfiability model from a planted hypergraph, let the vertex set be the set of $2n$ literals, with the partition given by the planted assignment $\sigma$.  A $k$-clause is then a $k$-uniform hyperedge.  The two models are not exactly equivalent, as in planted satisfiability we have the extra information that pairs of literals corresponding to the same variable must receive different assignments; however, to the best of our knowledge  all the known algorithmic approaches to planted satisfiability work for planted hypergraph partitioning as well.   The Goldreich CSP model is closely related to a hypergraph version of the censored block model~\cite{AbbeM15,abbe2014decoding} in which random hyperedges are labeled with values that depend on how the edges overlap with a planted partition.

  The case $k=2$ of $k$-uniform hypergraph partitioning is called the stochastic block model.  The input is a random graph with different edge probabilities within and across an unknown partition of the vertices, and the algorithmic task is to recover partial or complete information about the partition given the resulting graph.
Work on this model includes Bopanna \cite{boppana1987eigenvalues}, McSherry's general-purpose spectral algorithm \cite{mcsherry2001spectral}, and Coja-Oghlan's algorithm that works for graphs of constant average degree \cite{coja2006spectral}.

Recently an intriguing threshold phenomenon was conjectured by Decelle, Krzakala, Moore, and Zdeborov{\'a} \cite{decelle2011asymptotic}: there is a sharp threshold separating efficient partial recovery of the partition from information-theoretic impossibility of recovery. This conjecture was proved in a series of works \cite{mossel2015reconstruction,massoulie2014community,mossel2013proof}.  In the same work Decelle et al.~conjecture that for a planted $q$-coloring, there is a gap of algorithmic intractability between the impossibility threshold and the efficient recovery threshold. Neither lower bounds nor an efficient algorithm at their conjectured threshold are currently known.
    In our work we consider planted bipartitions of $k$-uniform hypergraphs, and show that the behavior is dramatically different for $k \ge 3$.  Here, while the information theoretic threshold is still at a linear number of hyperedges, we give evidence that the efficient recovery threshold can be much larger, as high as  $\tilde \Theta (n^{k/2})$.  In fact, our lower bounds hold for the problem of distinguishing a random hypergraph with a planted partition from a uniformly random one and thus give computational lower bounds for checking hypergraph quasirandomness (see \cite{Trevisan:2008:Online} for more on this problem).  Throughout the paper we will use the terminology of planted satisfiability (assignments, constraints, clauses) but all results apply also to random hypergraph partitioning.

\paramini{Shattering and paring.}
Random satisfiability problems (without a planted solution) such as $k$-SAT and $k$-coloring random graphs exhibit a shattering phenomenon in the solution space for large enough $k$ \cite{krzakala2007gibbs, achlioptas2008algorithmic}: as the density of constraints increases, the set of all solutions evolves from a large connected cluster to a exponentially large set of well-separated clusters.  The shattering threshold empirically coincides with the threshold for algorithmic tractability (while this is the case for large $k$, for $k=3,4$ there is some evidence that the survey propagation algorithm may succeed beyond the shattering threshold~\cite{marino2016backtracking}).  Shattering has also been used to prove that certain algorithms fail at high enough densities \cite{gamarnik2014performance}.

Both the shattering and paring phenomena give an explanation for the failure of known algorithms on random instances.  Both capture properties of local algorithms, in the sense that in both cases, the performance of Gaussian elimination, an inherently global algorithm, is unaffected by the geometry of the solution space: both random $k$-XOR-SAT and random planted $k$-XOR-SAT are solvable at all densities despite exhibiting shattering and paring respectively.

The paring phenomenon differs from shattering in several significant ways.  As the paring transition is a geometric property of a carefully chosen metric, there is a direct and provable link between paring and algorithmic tractability, as opposed to the empirical coincidence of shattering and algorithmic failure.  In addition, while shattering is known to hold only for large enough $k$, the paring phenomenon holds for all $k$, and already gives strong lower bounds for $3$-uniform constraints.

One direction for future work would be to show that the paring phenomenon exhibits a sharp threshold; in other words, improve the analysis of the statistical dimension of planted satisfiability in Section \ref{sec:plantstat-sdn} to remove the logarithmic gap between the upper and lower bounds.  An application of such an improvement would be to apply the lower bound framework to the planted coloring conjecture from \cite{decelle2011asymptotic}; as the gap between impossibility and efficient recovery is only a constant factor there, the paring transition would need to be located more precisely.

\section{Definitions}
\label{sec:prelims}

\subsection{Planted satisfiability}
\label{sec:models}
We now define a general model for  planted satisfiability problems that unifies various previous ways to produce a random $k$-SAT formula where the relative probability that a clause is included in the formula depends on the number of satisfied literals in the clause \cite{flaxman2003spectral,jia2005generating,achlioptas2005hiding,krivelevich2006solving,krzakala2009hiding,coja2010efficient,krzakala2012reweighted}.

Fix an assignment $\sigma \in \{ \pm 1 \}^n$. We represent a $k$-clause  by an ordered $k$-tuple of literals from $x_1, \dots x_n, \overline x_1, \dots \overline x_n$ with no repetition of variables and let $X_k$ be the set of all such $k$-clauses.
For a $k$-clause $C=(l_1,\ldots,l_k)$ let  $\sigma(C) \in \on^k$ be the $k$-bit string of values assigned by $\sigma$ to literals in $C$, that is $\sigma(l_1),\dots,\sigma(l_k)$, where
$\sigma(l_i)$ is the value of literal $l_i$ in assignment $\sigma$ with $-1$ corresponding to TRUE and 1 to FALSE. In a planted model, we draw clauses with probabilities that depend on the value of $\sigma(C)$.

A planted distribution $Q_\sigma$ is defined by a distribution
$Q$ over $\on^k$, that is a function $Q:\on^k \to \R^+$  such that
\[ \sum_{y\in\on^k} Q(y) = 1. \]
To generate a random formula, $F(Q,\sigma,m)$ we draw $m$ i.i.d. $k$-clauses according to the probability distribution $Q_\sigma$, where
\[
Q_\sigma(C)= \frac{ Q(\sigma(C))}{\sum_{C^\prime \in X_k} Q(\sigma(C^\prime))}
.\]
By concentrating the support of Q only on satisfying assignments of an appropriate predicate we can generate satisfiable distributions for any predicate, including $k$-SAT, $k$-XOR-SAT, and $k$-NAE-SAT.  In most previously considered distributions $Q$ is a symmetric function, that is $Q_\sigma$ depends only on the number of satisfied literals in $C$.
For brevity in such cases we define $Q$ as a function from $\{0, \dots k \}$ to $\R^+$.
For example, the planted uniform $k$-SAT distribution fixes one assignment $\sigma \in \{ \pm 1 \}^n$ and draws $m$ clauses uniformly at random conditioned on the clauses being satisfied by $\sigma$.  In our model, this corresponds to setting $Q(0) =0$ and $Q(i) = 1 / (2^k -1)$ for $i \ge 1$.  Planted $k$-XOR-SAT, on the other hand, corresponds to setting $Q(i) =0$ for $i$ even, and $Q(i) =1/2^{k-1}$ for $i$ odd.

\paramini{Problems.}
The algorithmic problems studied in this paper can be stated as follows:
Given the function $Q$ and a sample of $m$ independent clauses drawn according to $Q_\sigma$, recover $\sigma$, or some $\tau$ correlated with $\sigma$. Note that since unsatisfiable clauses are allowed to have non-zero weight, for some distributions the problem is effectively satisfiability with random noise. Our lower bounds are for the potentially easier problem of distinguishing a randomly and uniformly chosen planted distribution from the uniform distribution over $k$-clauses. Namely, let $\D_Q$ denote the set of all distributions $Q_\sigma$, where $\sigma \in \on^k$ and $U_k$ be the uniform distribution over $k$-clauses. Let $\B(\D_Q,U_k)$ denote the decision problem in which given samples from an unknown input distribution $D \in \D_Q \cup \{U_k\}$ the goal is to output $1$ if $D\in \D_Q$ and 0 if $D=U_k$.

In Goldreich's planted $k$-CSP problem for a predicate $P:\on^k \rightarrow \{-1,1\}$, we are given access to samples from a distribution $P_\sigma$, where $\sigma$ is a planted assignment in $\on^n$. A random sample from this distribution is a randomly and uniformly chosen ordered $k$-tuple of variables (without repetition) $x_{i_1},\ldots,x_{i_k}$ together with the value $P(\sigma_{i_1},\ldots,\sigma_{i_k})$. To allow for predicates with random noise and further generalize the model we also allow any real-valued $P:\on^k \rightarrow [-1,1]$. For such $P$, instead of the value $P(\sigma_{i_1},\ldots,\sigma_{i_k})$, a randomly and independently chosen value $b\in \{-1,1\}$ such that $\E[b] = P(\sigma_{i_1},\ldots,\sigma_{i_k})$ is output.

As in the problem above, the goal is to recover $\sigma$ given $m$ random and independent samples from $P_\sigma$ or at least to be able to distinguish any planted distribution from one in which the value is a uniform random coin flip (or, equivalently, the distribution obtained when the function $P \equiv 0$). Our goal is to understand the smallest number $m$ of $k$-clauses
 that suffice to find the planted assignment or at least to distinguish a planted distribution from a uniform one.

For a clause distribution $Q$, we define its {\em distribution complexity} $r(Q)$ as the smallest integer $r\ge 1$ for which there exists a set $S \subseteq [k]$ of size $r$ and
\begin{equation}
\label{eq:distcomplexity}
\hat{Q}(S) \doteq \frac{1}{2^k} \cdot \sum_{y \in \on^k}  \left[Q(y) \prod_{i \in S} y_i \right]  \neq 0.
\end{equation}
$\hat{Q}(S)$ is the Fourier coefficient of the function $Q$ on the set $S$ (see Section~\ref{sec:plantstat-sdn} for a formal definition).
For a symmetric function the value of $\hat{Q}(S)$ depends only on $|S|$ and therefore we refer to the value of the coefficient for sets of size $\ell$ by $\hat{Q}(\ell)$.

To see the difference between a hard and easy distribution $Q$, first consider  planted uniform $k$-SAT:  $Q(0)=0$, $Q(i) = 1/(2^k-1)$ for $i\ge 1$.  The distribution complexity of $Q$ is $r=1$.

Next, consider the noisy parity distribution (or noisy planted $k$-XOR-SAT) with $Q(l) = \del/2^{k-1}$ for $l$ even, and $Q(l) = (2- \del)/2^{k-1}$ for $l$ odd, for  some $\del \ne 1$.    In this case, we have
$\hat{Q}(l) =0$ for $1 \le l \le k-1$, and so the distribution complexity of $Q$ is $r=k$.
We will see that such parity-type distributions are in
fact the hardest for statistical algorithms to detect.
\subsection{Statistical algorithms}
\label{sec:define-stat-oracles}
We can define planted satisfiability as the problem of identifying an unknown distribution $D$ on a domain $X$ given $m$ independent samples from $D$.  For us, $X$ is the set of all possible $k$-clauses or $k$-hyperedges, and each partition or assignment $\sigma$ defines a unique distribution $D_\sigma$ over $X$.

Extending the work of Kearns \cite{kearns1998efficient} in learning theory, Feldman \etal \cite{FeldmanGRVX:12} defined statistical query algorithms for problems over distributions.
Roughly speaking, these are algorithms that do
not see samples from the distribution but instead have access to
estimates of the expectation of any bounded function of a sample from
the distribution. More formally, a statistical algorithm can access the input distribution via one of the following oracles.

\begin{definition}[$\MSAMPLE(L)$ oracle]
  Let $D$ be the input distribution over the domain $X$.  Given any function $h: X \rightarrow \{0,1,\ldots,L-1\}$,
  $\MSAMPLE(L)$ takes a random sample $x$ from $D$ and returns $h(x)$.
\end{definition}
This oracle is a generalization of the $\SAMPLE$ oracle from \cite{FeldmanGRVX:12} and was first defined by Ben-David and Dichterman in the context of PAC learning \cite{Ben-DavidD98}. It was also more recently studied in \cite{SteinhardtD15,Steinhardt:2015}. For the planted SAT problem this oracle allows an algorithm to evaluate a multi-valued function on a random clause. By repeating the query, the algorithm can estimate the expectation of the function as its average on independent samples. Being able to output one of multiple possible values gives the algorithm considerable flexibility, e.g., each value could correspond to whether a clause has a certain pattern on a subset of literals. With $L=n^{k}$, the algorithm can identify the random clause. We will therefore be interested in the trade-off between $L$ and the number of queries needed to solve the problem.

The next oracle is from \cite{FeldmanGRVX:12}.
\begin{definition}[$\VSTAT$ oracle]
  Let $D$ be the input distribution over the domain $X$. For an integer parameter $t > 0$,  for any query  function $h: X \rightarrow [0,1]$, $\VSTAT(t)$ returns a value $v \in \left[p- \tau, p + \tau\right]$ where
$p = \E_D[h(x)]$ and $\tau = \max \left\{\frac{1}{t}, \sqrt{\frac{p(1-p)}{t}}\right\}$.
\end{definition}
The definition of $\tau$ means that $\VSTAT(t)$ can return any value $v$ for which the distribution $B(t,v)$ (outcomes of $t$ independent Bernoulli variables with bias $v$) is close to $B(t,E[h])$ in total variation distance \cite{FeldmanGRVX:12}. In most cases $p>1/t$ and then $\tau$ also corresponds to returning the expectation of a function to within the standard deviation error of averaging the function over $t$ samples. However, it is important to note that within this constraint on the error, the oracle can return any value, possibly in an adversarial way.

In this paper, we also define the following generalization\footnote{For simplicity, this definition generalizes $\VSTAT$ only for Boolean query functions.} of the $\VSTAT$ oracle to multi-valued functions.
\begin{definition}[$\MVSTAT$ oracle]
 Let $D$ be the input distribution over the domain $X$, $t , L > 0$ be integers.
For any multi-valued function $h:X \rightarrow \{0,1,\ldots, L-1\}$ and any set $\cS$ of subsets of $ \{0,\ldots, L-1\}$, $\MVSTAT(L,t)$ returns a vector $v \in \R^L$ satisfying for every $Z \in \cS$
$$\left|\sum_{\ell \in Z} v_l - p_Z \right| \le \max \left\{\frac{1}{t}, \sqrt{\frac{p_Z(1-p_Z)}{t}}\right\} ,$$
where $p_Z = \Pr_D[h(x) \in Z]$.
The query cost of such a query is $|\cS|$.
\end{definition}
We note that $\VSTAT(t)$ is equivalent to $\MVSTAT(2,t)$ (the latter only allows Boolean queries but that is not an essential difference) and any query to $\MVSTAT(L,t)$ can be easily answered using $L$ queries to $\VSTAT(4\cdot Lt)$ (see Theorem~\ref{thm:simulate-mvstat} for a proof). The additional strength of this oracle comes from allowing the sets in $\cS$ to depend on the unknown distribution $D$ and, in particular, be fixed but unknown to the algorithm. This is useful for ensuring that potential functions of our discrete power iteration algorithm for planted SAT behave in the same way as if the algorithm were executed on true samples (see Section \ref{sec:statImplement}).
Another useful way to think of $L$-valued oracles in the context of vector-based algorithms is as a vector of $L$ Boolean functions which are non-zero on disjoint parts of the domain. This view also allows to extend $\MVSTAT$ to bounded-range (non-Boolean) functions.

An important property of every one of these oracles is that it can be easily simulated using $t$ samples (in the case of $\VSTAT$/$\MVSTAT$ the success probability is a positive constant but it can be amplified to $1-\delta$ using $O(t \log(1/\delta))$ samples).
\hide{

As we have mentioned above, these oracles (even for $L=2$) allow one to implement most known algorithmic techniques to problems over distributions (which include various planted solution problems), including gradient-based methods, convex optimization, MCMC methods, spectral methods etc.  The algorithm we describe below, for example, is a statistical algorithm for performing power iteration to compute eigenvectors or singular vectors.  Adapting it leads to statistical analogues of spectral-based algorithms for planted satisfiability, like Flaxman's algorithm \cite{flaxman2003spectral}.  One important issue is that such an implementation might require polynomially more samples than using samples themselves. This is, for example, the case for LP/SDP solving. For our problem this implies that implementing the existing algorithm using $O(n^{r/2})$ samples (based on the birthday ``paradox" and SDP \cite{bogdanov2009security}) will not give a non-trivial statistical algorithm for the problem since solving using $O(n^r)$ clauses is trivial. However, as we show here, there exists a statistical algorithm for the planted satisfiability problems using $t=\tilde{O}(n^{r/2})$ (the equivalent of the number of random clauses). Our algorithm requires access to $\MVSTAT(L,t)$ oracle or $t$ samples from $\MSAMPLE(L)$ oracle for $L = O(n^{\lceil r/2 \rceil})$.} The goal of our generalization of oracles to $L>2$ was to show that even nearly optimal sample complexity can be achieved by a statistical algorithm using an oracle for which a nearly matching lower bound applies.

\section{Results}
\label{sec:mainresults}

We state our upper and lower bounds for the planted satisfiability problem. Identical upper and lower bounds apply to Goldreich's planted $k$-CSPs with $r$ being the degree of lowest-degree non-zero Fourier coefficient of $P$. For brevity, we omit the repetitive definitions and statements in this section. In Section \ref{sec:gen-planted} we give the extension of our lower bounds to this problem and also make the connections between the two problems explicit.

\subsection{Lower bounds}
We begin with lower bounds for {\em any} statistical algorithm.
For a clause distribution $Q$ let $\B(\D_Q,U_k)$ denote the decision problem of distinguishing whether the input distribution is one of the planted distributions or is uniform.
\begin{thm}\label{thm:lower-bound}
For an assignment $\sigma \in \on^n$, let $Q_\sigma$ be a distribution over $k$-clauses of complexity $r$ and $\D_Q$ be this family of distributions. Assume that the input distribution $D$ is $U_k$ with probability $1/2$ and with the remaining probability $1/2$ it is $Q_\sigma$ for a uniform random $\sigma\in \on^n$.
Then any (randomized) statistical algorithm that decides correctly whether $D \in \D_Q$ or $D=U_k$ with probability at least $2/3$ (over the choice of $D$ and randomness of the algorithm) needs either
\begin{enumerate}
\item $m$ calls to the $\MSAMPLE(L)$ oracle with
$
m\cdot L \ge c_1 \left(\frac{n}{\log n}\right)^{r}
$
for a constant $c_1 = \Omega_k(1)$, OR
\item $q$ queries to $\MVSTAT\left(L,\frac{c_2}{L}\cdot \frac{n^r}{(\log q)^{r}}\right)$ for a constant $c_2 = \Omega_k(1)$ and any $q \geq L$.
\end{enumerate}
\end{thm}
The first part of the theorem exhibits the trade-off between the number of queries $m$ and the number of values the query can take $L$. It might be helpful to think of the latter as evaluating $L$ disjoint functions on a random sample, a task that would have complexity growing with $L$. The second part of the theorem is a superpolynomial lower bound (in $n$, for any fixed $r$) if the parameter $t$ (recall the oracle is allowed only an error equal to the standard deviation of averaging over $t$ random samples) is less than $n^{r}/(\log n)^{2r}$.

\subsection{Algorithms}
We next turn to our algorithmic results, motivated by two considerations. First, the $O(n^{r/2})$-clause algorithm implicit in \cite{bogdanov2009security} does not appear to lead to a non-trivial statistical algorithm. Second, much of the literature on upper bounds for planted problems uses spectral methods, and so we aim to implement such spectral algorithms statistically.

The algorithm we present is statistical and nearly matches the lower bound. It can be viewed as a discrete rounding of the power iteration algorithm for a suitable matrix constructed from the clauses of the input.
\begin{thm}\label{thm:algo1}
Let $\Z_Q$ be a planted satisfiability problem with clause distribution $Q$ having distribution complexity $r$. Then there exists an algorithm to solve  $\Z_Q$  using $O(n^{r/2}\log n) $ random clauses and time linear in this number.  This algorithm can be implemented statistically in any of the following ways.
\begin{enumerate}
\item Using $O(n^{r/2} \log^2 n)$ calls to $\MSAMPLE(n^{\lceil r/2\rceil})$;
\item For even $r$: using $O(\log n)$ calls to $\MVSTAT(n^{r/2},n^{r/2} \log \log n)$;
\item For odd $r$: using $O(\log n)$ calls to $\MVSTAT(O(n^{\lceil r/2\rceil}), O(n^{r/2}\log n))$;
\end{enumerate}
\end{thm}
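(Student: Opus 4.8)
The plan is to recall the subsampled power-iteration algorithm of \cite{feldman2014algorithm} and to verify that each of its steps is a single multi-valued statistical query of the stated width. By the definition of distribution complexity, fix a set $S_0\subseteq[k]$ with $|S_0|=r$ and $\hat Q(S_0)\neq 0$, and split it as $S_0=A\sqcup B$ with $|A|=\lceil r/2\rceil$, $|B|=\lfloor r/2\rfloor$. For an assignment $\sigma$ and i.i.d.\ clauses drawn from $Q_\sigma$, form the (random, data-dependent) matrix $M$ whose rows are indexed by the literal-tuples that can occur in the $A$-positions of a clause, whose columns are indexed by the literal-tuples in the $B$-positions, and whose $(a,b)$ entry is the (normalized) number of sampled clauses $C$ with $(l_i)_{i\in A}=a$ and $(l_i)_{i\in B}=b$. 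Expanding $Q_\sigma$ in the Fourier basis and using that $\hat Q(S)=0$ for $0<|S|<r$, the only surviving low-order contribution is from $S=\emptyset$ and $S=S_0$, so $\E[M]$ equals a multiple of the all-ones matrix $J$ plus a rank-one term proportional to $\hat Q(S_0)\,u_A u_B^{\top}$ (with $(u_A)_a=\prod_{i\in A}\sigma(l_i)\in\on$, similarly for $u_B$), plus lower-order corrections; after projecting out the $J$ direction the leading part is rank one and its singular vectors decode $\sigma$. We then invoke \cite{feldman2014algorithm}: with $m=O(n^{r/2}\log n)$ clauses split into $O(\log n)$ fresh batches, the iteration $x\mapsto Mx$ (even $r$) or the alternation $y=Mx,\ x\mapsto M^{\top}y$ (odd $r$), each step using a fresh batch followed by a coarse grid-rounding of the iterate, converges in $O(\log n)$ steps to a vector correlated with $u_A$, after which each bit $\sigma_j$ is recovered by a majority vote over the sampled clauses touching variable $j$. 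This gives the $O(n^{r/2}\log n)$-clause claim.

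For the implementation with $\MSAMPLE(n^{\lceil r/2\rceil})$, note that each call returns the value of a chosen function on one fresh clause $C\sim Q_\sigma$, so $O(n^{r/2}\log^2 n)$ calls provide the $O(\log n)$ fresh batches of $O(n^{r/2}\log n)$ clauses the algorithm consumes --- provided the information it extracts per clause fits in a function of range $n^{\lceil r/2\rceil}$. At a matrix--vector step the current rounded iterate $x$ is a fixed vector (known to the algorithm) whose entries lie on a grid of $O(1)$ levels, and the only per-clause datum needed is the pair $\bigl((l_i)_{i\in A},\ \text{level of } x_{(l_i)_{i\in B}}\bigr)$, which takes $O(n^{\lceil r/2\rceil})$ values; setting the query function $h(C)$ to output this pair realizes one step, and for odd $r$ the two half-steps are symmetric with the larger index set of size $O(n^{\lceil r/2\rceil})$. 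Since each batch is drawn fresh and is independent of the iterate it multiplies, Bernstein-type concentration makes the empirical coordinate sums approximate the entries of $Mx$ to the accuracy of $O(n^{r/2}\log n)$ samples that the convergence analysis of \cite{feldman2014algorithm} requires; the final decoding is likewise a single $\MSAMPLE$ computation over $O(n^{r/2})$ clauses.

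For the $\MVSTAT$ implementations the key point is that one query to $\MVSTAT(L,t)$ returns a whole vector $v\in\R^L$ with each prescribed partial sum $\sum_{\ell\in Z}v_\ell$ within $\max\{1/t,\sqrt{p_Z(1-p_Z)/t}\}$ of $p_Z=\Pr_D[h(C)\in Z]$, so a single query can realize an entire matrix--vector step. Taking $h(C)$ to be the quantized pair above --- so $L=O(n^{\lceil r/2\rceil})$, which for even $r$ is $O(n^{r/2})$ --- and choosing the family $\cS$ so that for each row index $a$ a family of sums over the relevant levels reconstructs the $a$-th coordinate of the normalized $Mx$, we get each power-iteration step at the cost of $O(1)$ queries; since $\cS$ depends on the unknown $D$ only through the iterate $x$, which the algorithm knows, this is a legal query. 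Thus the whole algorithm uses $O(\log n)$ queries. The parameter $t$ must exceed $n^{r/2}$ by a polylogarithmic factor --- $n^{r/2}\log\log n$ for even $r$, $O(n^{r/2}\log n)$ for odd $r$ --- so that the (possibly adversarial) per-coordinate oracle error stays below the threshold at which the iteration is shown to converge once all $O(\log n)$ iterations are accounted for; for odd $r$ one alternates two rectangular maps with index sets of sizes $n^{\lceil r/2\rceil}$ and $n^{\lfloor r/2\rfloor}$, which is lossier and accounts for the larger slack.

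The main obstacle is the accuracy bookkeeping rather than any new spectral estimate. In the sample model, the analysis of \cite{feldman2014algorithm} relies on each batch being independent of the iterate it multiplies; carrying this to $\MVSTAT$ requires showing that the oracle's error --- which may be chosen adversarially within its tolerance and need not resemble sampling noise --- does not accumulate destructively over the $O(\log n)$ steps, and this is exactly why $t$ must carry the extra polylog factor and why the even and odd parameters differ. The second delicate point is the rounding: the iterate must be quantized coarsely enough that the per-clause datum fits into $n^{\lceil r/2\rceil}$ values (for even $r$, an $O(1)$-level grid), yet finely enough to preserve the rank-one signal through all $O(\log n)$ steps, and one must check the contraction survives the projection away from $J$; making the convergence argument robust to this explicit quantization is the crux. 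The remaining ingredients --- the Fourier expansion of $Q_\sigma$, the rank-one structure of $\E[M]$, concentration for a fresh batch, and majority decoding --- are routine, and the same argument applies to Goldreich's $k$-CSP with $r$ the lowest-degree nonzero Fourier level of the predicate.
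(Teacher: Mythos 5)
Your overall route is the same as the paper's (split the heavy Fourier set $S_0$ into halves, form the bipartite tuple--tuple matrix, run a rounded/subsampled power iteration, and realize each step by a multi-valued per-clause query of range $O(n^{\lceil r/2\rceil})$), and the $\MSAMPLE$ part is essentially the paper's implementation. The genuine gap is in the $\MVSTAT$ items (2) and (3). You insist that $\cS$ ``depends on the unknown $D$ only through the iterate $x$,'' i.e.\ you only request per-coordinate (singleton-type) accuracy, and you hope the extra polylog in $t$ absorbs adversarial oracle error. It cannot: a coordinate value $p_i$ is of order $1/N$ (with $N\approx n^{r/2}$), so the oracle may perturb each coordinate by roughly $\sqrt{1/(Nt)}\approx 1/(N\sqrt{\log\log N})$ at $t=N\log\log N$, whereas the per-coordinate signal is of order $(\delta-1)\beta/N$ with $\beta=|u\cdot x|/N\approx N^{-1/2}$ at the start, i.e.\ of order $N^{-3/2}$. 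An adversarial-but-valid oracle can therefore cancel the signal in every coordinate, and with singleton guarantees alone you would need $t=\Omega(n^{r})$, not $\tilde O(n^{r/2})$, to get the iteration started. The paper's fix is precisely the feature you discard: it puts into $\cS$ the two subsets of values with positive and negative parity under the \emph{unknown} planted assignment $\sigma$ (legal because $\cS$ may depend on the input distribution), which forces the aggregate $u\cdot v$ to be accurate to $1/\sqrt{t}$, and this single distribution-dependent constraint is what makes the doubling argument work at $t=n^{r/2}\log\log n$ (even $r$) and $O(n^{r/2}\log n)$ (odd $r$). Flagging this as ``accuracy bookkeeping'' and as the crux without supplying the mechanism leaves the stated $\MVSTAT$ parameters unproven.

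A secondary error: your final decoding, ``each bit $\sigma_j$ is recovered by a majority vote over the sampled clauses touching variable $j$,'' fails for $r\ge 2$, since by definition of distribution complexity the clause distribution is $(r-1)$-wise independent, so single-literal marginals are uniform and clauses touching $j$ carry no per-bit signal. What the iteration recovers is the parity vector $u$ over $\lceil r/2\rceil$-tuples of literals; the paper does one more multiplication to get $u$ exactly (whp) and then recovers $\sigma$ by Gaussian elimination on the resulting parity equations, and some such step (or comparing tuples differing in one literal) is needed in place of your majority vote.
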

Thus for any $r$, the upper bound matches the lower bound up to logarithmic factors for sample size parameter $t=n^{r/2}$,  with $L = n^{\lceil r/2\rceil}$ being only slightly higher in the odd case than the $L=n^{r/2}$ that the lower bound implies for such $t$.
The algorithm is a discretized variant of the algorithm based on power iteration with subsampling from \cite{FeldmanPV14}.  The upper bound holds for the problem of finding the planted assignment exactly, except in the case $r=1$.  Here  $\Omega(n \log n)$ clauses are required for complete identification since that many clauses are needed for each variable to appear at least once in the formula. In this case $O(n^{1/2} \log n)$ samples suffice to find an assignment with non-trivial correlation with the planted assignment, i.e. one that agrees with the planted assignment on $n/2 + t \sqrt n$ variables for an arbitrary constant $t$.

\subsection{Statistical dimension for decision problems}
\label{sec:statdim}
For a domain $X$, let $\D$ be a set of distributions over $X$ and let $D$ be a distribution over $X$ which is not in $\D$. 
For $t>0$, the {\em distributional decision problem} $\B(\D,D)$ using $t$ samples is to decide, given access to $t$ random samples from an arbitrary unknown distribution $D' \in \D \cup \{D\}$, whether $D' \in \D$ or $D' = D$. Lower bounds on the complexity of statistical algorithms use the notion of {\em statistical dimension} introduced in \cite{FeldmanGRVX:12}, based on ideas from \cite{BlumFJ+:94,Feldman:12jcss}.

To prove our bounds we introduce a new, stronger notion of statistical dimension which directly examines a certain norm of the operator that discriminates between expectations taken relative to different distributions.  Formally, for a distribution $D' \in \D$ and a reference distribution $D$ we examine the (linear) operator that maps a function $h:X\rightarrow \R$ to $\E_{D'}[h]-\E_{D}[h]$. Our goal is to obtain bounds on a certain norm of this operator extended to a set of distributions. Specifically, the {\em discrimination norm} of a set of distributions $\D'$ relative to a distribution $D$ is denoted by $\dc(\D',D)$ and defined as follows:
\begin{align*}
\dc(\D',D) \doteq \max_{h, \|h\|_D=1} \left\{ \E_{D' \sim \D'}\left[\left| \E_{D'}[h]-\E_{D}[h]\right| \right] \right\},
\end{align*}
where the norm of $h$ over $D$ is $\|h\|_D = \sqrt{\E_D[h^2(x)]}$ and $D' \sim \D'$ refers to choosing $D'$ randomly and uniformly from the set $\D'$. A statistical algorithm can get an estimate of the expectation of a query $h$ and use the value to determine whether the input distribution is the reference distribution or one of the distributions in $\D'$. Intuitively, the norm measures how well this can be done on average over distributions in $\D'$. In particular, we will show that if $\dc(\D',D) = \kappa$ then a single query to $\VSTAT(1/(3\kappa^2))$ cannot be used to distinguish all distributions in $\D'$ from $D$.

Our concept of statistical dimension is essentially the same as in \cite{FeldmanGRVX:12} but uses $\dc(\D',D)$ instead of average correlations.
\begin{definition}\label{def:sdima}
  For $\kappa>0$, domain $X$ and a decision problem $\B(\D,D)$, let $d$ be the largest integer
  such that there exists a finite set of distributions $\D_D \subseteq \D$ with the following property:
  for any subset $\D' \subseteq \D_D$, where $|\D'| \ge |\D_D|/d$, $\dc(\D',D) \leq \kappa$.
The \textbf{statistical dimension} with discrimination norm $\kappa$ of $\B(\D,D)$
is $d$ and denoted by $\SDN(\B(\D,D),\kappa)$.
\end{definition}
The dimension is equal to (at least) $d$ if there exists a reference distribution $D$ and a ``hard" subset of distributions $\D_D$, such that no large subset of $\D_D$ has discrimination norm larger than $\kappa$ (and, consequently, cannot be distinguished from $D$ using a single query to $\VSTAT(1/(3\kappa^2))$). Here large subset means at least $1/d$ fraction of distributions in $\D_D$. We remark that this statistical dimension can be easily extended to general search problems as in \cite{FeldmanGRVX:12} (the extension can be found in an earlier version of this work \cite[v5]{FeldmanPV:13}). A detailed treatment and additional approaches to proving statistical query lower bounds for search problems can be found in a subsequent work of Feldman \cite{Feldman:16sqd}.

The statistical dimension with discrimination norm $\kappa$ of a problem over distributions gives
a lower bound on the complexity of any statistical algorithm.
\begin{thm}
\label{thm:avgvstat-random}
  Let $X$ be a domain and $\B(\D,D)$ be a decision problem over a class of distributions $\D$ on $X$ and reference distribution $D$. For $\kappa > 0$, let $d = \SDN(\B(\D,D),\kappa)$ and let $L \geq 2$ be an integer.
  \begin{itemize}
  \item  Any randomized statistical algorithm that solves $\B(\D,D)$ with probability $\geq 2/3$ over the randomness in the algorithm requires $\Omega(d/L)$ calls to $\MVSTAT(L,1/(12 \cdot \kappa^2 \cdot L))$.
  \item   Any randomized statistical algorithm that solves $\B(\D,D)$ with probability $\geq2/3$ over the randomness in the algorithm requires at least $m$ calls to $\MSAMPLE(L)$ for $m = \Omega\left(\min\left\{d,1/\kappa^2 \right\}/L\right)$.
  \end{itemize}
  Further, the lower bound also holds when the input distribution $D'$ is chosen randomly as follows: $D'=D$ with probability $1/2$ and $D'$ equals  a random and uniform element of $\D_D$ with probability $1/2$, where $\D_D$ is the set of distributions for which the value of $d$ is attained.
\end{thm}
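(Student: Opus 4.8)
The plan is to follow the lower-bound framework of~\cite{feldman2013statistical}, replacing the average correlation used there with the discrimination norm $\dc$. First I would reduce to deterministic algorithms by averaging: since the theorem already places a distribution on the input ($D$ with probability $1/2$, a uniformly random element of the hard family $\D_D$ with probability $1/2$), a randomized algorithm succeeding with probability $\ge 2/3$ has some fixing of its coins that still succeeds with probability $\ge 2/3$ over this random input, so it suffices to rule out deterministic statistical algorithms. Fix such an algorithm $\A$ and let it run against the adversary that answers every query $h$ with the reference value under $D$ --- the number $\E_D[h]$ for $\VSTAT$, and the vector $(\Pr_D[h=\ell])_\ell$ for $\MVSTAT$ (whose partial sums are exactly the $\Pr_D[h\in Z]$). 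On input $D$ this is always a legal answer, so $\A$'s whole interaction on $D$ is a fixed transcript; let $h_1,\dots,h_q$ be the queries asked. For query $h_i$ define its \emph{distinguishing set} $A_i\subseteq\D_D$ to be the set of $D'$ for which the reference answer is \emph{not} legal, i.e.\ it falls outside the $\VSTAT$/$\MVSTAT$ tolerance around the true value under $D'$. If $D'\notin\bigcup_i A_i$, the fixed transcript is legal for $D'$ too, so $\A$ outputs on $D'$ the same answer it outputs on $D$; hence the set of $D'\in\D_D$ on which $\A$ is correct is contained in $\bigcup_i A_i$.

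Writing $N=|\D_D|$, correctness with probability $\ge 2/3$ over the random input forces $\bigl|\bigcup_i A_i\bigr|\ge N/3$, hence $\sum_i|A_i|\ge N/3$. The crux is then to show that, for the relevant oracle parameter $t=\Theta(1/\kappa^2)$ (and $t=\Theta(1/(\kappa^2L))$ for $\MVSTAT(L,\cdot)$), every distinguishing set satisfies $|A_i|<N/d$; granting this, the count gives $q>d/3$, and for $\MVSTAT$ summing over the at most $L$ Boolean coordinates of each query costs a factor $L$ and yields $q=\Omega(d/L)$ calls, matching the claims. To prove $|A_i|<N/d$ I argue by contradiction: if $|A_i|\ge N/d$, the definition of $\SDN(\B(\D,D),\kappa)=d$ forces $\dc(A_i,D)\le\kappa$, and it remains to show that a query whose reference answer is out of tolerance on \emph{all} of $A_i$ has discrimination norm $>\kappa$ on $A_i$.

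This last step is the key new ingredient. For a Boolean query $h$ (or a Boolean coordinate $\1[h\in Z]$), set $p=\E_D[h]$, $p'=\E_{D'}[h]$ (one may assume $p\in(0,1)$, the degenerate case being trivial). Plugging the unit-norm test function $g=(h-p)/\sqrt{p(1-p)}$ into the definition of $\dc$ gives $\dc(A_i,D)\ge\E_{D'\sim A_i}\bigl[\,|p'-p|/\sqrt{p(1-p)}\,\bigr]$. For each $D'\in A_i$ the tolerance violation reads $|p'-p|>\max\{1/t,\sqrt{p'(1-p')/t}\}$; combining the two bounds in the maximum with the elementary identity $|p(1-p)-p'(1-p')|=|p-p'|\cdot|1-p-p'|\le|p-p'|$ shows after a short calculation that $|p'-p|>\sqrt{p(1-p)/(3t)}$, so each term exceeds $1/\sqrt{3t}$ and $\dc(A_i,D)>1/\sqrt{3t}$, which is larger than $\kappa$ once $t\le 1/(3\kappa^2)$ --- the desired contradiction. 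For $\MVSTAT(L,t)$ the distinguishing set of a query is the union of those of its $|\cS|\le L$ Boolean coordinates, each of size $<N/d$ by the above, giving $|A_i|<LN/d$ and the stated loss of a factor $L$.

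The $\MSAMPLE(L)$ bound needs a different, information-theoretic argument because that oracle is honest (it returns $h(x)$ for an actual $x\sim D'$), so there is no adversarial slack to exploit. Here I would bound the total variation distance between the distribution of $\A$'s full $m$-query transcript when the input is $D$ and when it is a uniformly random $D'\in\D_D$, through a hybrid argument over the $m$ adaptively chosen queries; the contribution of a single $L$-valued query $h$ is $\tfrac12\E_{D'\sim\D_D}\sum_\ell\bigl|\Pr_{D'}[h=\ell]-\Pr_D[h=\ell]\bigr|$, each summand of which is controlled by applying $\dc(\D_D,D)\le\kappa$ (legitimate since $|\D_D|=N\ge N/d$) to the indicator $\1[h=\ell]$, and a Cauchy--Schwarz over $\ell$ together with the $m$-query bookkeeping produces $m=\Omega(\min\{d,1/\kappa^2\}/L)$, the $\min$ with $d$ arising from the event that the transcript lands in a region outright excluding a $1/d$-fraction of $\D_D$. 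Since the hard input distribution used throughout is exactly ``$D$ with probability $1/2$, uniform over $\D_D$ with probability $1/2$'', the final sentence of the theorem comes for free. I expect the main obstacles to be the per-query estimate of paragraph three --- reconciling the $D$-based normalization built into $\dc$ with the $D'$-dependent tolerances of $\VSTAT$, which the variance identity resolves --- and, for $\MSAMPLE$, handling adaptivity in the hybrid so that the conditional per-query quantities remain bounded by $\dc(\D_D,D)$.
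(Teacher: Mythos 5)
Your treatment of the first two bullets is correct and is essentially the paper's own argument: the paper proves the $\VSTAT(1/(3\kappa^2))$ bound exactly as you do (Theorem~\ref{thm:avgvstat-random-app-decision}: reduce to deterministic algorithms, answer every query with its value under the reference distribution $D$, define per-query distinguishing sets, note that success probability $2/3$ over the random input forces $\sum_i|A_i|\ge|\D_D|/3$, and bound each $|A_i|$ by $|\D_D|/d$ via the definition of $\SDN$), and then obtains the $\MVSTAT$ bullet by the simulation of one $\MVSTAT(L,t)$ query by $L$ queries to $\VSTAT(4Lt)$ (Theorem~\ref{thm:simulate-mvstat}). Your per-query estimate, via the unit-norm test function $(h-p)/\sqrt{p(1-p)}$ and the identity $|p(1-p)-p'(1-p')|\le|p-p'|$, is a correct and slightly cleaner variant of the paper's computation (which instead takes $p\le 1/2$ and uses $\|h\|_D=\sqrt{p}$). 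One small caution: the definition of $\MVSTAT$ does not force $|\cS|\le L$, so in your direct argument you should take the union over the $L$ singleton coordinates at the rescaled parameter $4Lt$ (exactly the Cauchy--Schwarz computation of Theorem~\ref{thm:simulate-mvstat}) rather than over the sets in $\cS$; with that change you recover $\Omega(d/L)$ calls at $t=1/(12\kappa^2L)$.

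The $\MSAMPLE(L)$ bullet is where you genuinely diverge from the paper and where your proposal has a real gap. The paper does not argue information-theoretically from scratch: it simulates $\MSAMPLE(L)$ by $O(L)$ calls to $\SAMPLE$ via the rejection-sampling Lemma~\ref{lem:simulate-1-mstat}, invokes the $\SAMPLE$-to-$\VSTAT$ simulation of \cite{feldman2013statistical} to get Theorem~\ref{th:unbiased-from-vstat}, and then applies the already-proved $\VSTAT$ bound; the $\min\{d,1/\kappa^2\}$ falls out of the two parameter constraints of that simulation (Theorem~\ref{thm:avgsample-v}). Your direct hybrid, as sketched, cannot reach the stated bound: a per-query total-variation accounting bounds the contribution of one $L$-valued query by $\E_{D'}\bigl[\sum_\ell|\Pr_{D'}[h=\ell]-\Pr_D[h=\ell]|\bigr]\le\kappa\sum_\ell\sqrt{p_\ell}\le\kappa\sqrt{L}$, so even granting the bookkeeping over $m$ adaptive queries it yields only $m=\Omega(1/(\kappa\sqrt{L}))$, which is quadratically weaker than the claimed $\Omega(1/(\kappa^2L))$ in the relevant regime $\kappa\sqrt{L}\ll 1$ (compare a coin of bias $1/2+\kappa$ versus fair: per-flip variation distance $\kappa$, yet $1/\kappa^2$ flips are needed). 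Reaching the claimed bound requires a squared (KL/Hellinger/chi-squared type) per-query accounting, and there the adaptivity you flag is not a technicality but the crux: conditioned on the transcript prefix the posterior over $\D_D$ is no longer uniform, so $\dc(\D_D,D)\le\kappa$ does not bound the conditional per-query quantity, and one must, per query, split off the at most $|\D_D|/d$ distributions with large deviations (this is where the $\min$ with $d$ actually comes from) and control likelihood ratios for the remainder. That is precisely the ``fairly technical'' content of the $L=2$ simulation in \cite{feldman2013statistical} which the paper reuses rather than reproves, and your sketch does not supply it; the cleanest repair is to follow the paper's reduction chain $\MSAMPLE(L)\to\SAMPLE\to\VSTAT$.
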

We prove this theorem in a slightly more general form in Section \ref{sec:general-stat-bounds}. Our proof relies on techniques from \cite{FeldmanGRVX:12} and simulations of $\MVSTAT$ and $\MSAMPLE$ using $\VSTAT$ and $\SAMPLE$, respectively.

In our setting the domain $X_k$ is the set of all clauses of $k$ ordered literals (without variable repetition); the class of distributions $\D_Q$ is the set of all distributions $Q_\sigma$ where $\sigma$ ranges over all $2^n$ assignments; the distribution $D$ is the uniform distribution over $X_k$ referred to as $U_k$.

In the Section \ref{sec:plantstat-sdn} we prove the following bound on the statistical dimension with discrimination norm of planted satisfiability.
\begin{thm}
\label{thm:main-sdn-bound}
 For any distribution $Q$ over $k$-clauses of distributional complexity $r$, there exists a constant $c > 0$ (that depends on $Q$) such that for any $q \geq 1$,
 $$\SDN\left(\B(\D_Q,U_k), \frac{c (\log q)^{r/2}}{n^{r/2}}\right) \geq q.$$
\end{thm}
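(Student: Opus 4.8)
The plan is to take the witnessing family in Definition~\ref{def:sdima} to be (essentially) all $2^n$ planted distributions, $\D_{U_k} = \{Q_\sigma : \sigma \in \on^n\}$, and to show that \emph{every} subfamily $\D'$ of relative size at least $1/q$ has discrimination norm $\dc(\D',U_k) \le c(\log q)^{r/2}/n^{r/2}$; by definition this gives $\SDN(\B(\D_Q,U_k), c(\log q)^{r/2}/n^{r/2}) \ge q$. The first step is to record the relative density exactly. Writing $\chi_S(\sigma(C)) = \prod_{i\in S}\sigma(l_i)$ and using $Q(y) = \sum_{S\subseteq[k]}\hat Q(S)\prod_{i\in S}y_i$, the normalizer $\sum_{C'\in X_k}Q(\sigma(C'))$ equals $|X_k|2^{-k}$ for every $\sigma$: fixing the variables in the positions of $C'$ and summing $\prod_{i\in S}\sigma(l_i)$ over the $2^{|S|}$ sign patterns on the positions in $S$ gives $0$ whenever $S\neq\emptyset$. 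Hence
\[
\psi_\sigma(C)\ \doteq\ \frac{Q_\sigma(C)}{U_k(C)}-1\ =\ 2^kQ(\sigma(C))-1\ =\ \sum_{\emptyset\neq S\subseteq[k]}2^k\hat Q(S)\,\chi_S(\sigma(C)),
\]
and by the definition of distribution complexity only $|S|\ge r$ contribute. Since $\E_{Q_\sigma}[h]-\E_{U_k}[h] = \langle h,\psi_\sigma\rangle_{U_k} =: p(\sigma)$, the central object is the function $p$ on $\on^n$, and the key structural fact is that $p$ is a multilinear polynomial in $\sigma$ whose lowest-degree part has degree $r$: grouping clauses by the set $W=V_S(C)$ of variables occupying the positions in $S$ and letting $\epsilon_S(C)=\prod_{i\in S}(\text{sign of }l_i)$, one gets $p = \sum_{s=r}^{k}p_s$ with $p_s(\sigma) = \sum_{|W|=s}c_{s,W}\chi_W(\sigma)$ homogeneous of degree $s$, where $\chi_W(\sigma)=\prod_{w\in W}\sigma_w$ and $c_{s,W}$ is a bounded linear combination of $b_{S,W} := |X_k|^{-1}\sum_{C:V_S(C)=W}h(C)\epsilon_S(C)$.

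The second step bounds $\|p_s\|_2$ in the uniform measure on $\on^n$. Since the $\chi_W$ are orthonormal, $\|p_s\|_2^2 = \sum_{|W|=s}c_{s,W}^2$, and Cauchy--Schwarz gives $c_{s,W}^2 = O_k\!\big(\sum_{|S|=s}b_{S,W}^2\big)$. For a fixed $S$ with $|S|=s$, the sets $\{C:V_S(C)=W\}$ partition $X_k$ into classes each of size $\Theta(n^{k-s})$, so by Cauchy--Schwarz again $\sum_{|W|=s}b_{S,W}^2 \le |X_k|^{-2}\,\Theta(n^{k-s})\sum_C h(C)^2 = \Theta(n^{k-s}/|X_k|) = \Theta(n^{-s})$, using $\sum_C h(C)^2 = |X_k|\,\|h\|_{U_k}^2 = |X_k|$. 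Hence $\|p_s\|_2 = O_k(n^{-s/2})$ (this is the computation underlying the $\E_{\sigma\sim U}[p(\sigma)^2] = O(n^{-r/2})$ average-correlation estimate).

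The third and decisive step avoids the $\sqrt{\ }$ loss inherent in pairwise-correlation arguments by controlling $\E_{\sigma\sim\D'}[|p_s(\sigma)|]$ through a high moment and hypercontractivity. For any $\D'$ with $|\D'|\ge|\D_{U_k}|/q$ and any integer $t\ge1$,
\[
\E_{\sigma\sim\D'}\!\big[|p_s(\sigma)|\big]\ \le\ \Big(\tfrac{|\D_{U_k}|}{|\D'|}\,\E_{\sigma\sim U}\!\big[p_s(\sigma)^{2t}\big]\Big)^{1/2t}\ \le\ q^{1/2t}\,\|p_s\|_{2t}\ \le\ q^{1/2t}\,(2t-1)^{s/2}\,\|p_s\|_2,
\]
where the last inequality is the $(2\to 2t)$ hypercontractive inequality for the degree-$s$ polynomial $p_s$ (Bonami--Beckner). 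Choosing $t \asymp \log q$ makes $q^{1/2t} = O(1)$ and $(2t-1)^{s/2} = O((\log q)^{s/2})$, so $\E_{\sigma\sim\D'}[|p_s(\sigma)|] = O_k\big((\log q)^{s/2}n^{-s/2}\big)$; summing over $r\le s\le k$ (a geometric-type series dominated by its $s=r$ term in the meaningful range $\log q \le n/4$, and trivially true otherwise after enlarging $c$) yields $\dc(\D',U_k) = \E_{\sigma\sim\D'}[|p(\sigma)|] \le c(\log q)^{r/2}/n^{r/2}$ for a suitable $c=c(Q)$. As this holds for every admissible $\D'$, the statistical dimension bound follows.

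The main obstacle is precisely this last step. A pairwise-correlation (average-correlation) analysis would only give $\dc(\D',U_k)\le \sqrt{\E_{\sigma\sim U}[p(\sigma)^2]} = O(n^{-r/4})$, matching $\VSTAT(n^{r/2})$ but not the target $\VSTAT(n^r)$; the improvement comes from estimating $\E_{\sigma\sim\D'}[|p(\sigma)|]$ directly, exploiting simultaneously that a bounded-degree polynomial concentrates (hypercontractivity) and that $\D'$ is a non-negligible fraction of an exponentially large family, so no small adversarial subfamily can make the mean of $|p|$ large. Two bookkeeping points need care: one must split $p$ into its homogeneous parts and apply hypercontractivity degree-by-degree --- using the crude degree bound $k$ would produce $(\log q)^{k/2}$ rather than $(\log q)^{r/2}$ --- and one must verify the combinatorial count that each ``$V_S$-class'' of clauses has size $\Theta(n^{k-s})$, for which the exact normalization $\sum_{C'}Q(\sigma(C'))=|X_k|2^{-k}$ is convenient. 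The extension to Goldreich's $k$-CSP, with $r$ the least degree of a nonzero Fourier coefficient of the predicate, follows the same outline.
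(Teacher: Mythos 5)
Your proposal is correct and follows essentially the same route as the paper: you expand $\E_{Q_\sigma}[h]-\E_{U_k}[h]$ as a multilinear polynomial in $\sigma$ with no nonzero part of degree below $r$ (the paper organizes this per Fourier coefficient of $Q$ via the XOR operators $\Gamma_\ell(\sigma,h_S)$), bound the $L_2$ norm of each degree-$s$ homogeneous part by $O_k(n^{-s/2})$ via Parseval and Cauchy--Schwarz, and control its mean absolute value over a $1/q$-fraction of assignments by Bonami--Beckner hypercontractivity. Your $2t$-th moment argument with $t\asymp\log q$ is just the moment form of the paper's tail-bound-plus-integration step (Lemmas \ref{lem:polynomial-concentration} and \ref{lem:conc-to-exp}), so the two proofs coincide in substance.
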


For an appropriate choice of $q=n^{\theta(\log n)}$ we get, $\SDN(\B(\D_Q,U_k), \frac{(\log n)^{r}}{n^{r/2}}) = n^{\Omega_k(\log n)}.$ Similarly, for any constant $\eps>0$,  we get $\SDN(\B(\D_Q,U_k), n^{r/2-\eps}) = 2^{n^{\Omega_k(1)}}.$ By using this bound in Theorem \ref{thm:avgvstat-random} we obtain our main lower bounds in Theorem \ref{thm:lower-bound}.

In Section \ref{sec:csp-extend} we prove the same lower bound for the generalized planted $k$-CSP problem. Our proof is based on a reduction showing that any statistical query for an instance of the $k$-CSP problem of complexity $r$ can be converted to a query for a planted $k$-SAT instance of distribution complexity $r$. The reduction ensures that the resulting query is essentially as informative in distinguishing the planted distribution from the reference one as the original query. As a result it reduces a bound on $\dc$ of a planted $k$-CSP problem to an almost equivalent bound on $\dc$ of the corresponding planted $k$-SAT problem.

\subsection{Corollaries and applications}
\label{sec:hardSAT}

\subsubsection{Quiet plantings}

Finding distributions of planted $k$-SAT instances that are algorithmically intractable has been a pursuit of researchers in both computer science and physics.  It was recognized in \cite{barthel2002hiding,jia2005generating} that uniform planted $k$-SAT is easy algorithmically due to the bias towards true literals, and so they proposed distributions in which true and false literals under the planted assignment appear in equal proportion. Such distributions have complexity $r \ge 2$ in our terminology.  These distributions have been termed `quiet plantings' since evidence of the planting is suppressed.

Further refinement of the analysis of quiet plantings was given in \cite{krzakala2012reweighted}, in which the authors analyze belief propagation equations and give predicted densities at which quiet plantings transition from intractable to tractable.  Their criteria for a quiet planting is exactly the equation that characterizes distribution complexity $r \ge 2$, and the conditions under which the tractability density diverges to infinity corresponds to distribution complexity $r \ge 3$.

The distribution complexity parameter defined here generalizes quiet plantings to an entire hierarchy of quietness.  In particular, there are distributions of satisfiable $k$-SAT instances with distribution complexity as high as $k-1$ ($r=k$ can be achieved using XOR constraints but these instances are solvable by Gaussian elimination). Our main results show that for distributions with complexity  $ r\ge 3$, the number of clauses required to recover the planted assignment is super-linear (for statistical algorithms with $L \le n^{r/2}$).

For examples of such distributions, consider weighting functions $Q(y)$ that depend only on the number of true literals in a clause under the planted assignment $\sigma$.  We will write $Q(j)$ for the value of $Q$ on any clause with exactly $j$ true literals.  Then setting $Q(0) =0, Q(1)=3/32, Q(2) =1/16, Q(3)=1/32, Q(4)=1/8$ gives a distribution over satisfiable $4$-SAT instances with distribution complexity $r=3$, and an algorithmic threshold at  $\tilde \Theta(n^{3/2})$ clauses.
Similar constructions for higher $k$ yield distributions of increasing complexity with algorithmic thresholds as high as $\tilde \Theta(n^{(k-1)/2})$.  These instances are the most `quiet' proposed and can serve as strong tests of industrial SAT solvers as well as the underlying hard instances in cryptographic applications. Note that in these applications it is important that a hard SAT instance can be obtained from an easy to sample planted assignment $\sigma$. Our lower bounds apply to the uniformly chosen $\sigma$ and therefore satisfy this condition.

\subsubsection{Feige's Hypothesis}

As a second application of our main result, we show that Feige's $3$-SAT hypothesis \cite{feige2002relations} holds for the class of statistical algorithms.  A refutation algorithm takes a $k$-SAT formula $\Phi$ as an input and returns either SAT or UNSAT. The algorithm must satisfy the following:
\begin{enumerate}
\item If $\Phi$ is satisfiable, the algorithm always returns SAT.
\item If $\Phi$ is drawn uniformly at random from all $k$-SAT formulas of $n$ variables and $m$ clauses, where $m/n$ is above the satisfiability threshold (the clause density at which the formula become unsatisfiable with high probability), then the algorithm must return UNSAT with probability at least $2/3$ (or some other arbitrary constant).
\end{enumerate}

 As with planted satisfiability, the larger $m$ is the easier refutation becomes, and so the challenge becomes finding efficient refutation algorithms that succeed on the sparsest possible instances. Efficient $3$-SAT refutation algorithms are known for $m =  \Omega (n^{3/2})$ \cite{coja2004strong, feige2004easily}.  Feige hypothesized 1) that no polynomial-time algorithm can refute formulas with $m \le \Delta n$ clauses for any constant $\Delta$ and 2) for every $\eps >0$ and large enough constant $\Delta$, there is no polynomial-time algorithm that answers UNSAT on most 3-SAT formulas but answers SAT on all formulas that have assignments satisfying $(1-\eps)$-fraction of constraints. Hypothesis 2 is strictly weaker than hypothesis 1.  Based on these hypotheses he derived hardness-of-approximation results for several fundamental combinatorial optimization problems.

To apply our bounds we need to first define a distributional version of the problem.
\begin{defn}
\label{def:refute-distr}
In the distributional $k$-SAT refutation problem the input formula is obtained by sampling $m$ i.i.d. clauses from some unknown distribution $D$ over clauses. An algorithm successfully solves the distributional problem if:
\begin{enumerate}
\item The algorithm returns SAT for every distribution supported on simultaneously satisfiable clauses.
\item The algorithm returns UNSAT with probability at least $2/3$ when clauses are sampled from the uniform distribution and $m/n$ is above the satisfiability threshold.
\end{enumerate}
\end{defn}

\begin{prop}
The original refutation problem and distributional refutation problem are equivalent: a refutation algorithm for the original problem solves the distributional version and vice versa.
\end{prop}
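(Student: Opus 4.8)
The plan is to prove both directions of the equivalence by showing that a single sample of $m$ i.i.d.\ clauses from a distribution $D$ is, from the point of view of a refutation algorithm, indistinguishable from a uniformly random $m$-clause formula when $D=U_k$, and that satisfiability of the support of $D$ corresponds exactly to the ``satisfiable'' case in the original problem. First I would observe the one direction that is essentially immediate: if $A$ is a refutation algorithm for the \emph{original} problem, then running $A$ on the $m$ sampled clauses already solves the distributional problem. Indeed, if $D$ is supported on simultaneously satisfiable clauses, then \emph{every} formula in the support of $D^{\otimes m}$ is a satisfiable $k$-SAT formula, so $A$ returns SAT with probability $1$; and if $D=U_k$, then the $m$ sampled clauses are distributed exactly as a uniform random $m$-clause formula (possibly up to the standard caveat about sampling clauses with repetition, which I will address below), so $A$ returns UNSAT with probability $\ge 2/3$ by its guarantee, provided $m/n$ is above the satisfiability threshold.

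For the converse, suppose $A'$ solves the distributional refutation problem. Given an arbitrary input formula $\Phi$ with clause (multi)set $C_1,\dots,C_m$ for the original problem, I would let $D_\Phi$ be the empirical distribution that places mass $1/m$ on each of the $m$ clauses (as a multiset), and feed a fresh sample of $m$ i.i.d.\ draws from $D_\Phi$ to $A'$. The key point is a case analysis on $\Phi$: if $\Phi$ is satisfiable, then every clause $C_i$ is satisfied by a common assignment, so $D_\Phi$ is supported on simultaneously satisfiable clauses and $A'$ must return SAT (with probability $1$ by requirement~1). If $\Phi$ is a uniformly random formula with $m/n$ above the satisfiability threshold, then I want to argue that $D_\Phi$ ``looks like'' $U_k$ well enough that $A'$ returns UNSAT with the required probability. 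Here I would invoke that $\Phi$ itself is already a sample of $m$ i.i.d.\ uniform clauses, so rather than resampling from the empirical distribution one can simply pass $\Phi$ directly — i.e., view $\Phi$ as the sample drawn from the unknown distribution $D=U_k$ — and apply requirement~2 of $A'$ verbatim.

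The main obstacle, and the place where a little care is genuinely needed, is the mismatch between ``$m$ i.i.d.\ samples from $U_k$'' and ``a uniformly random $m$-clause formula'': the former allows repeated clauses while the latter (as usually defined) draws $m$ distinct clauses, or draws an $m$-subset. I would handle this by noting that for $m = O(n)$ — the regime relevant to Feige's hypothesis — the probability of a collision among $m$ i.i.d.\ draws from the $\Theta(n^k)$ clauses is $o(1)$, so the two distributions are within $o(1)$ in total variation distance, and the constant success probability in requirement~2 absorbs this loss (with a standard boosting argument if one wants to push the constant back up). A second, minor point is that the original problem's guarantee is stated for $m$ above the satisfiability threshold, and one must check the reduction preserves this; it does, since $m$ is carried through unchanged. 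Assembling these observations gives both implications, and hence the stated equivalence, so that Theorem~\ref{thm:lower-bound-intro} — which rules out statistical algorithms for $\B(\D_Q,U_k)$, a special case of the distributional refutation problem — transfers to the original refutation problem for the class of statistical algorithms.
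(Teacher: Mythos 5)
Your first direction matches the paper's proof and is fine. The problem is in the second direction, where your reduction is not a single well-defined procedure: you first declare the reduction to be ``draw a fresh i.i.d.\ sample of $m$ clauses from the empirical distribution $D_\Phi$ and feed it to $A'$'', prove the satisfiable case for that procedure, and then, when handling the uniform case, switch to feeding $\Phi$ itself to $A'$. The reduction cannot branch on which case it is in (that is precisely what it is trying to decide), so you must commit to one procedure. Moreover, the resampling procedure genuinely fails in the uniform case: a bootstrap resample of $m$ clauses from the multiset of $m$ i.i.d.\ uniform clauses is not distributed as $m$ i.i.d.\ clauses from $U_k$ (it contains repeated clauses with high probability), and requirement~2 of the distributional problem gives a guarantee only when the clauses are sampled from the uniform distribution; for an arbitrary distribution with unsatisfiable support, such as $D_\Phi$ for a random $\Phi$, the distributional algorithm may answer anything. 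So no ``$D_\Phi$ looks like $U_k$'' argument is available, and the $o(1)$ total-variation estimate you invoke addresses the distinct-clauses-versus-i.i.d.\ issue, not this one.

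The fix, which is the paper's proof, is to always feed the clauses of $\Phi$ directly to $A'$, each clause used as a new sample. The uniform case is then immediate from requirement~2, since $\Phi$'s clauses are themselves distributed as i.i.d.\ uniform samples. For the satisfiable case you need one extra observation that is missing from your write-up under this reduction: the specific sequence $C_1,\dots,C_m$ of clauses of $\Phi$ has positive probability (namely $m^{-m}$) under $m$ i.i.d.\ draws from $D_\Phi$, and requirement~1 says $A'$ returns SAT \emph{always}, i.e., on every sample sequence that can arise from a distribution supported on simultaneously satisfiable clauses; hence $A'$ must output SAT on the clauses of $\Phi$ itself, not merely on a fresh resample from $D_\Phi$. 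With that observation the argument closes, and your concern about repeated clauses in the uniform case disappears, since in the paper's formulation the random formula is itself $m$ i.i.d.\ uniform clauses.
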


\begin{proof}
The first direction is immediate: assume that we have a refutation algorithm $A$ for a fixed formula.
We run the refutation algorithm on the $m$ clauses sampled from the input distribution and output the algorithm's answer. By definition, if the input distribution is uniform then the sampled clauses will give a random formula from this distribution. So $A$ will return UNSAT with probability at least $2/3$. If the clauses in the support of the input distribution can be satisfied then the formula sampled from it will be necessarily satisfiable and $A$ must return SAT.

In the other direction, we again run the distributional refutation algorithm $A$ on the $m$ clauses of $\Phi$ and output its answer (each clause is used as a new sample consecutively). If $\Phi$ was sampled from the uniform distribution above the satisfiability threshold, then the samples we produced are distributed according to the uniform distribution. Therefore, with probability at least $2/3$ $A$ returns UNSAT. If $\Phi$ is satisfiable then consider the distribution $D_\Phi$ which is uniform over the $m$ clauses of $\Phi$. $\Phi$ has non-zero probability to be the outcome of $m$ i.i.d. clauses sampled from $D_\Phi$. Therefore $A$ must output SAT on it since otherwise it would violate its guarantees. Therefore the output of our algorithm will be SAT for $\Phi$.
\end{proof}

In the distributional setting, an immediate consequence of Theorem \ref{thm:lower-bound} is that Feige's hypothesis holds for the class of statistical algorithms.
\begin{thm}
\label{cor:feige}
Any (randomized) statistical algorithm that solves the distributional $k$-SAT refutation problem requires: 
\begin{enumerate}
\item $m$ calls to the $\MSAMPLE(L)$ oracle with $m\cdot L \ge c_1 \left(\frac{n}{\log n}\right)^{k}$
for a constant $c_1 = \Omega_k(1)$.
\item $q$ queries to $\MVSTAT\left(L,\frac{c_2}{L}\cdot \frac{n^k}{(\log q)^{k}}\right)$ for a constant $c_2 = \Omega_k(1)$ and any $q \geq L$.
\end{enumerate}
\end{thm}

 \begin{proof}
 The decision problem in Theorem \ref{thm:lower-bound} is a special case of the distributional refutation problem. Specifically, say there is such a refutation algorithm.
Let $Q$ be a fully satisfiable clause distribution with distribution complexity $k$. Then consider a distribution $D$ so that  either $D = U_k$ or $D= Q_\sigma \in \D_Q$ for a uniformly chosen $\sigma \in \{ \pm 1 \}^n$. Then run the refutation algorithm on $D$.  If $D \in \D_Q$, then the algorithm must output SAT, and so we conclude $D \in \D_Q$.  If $D = U_k$, then with probability $2/3$ the algorithm must output UNSAT in which case we conclude that $D = U_k$.  This gives an algorithm for distinguishing $U_k$ from $\D_Q$ with probability at least $2/3$, a contradiction to Theorem \ref{thm:lower-bound}.
\end{proof}

If $r \geq 3$ and $L \le n^{r/2}$, the lower bound on the number of clauses $m$ is $\tilde{\Omega}(n^{r/2})$ and is much stronger than $\Delta n$ conjectured by Feige. Such a stronger bound is useful for some hardness of learning results based on Feige's conjecture \cite{DanielyLS:13}. For $k=3$, the $\tilde{\Omega}(n^{3/2})$ lower bound on $m$ essentially matches the known upper bounds \cite{coja2004strong, feige2004easily}.

We note that the only distributions with $r=k$ are noisy $k$-XOR-SAT distributions. Such distributions generate satisfiable formulas only when the noise rate is 0 and then formulas are refutable via Gaussian elimination. Therefore if one excludes the easy (noiseless) $k$-XOR-SAT distribution then we obtain only the stronger form of Feige's conjecture ($\eps > 0$) with $r=k=3$.

\subsubsection{Hardness of approximation}
We note finally that optimal inapproximability results can be derived from Theorem \ref{thm:lower-bound} as well, including the fact that pairwise independent predicates (as studied in \cite{austrin2009approximation}) are approximation-resistant for the class of statistical algorithms.

Our work provides a means to generate candidate distributions of hard instances for approximation algorithms for CSP's: find a distribution $Q$ on $\{ \pm 1 \}^k$ supported only on vectors that satisfy the CSP predicate with high distribution complexity (as in the example of $4$-SAT above).  Then statistical algorithms cannot efficiently distinguish the planted distribution (all constraints satisfied) from the uniformly random distribution (eg. ($1-2^{-k}$)-fraction of constraints satisfied in the case of $k$-SAT).

\section{Convex Programs and SQ Algorithms for Solving CSPs}
\label{sec:statalgpowersec}
\newcommand{\sqhsdv}{{\tt HS-DV}}

In this section we show how our lower bounds for planted $k$-SAT together with general statistical query algorithms for solving stochastic convex programs from \cite{FeldmanGV:15}  imply lower bounds on convex programs that can be used to solve planted $k$-SAT (analogous results also hold for Goldreich's $k$-CSP but we omit them for brevity). At a high level we observe that a convex relaxation can be viewed as a reduction from our planted constraint satisfaction problem to a stochastic convex optimization problem. Existence of such a reduction together with a statistical query algorithm for the corresponding stochastic convex program would violate the lower bounds that we prove. Hence, as a contrapositive, we rule out existence of several types of convex relaxations for the planted CSPs.

\subsection{LP/SDP relaxations for $k$-CSPs}
We first describe several standard ways in which Boolean constraint satisfaction problems\footnote{As usual in this literature, constraint satisfaction also refers to the problem of maximizing  the number of satisfied constraints.} are relaxed to an LP or an SDP.

The classic SDP for a constraint satisfaction problem is the MAX-CUT SDP of Goemans and Williamson \cite{GoemansW95}. In this program the goal is to maximize $\sum_{i,j \in [n]}[e_{ij} (1-x_{i,j})]$, where
$e_{ij} \in \R$ is the indicator of an edge presence in the graph and $x$, viewed as an $n\times n$ matrix, is constrained to be in the PSD cone with some normalization.

More generally, the canonical LP relaxation of a $k$-CSP with $m$ constraints results in a program of the following type (see \cite{ODonnell:11course} for a textbook version or \cite{Raghavendra:08,barak2013optimality,o2013goldreich} for some applications):
$$\mbox{maximize } \sum_{i \in [m]}\left( \sum_{y\in \{\pm 1\}^k, y \mbox{ satisfies } C_i } x_{V_i,y}\right) ,$$ subject to $\bar{x} \in K$. Here, $C_i$ denotes the $k$-ary Boolean predicate of the $i$-th constraint, $V_i$ denotes the $k$-tuple of variables of $i$-th constraint and $x_{V_i,y}$ is the variable that tells whether variables in $V_i$ are assigned values $y$ (its constrained to be in $[0,1]$ and interpreted as probability). The set $K$ is an $O_k(n^k)$-dimensional convex set that makes sure that $x_{V_i,y}$'s are consistent in a natural sense (with additional PSD cone constraints in the case of SDPs).

Such relaxations are a special case of even more general relaxations that are based on a linearization of the Boolean constraints (for example \cite{KothariMR17}). Specifically, a linearization maps each assignment $x \in \{\pm 1\}^n$ to a vector $v_x \in \R^N$ and each relevant $k$-ary Boolean constraint $C$ to a vector $w_C\in \R^N$. 
For a convex body $K$ in $\R^N$ that includes $\{v_x \cond x \in \{\pm 1\}^n \}$, given a set $C_1,\ldots,C_m$ of Boolean predicates on $x$ one considers the program $\max_{v \in K} \sum_{i\in [m]} \la v,w_{C_i}\ra$. Note that condition  $\{v_x \cond x \in \{\pm 1\}^n \} \subset K$ ensures that the optimum of this program is always at least as large as the optimum of the original problem.
In order to be useful, such a relaxation also needs to be able to distinguish between instances with high and low values of the optimum in the original program (for some appropriate values of ``high" and ``low"). 

Here we consider an even more general class of convex relaxations. First, we allow mapping Boolean constraints to general convex objective functions. That is, a Boolean function $C$ over $\on^n$ is mapped to a convex function $f_C$ over a convex body $K \in \R^N$. We will not need an explicit mapping between the Boolean input and vectors in $K$ but will only require that the value of the optimum of the resulting objectives allows us to distinguish between instances with high and low values of the optimum in the original program. Also note that going beyond linear objectives implies that we will be minimizing (and not maximizing) the resulting objective.

Our lower bounds apply to distributional CSPs in which constraints are sampled i.i.d.~from some distribution $D$ and they show that even estimating the value of the expected objective: $\max_{\sigma \in \on^n} \E_{C\sim D}[ \sigma(C)]$ is hard. We remark that, given $m = \Omega(n/\eps^2)$ samples, for every $x \in \on^n$, the value of the objective based on $m$ i.i.d. samples is within $\eps$ of the value of the expected objective (with high probability). Applying a convex relaxation to such a CSP leads to the following convex program: $\min_{x \in K} \E_{C \sim D} [f_C(x)]$, where $K$ is a fixed convex $N$-dimensional set (that is not dependent the distribution $D$) and for every $C$, $f_C(x)$ is a bounded convex function over $K$.  Such programs are referred to as {\em stochastic convex programs} and are well-studied in machine learning and optimization (e.g.~\cite{NemirovskiJLS09,Shalev-ShwartzSSS09}).

\newcommand{\Opt}{{\mbox{Opt}}}
\subsection{Statistical Query Algorithms for Stochastic Convex Optimization}
We now describe several results from \cite{FeldmanGV:15} giving upper bounds on solving various stochastic convex programs by statistical query algorithms. Bounds for a number of additional types of convex programs are given in \cite{FeldmanGV:15} and can be applied in this context in a similar way. We start by defining the problem of distribution-independent stochastic convex optimization formally.
\begin{definition}
For a convex set $K$, a set $\F$ of convex functions over $K$ and $\eps>0$ we denote by $\Opt(K,\F,\eps)$ the problem of finding, for every distribution $D$ over $\F$, $x^*$ such that $f_D(x^*) \leq \min_{x\in K} f_D(x) + \eps$, where $f_D(x) \doteq \E_{f \sim D}[f(x)]$.
\end{definition}

\paragraph{Center-of-gravity:}
For general convex functions with range scaled to $[-1,1]$, Feldman \etal \cite{FeldmanGV:15} describe two statistical query algorithms that both use $\VSTAT(O(N^2/\eps^2))$ to find an $\eps$-approximate solution to the stochastic convex program. The first algorithm is  based on the random walk approach from \cite{KV2006, LV2006b}. The second algorithm is based on  the classic center-of-gravity method \cite{Levin:1965} and requires fewer queries.
\begin{thm}
\label{thm:cog}[\cite{FeldmanGV:15}]
Let $K\subseteq \R^N$ be a convex body and let $\F$ be the set of all convex functions over $K$ such that for all $x\in K, |f(x)| \leq 1$. Then
 there is an algorithm that solves $\Opt(K,\F,\eps)$ using $O(N^2\log(1/\eps))$ queries to $\VSTAT(O(N^2/\eps^2))$.
\end{thm}
The theorem above ignores computational considerations since those do not play any role in our information-theoretic lower bounds. An efficient version of this algorithm is also given in \cite{FeldmanGV:15}.

\paragraph{Mirror descent:}
In most practical cases the expected convex objective is optimized using simpler methods such as gradient-descent based algorithms. It is easy to see that such methods fit naturally into the statistical query framework. For example, gradient descent relies solely on knowing $\nabla f_D(x_t)$ approximately, where $f_D$ is the optimized function and $x_t$ is the solution at step $t$. By linearity of expectation, we know that $\nabla \E_D[f(x_t)] = \E_D[\nabla f(x_t)]$. This means that we can approximate $\nabla \E_D[f(x_t)]$ using queries to $\VSTAT$ with sufficiently large parameter. In particular, as shown in \cite{FeldmanGV:15}, the classic mirror-descent method \cite{nemirovsky1983problem} can be implemented using a polynomial number of queries to $\VSTAT(O((L \cdot R \cdot \log(N)/\eps)^2))$ to $\eps$-approximately solve any convex program whenever $K$ is contained in the $\ell_p$ (for any $p \in [1,2]$) ball of radius $R$ and the $\ell_q$ (for $q = 1/(1-1/p)$) norm of $\nabla f$ is bounded by $L$. Note that in this case the dependence of the accuracy parameter of $\VSTAT$ on the dimension is just logarithmic.  We denote $\B_p^N(R) \doteq \{x \cond \|x\|_p \leq R\}$.

\begin{thm}
\label{thm:solve_cvx_ellp}
Let $p\in[1,2]$, $L,R>0$, and $K \subseteq \B_p^N(R)$ be a convex body. Let $\F$ be the set of all functions $f$ that satisfy, for all $x\in K$, $\|\nabla f(x)\|_q \leq L$ for $q=1-(1/p)$. Then there is an algorithm that solves $\Opt(K,\F,\eps)$ using
$O\left(N\log N\cdot (LR/\eps)^2\right)$ queries to $\VSTAT(O((\log N \cdot LR/\eps)^2))$.
\end{thm}

\newcommand{\Stat}{{\mbox{Stat}}}
\newcommand{\cO}{{\mathcal O}}
\subsection{Corollaries for Planted $k$-CSPs}

Now observe that a convex relaxation (of the type that we defined) is just a mapping from Boolean constraints to convex functions in some class of functions $\F$ over a convex body $K$. Such mapping allows to implement a statistical query oracle for the distribution over convex functions given a statistical query oracle for the input distribution over Boolean constraints. In particular, it allows us to run a statistical query algorithm for $\Opt(K,\F,\eps)$ on the stochastic convex program that corresponds to the input distribution over $k$-clauses.
Now assume that the value of the solution for the stochastic convex program corresponding to a planted $k$-CSP is smaller than the value of the solution for the the stochastic convex program corresponding to the uniform distribution over constraints by at least $\eps$. Then a statistical query algorithm for $\Opt(K,\F,\eps)$ solves the decision version of our planted $k$-CSP. Hence, if $\Opt(K,\F,\eps)$ can be solved using some number of queries to a statistical oracle that violates our lower bound then a convex relaxation that satisfies these properties cannot exist.
We now make these statements formally.

\begin{thm}\label{thm:lower-bound-convex}
Let $Q$ be a distribution over $k$-clauses of complexity $r$. Assume that there exists a mapping that maps each $k$-clause $C\in X_k$ to a convex function $f_C:K \rightarrow[-1,1]$ over some convex $N$-dimensional set $K$ that for some $\eps > 0$ and $\alpha$ satisfies:
\begin{itemize}
\item $\Pr_{\sigma \in \on^n}\left[ \min_{x\in K}\left\{ \E_{C \sim Q_\sigma}[f_C(x)] \right\} \leq \alpha\right] \geq 1/2$;
\item $\min_{x\in K} \{\E_{C \sim U_k} [f_C(x)] \} > \alpha +\eps$.
\end{itemize}
Then for every $q \geq 1$,  solving $\Opt(K,\F,\eps)$ using $\VSTAT\left(\frac{n^{r}}{(\log q)^{r}}\right)$ requires $\Omega(q)$ queries.
\end{thm}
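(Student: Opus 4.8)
The plan is to prove the contrapositive by a query-efficient black-box reduction from the planted decision problem $\B(\D_Q,U_k)$ of Theorem~\ref{thm:lower-bound-intro} to $\eps$-optimization over $\mathcal M(X_k)$. Suppose $A$ is a statistical algorithm that, for every distribution $D$ supported on $\mathcal M(X_k)$, makes $q'$ queries to $\VSTAT(n^r/(\log q)^r)$ and, with probability $\ge 2/3$, outputs $x^\ast\in K$ with $\E_{f\sim D}[f(x^\ast)]\le \min_{x\in K}\E_{f\sim D}[f(x)]+\eps$. Given oracle access to an unknown $\mathcal D$ over $X_k$ that is either $U_k$ or $Q_\sigma$ for a uniformly random $\sigma\in\on^n$, we run $A$ on the distribution over $\mathcal M(X_k)$ obtained by drawing $C\sim\mathcal D$ and outputting $f_C$. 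A query $h$ that $A$ poses about this induced distribution is answered by posing $h\circ\mathcal M$ to the oracle for $\mathcal D$ at the same parameter $t=n^r/(\log q)^r$; since $\E[h(f_C)]=\E_{C\sim\mathcal D}[(h\circ\mathcal M)(C)]$ and the $\VSTAT$ tolerance depends only on this expectation and on $t$, this is an exact one-for-one simulation. (That $f_C$ is bounded-real- rather than Boolean-valued is handled exactly as in the statistical convex-optimization routines above, i.e.\ by the standard extension of $\VSTAT$ to bounded-range functions; since $X_k$ is finite one may instead discretize $f_C(\cdot)$ into finitely many levels.)

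Let $x^\ast\in K$ be the point $A$ returns. We make one further query estimating $v^\ast:=\E_{C\sim\mathcal D}[f_C(x^\ast)]$ to small additive accuracy, which needs only $\VSTAT(O(1/\eps^2))$ and is therefore simulated by $\VSTAT(n^r/(\log q)^r)$ for large $n$ (a $\VSTAT$ oracle with larger parameter answers any query with smaller parameter). The reduction answers ``$\mathcal D=U_k$'' if this estimate lies above roughly $\alpha+\eps$ and ``$\mathcal D\in\D_Q$'' otherwise. If $\mathcal D=U_k$, the first hypothesis gives $v^\ast\ge\min_{x\in K}\E_{C\sim U_k}[f_C(x)]>\alpha+\eps$ for the returned $x^\ast$ (indeed for every $x\in K$), so the reduction answers correctly regardless of $A$'s behavior. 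If $\mathcal D=Q_\sigma$ with $\sigma$ uniform, then with probability $\ge 2/3$ the second hypothesis gives $\min_{x\in K}\E_{C\sim Q_\sigma}[f_C(x)]\le\alpha$, and conditioned on this, with probability $\ge 2/3$ the algorithm $A$ succeeds, so $v^\ast\le\alpha+\eps$ and the reduction answers ``$\mathcal D\in\D_Q$''. Over the random choice of $\mathcal D$ and all internal randomness the reduction is then correct with probability at least $\tfrac12\cdot 1+\tfrac12\cdot\tfrac23\cdot\tfrac23=\tfrac{13}{18}>\tfrac23$, using $q'+1$ calls to $\VSTAT(n^r/(\log q)^r)$ and no calls to $\SAMPLE$. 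Theorem~\ref{thm:lower-bound-intro} (equivalently, Theorems~\ref{thm:main-sdn-bound} and \ref{thm:avgvstat-random}) then forces $q'+1=\Omega(q)$, i.e.\ $q'=\Omega(q)$, which is the claimed lower bound; combined with the statistical convex-optimization algorithms above this also yields the dimension bound of Corollary~\ref{cor:lower-convex-program-intro}.

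The one delicate point is the matching of accuracies. The two cases above are separated only by the single value $\alpha+\eps$ (strictly above it for $U_k$, weakly below for the good $\sigma$), so turning this into a decision robust to the $\VSTAT$-estimation error in $v^\ast$ requires the gap in the hypotheses to exceed the optimization slack by a fixed constant factor — e.g.\ running $A$ to slack $\eps/3$, or equivalently reading the hypotheses with gap $3\eps$. Since both quantities are written with the same symbol $\eps$, this is a constant-factor adjustment absorbed into $\Omega(\cdot)$ and we do not belabor it. Everything else in the argument — the one-for-one oracle simulation, the query accounting, and the probability composition — is routine, so the content of the theorem is carried entirely by the statistical-query lower bound for the planted decision problem.
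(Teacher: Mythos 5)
Your reduction is exactly the derivation the paper intends: Theorem~\ref{thm:lower-bound-convex} is stated as following ``directly'' from the definition of optimization complexity together with Theorems~\ref{thm:main-sdn-bound} and \ref{thm:avgvstat-random} (equivalently Theorem~\ref{thm:lower-bound-intro}), with no separate proof given, and your simulation-plus-threshold argument, the pushforward observation that a query $h$ about the induced distribution is answered by $h\circ\mathcal M$ at the same $\VSTAT$ parameter, and the $\tfrac12\cdot 1+\tfrac12\cdot\tfrac23\cdot\tfrac23$ probability accounting are precisely that omitted argument.

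One clarification on the ``delicate point'' you flag, which is real but should be stated more carefully. The issue is not absorbed into the $\Omega(\cdot)$ on the query count: running $A$ with slack $\eps/3$ (or reading the hypotheses with gap $3\eps$) lower-bounds the complexity of $\eps/3$-optimization, or proves the theorem under a strengthened first hypothesis, rather than the literal statement. This is consistent with the paper's own practice---Theorem~\ref{thm:lower-bound-convex-samples} concludes for optimization up to $\eps/2$ precisely to leave such room---but the conclusion and the hypothesis gap do move by that constant. In fact you need far less than a constant factor: the final value-estimation query to $\VSTAT(n^r/(\log q)^r)$ has error at most roughly $\sqrt{(\log q)^r/n^r}=o(1)$, so it suffices that the first hypothesis hold with an additive margin of this order (or to conclude for accuracy $\eps-o(1)$); the knife-edge failure only occurs when $\min_{x\in K}\E_{U_k}[f_C(x)]$ exceeds $\alpha+\eps$ by less than the oracle's estimation error, and since $\VSTAT$ is adversarial within its tolerance this case genuinely cannot be decided, so some such margin is unavoidable for this reduction. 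Also note that your success probability must be compared against the average-case guarantee of Theorem~\ref{thm:avgvstat-random-app-decision} (success $\gamma>1/2$ over the mixture $U_k$ w.p.~$1/2$, random $Q_\sigma$ w.p.~$1/2$), which your input distribution matches exactly, so $13/18$ is more than enough; and the extra estimation query on the $[-1,1]$-valued $f_{(\cdot)}(x^\ast)$ indeed requires the bounded-range extension of the oracle or a constant number of Boolean threshold queries, as you indicate, which only changes constants. With these caveats made explicit, the proof is correct.
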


The first condition on the value of the solution requires that for most $\sigma$'s the value of the minimum of the expected objective is at most $\alpha$. The planted instances are usually the instances which have a higher number of satisfied clauses so this is a weakening of the condition that a convex relaxation should only decrease the value of the minimum. The second condition requires that the value of the minimum of the expected objective on the uniform distribution is at least $\alpha+\eps$. For comparison, the standard condition on a convex relaxation requires that the value of the solution obtained on $m$ clauses randomly sampled from the uniform distribution is large. Our condition is weaker since for every $m$ and $x^* = \argmin_{x\in K} \{\E_{C \sim U_k} [f_C(x)]\}$,
$$ \E_{C_1,\ldots,C_m\sim U_k}\left[\min_{x\in K} \left\{\frac{1}{m} \sum_i f_{C_i}(x) \right\}\right] \leq \E_{C_1,\ldots,C_m\sim U_k}\left[\frac{1}{m} \sum_i f_{C_i}(x^*) \right] = \min_{x\in K} \{\E_{C \sim U_k} [f_C(x)] \}.$$

Now Theorem \ref{thm:lower-bound-convex} can be combined with upper bounds on the complexity of solving stochastic convex programs we gave above to obtain lower bounds on the parameters of convex relaxations that can be used to solve planted satisfiability problems. For example the algorithm described in Theorem \ref{thm:cog} implies Corollary \ref{cor:lower-convex-program-intro}.
Using Theorem \ref{thm:solve_cvx_ellp} we can exclude convex relaxations even in exponentially high dimension as long as the convex set is bounded in $\ell_p$ norm and functions satisfy a Lipschitz condition. For simplicity we take these constraints to be 1 (a more general statement can be obtained easily by rescaling).
\begin{cor}
\label{cor:lower-convex-program-norm}
Let $Q$ be a distribution over $k$-clauses of complexity $r$. For $p\in [1,2]$, let $K\subseteq \B_p^N(1)$ be convex and compact set and $\F_p = \left\{f(\cdot) \cond \forall x \in K, \|\nabla f(x)\|_q \leq 1\right\}$ (where $q = 1/(1-1/p)$). Assume that there exists a mapping that maps each $k$-clause $C\in X_k$ to a convex function $f_C \in \F_p$. Further assume that for some $\eps > 0$ and $\alpha \in \R$,
 $$\Pr_{\sigma \in \on^n}\left[ \min_{x\in K}\left\{ \E_{C\sim Q_\sigma}[ f_{C}(x)]  \right\} \leq \alpha\right] \geq 1/2.$$
and
$$\min_{x\in K} \left\{\E_{C\sim U_k}[f_{C}(x)] \right\} > \alpha + \eps .$$
Then $N = 2^{\tilde{\Omega}_k\left(n^{r/(r+2)}  \cdot \eps^{2/(r+2)}\right)}$.
\end{cor}

For instances of planted satisfiability there is a constant gap between the fraction of clauses that can be satisfied in a formula sampled from $U_k$ and the fraction of clauses that can be satisfied in a formula sampled from $Q_\sigma$. Thus, for convex relaxations that satisfy the conditions of Corollary \ref{cor:lower-convex-program-norm} the lower bounds imply a large integrality gap.

Note that these corollaries give concrete lower bounds on the dimension and other structural properties of convex programs that can be used to solve an average-case $k$-CSP without any assumptions about how the convex program is solved. In particular, it does not need to be solved via a statistical algorithm or even computationally efficiently. As far as we know, this approach to obtaining lower bounds for convex relaxations from convex optimization algorithms and statistical query lower bounds is new.

\begin{rem}
We observe that standard lift-and-project procedures (Sherali-Adams, Lov\'asz-Schrijver, Lasserre) for strengthening LP/SDP formulations do not affect the analysis above. While these procedures add a large number of auxiliary variables and constraints the resulting program is still a convex optimization problem in the same dimension (although implementation of the separation oracle becomes more computationally intensive). Hence the use of such procedures does not necessarily affect the bounds on the number of queries and tolerance we gave above.
\end{rem}

At a more conceptual level, the primary difference between the commonly considered hierarchies of LP/SDP relaxations and our approach is as follows. The expected objective value of the stochastic convex programs corresponding to these hierarchies of relaxations captures the expected objective of the original Boolean $k$-CSP. Yet, solving stochastic convex programs corresponding to these relaxations for all distributions requires $\Omega(n^{r/2})$ samples information theoretically (e.g.~\cite{FeldmanGV:15}). Lower bounds against such relaxations effectively prove that this number of samples is necessary even for the uniform distribution over the clauses: given fewer samples the optimum of the objective based on the given random samples will have a much lower value than the optimum of the expected objective (a phenomenon that is referred to as overfitting). In contrast, our approach rules out relaxations for which the resulting stochastic convex program can be solved by a statistical query algorithm using $q$ queries to $\VSTAT\left(\frac{n^{r}}{(\log q)^{r}}\right)$. In particular there is no overfitting. However,  such relaxations end up not being sufficiently expressive: the optimum of the expected objective of the relaxation does not differentiate between the planted distributions and the uniform one. This difference makes our lower bounds incomparable and, in a way, complementary to existing work on lower bounds for specific hierarchies of convex relaxations.

\section{Statistical Dimension of Planted Satisfiability}
\label{sec:plantstat-sdn}
In this section, we prove our lower bound on the statistical dimension with discrimination norm of the planted satisfiability problem (Theorem \ref{thm:main-sdn-bound}). Recall that the theorem states that for any distribution $Q$ over $k$-clauses of distributional complexity $r$, there exists a constant $c > 0$  such that for any $q \geq 1$,
 $$\SDN\left(\B(\D_Q,U_k), \frac{c (\log q)^{r/2}}{n^{r/2}}\right) \geq q.$$

\noindent {\bf Proof overview:} We first show that the discrimination operator corresponding to $Q$ applied to a function $h:X_k\rightarrow \R$ can be decomposed into a linear combination of discrimination operators for $\ell$-XOR-SAT. Namely, we show that
$$\E_{Q_\sigma}[h] - \E_{U_k}[h] =  -2^k\sum_{S \subseteq [k]} \hat{Q}(S) \cdot (\E_{Z_{\ell,\sigma}}[h_S] - \E_{U_\ell}[h_S]),$$
where $Z_{\ell,\sigma}$ is the $\ell$-XOR-SAT distribution over $\ell$-clauses with planted assignment $\sigma$, and $h_S$ is a projection of $h$ to $X_\ell$ defined below.

The two key properties of this decomposition are: $(i)$ the coefficients obtained in the decomposition are exactly $\hat{Q}(S)$'s, which determine the distribution complexity of $Q$, and $(ii)$  $\|h_S\|_{U_\ell}$ is upper-bounded by $\|h\|_{U_k}$. This step implies that the discrimination norm for the problem defined by $Q$ is upper bounded (up to constant factors) by the discrimination norm for $r(Q)$-XOR-SAT.

In the second step of the proof we bound the discrimination norm for the $r(Q)$-XOR-SAT problem. Our analysis is based on the observation that $\E_{Z_{\ell,\sigma}}[h_S] - \E_{U_\ell}[h_S]$ is a degree-$\ell$ polynomial as a function of $\sigma$. We exploit known concentration properties of degree-$\ell$ polynomials to show that the function cannot have high expectation over a large subset of assignments. This gives the desired bound on the discrimination norm for the $r(Q)$-XOR-SAT problem.

We now give the formal details of the proof.
For a distribution $Q_\sigma$ and query function $h:X_k \rightarrow \R$, we define $\Delta(\sigma,h) = \E_{Q_\sigma}[h] - \E_{U_k}[h]$. We start by introducing some notation:
\begin{definition}
For $\ell \in [k]$,
\begin{itemize}
\item Let $Z_\ell$ be the $\ell$-XOR-SAT distribution over $\on^\ell$, that is a distribution such that $Z_\ell(i)=1/2^{\ell-1}$ if $i$ is odd and 0 otherwise.
\item For a clause $C \in X_k$ and $S \subseteq [k]$ of size $\ell$, let $C_{|S}$ denote a clause in $X_\ell$ consisting of literals of $C$ at positions with indices in $S$ (in the order of indices in $S$).
\item For $h:X_k \rightarrow \R$, $S \subseteq [k]$ of size $\ell$ and $C_\ell \in X_\ell$, let $$h_S(C_\ell) = \frac{|X_\ell|}{|X_k| }\sum_{C\in X_k,\ C_{|S} = C_{\ell}} h(C).$$
\item For $g:X_\ell \rightarrow \R$, let $\Gamma_\ell(\sigma,g) = \E_{Z_{\ell,\sigma}}[g] - \E_{U_\ell}[g]$.
\end{itemize}
\end{definition}

Recall the discrete Fourier expansion of a function $Q: \{ \pm 1 \}^k \to \R$:
\[ Q(x) = \sum_{S \subseteq [k]} \hat Q(S) \chi_S(x), \]
where $\chi_S(x) = \prod_{i \in S} x_i$ is a parity or Walsh basis function, and the Fourier coefficient of the set $S$ is defined as:
\[ \hat Q(S) = \frac{1}{2^k} \sum_{y \in \{\pm 1\}^k} Q(y) \chi_S(y) \]

We show that $\Delta(\sigma,h)$ (as a function of $h$) can be decomposed into a linear combination of $\Gamma_\ell(\sigma,h_S)$.
\begin{lem}
\label{lem:delta-decompose}
For every $\sigma$ in $\on^n$ and $h:X_k \rightarrow \R$,
$$\Delta(\sigma,h) = - 2^k \sum_{S \subseteq [k], S\neq \emptyset} \hat{Q}(S) \cdot \Gamma_\ell(\sigma,h_S) .$$
\end{lem}
\begin{proof}
Recall that for a clause $C$ we denote by $\sigma(C)$ the vector in $\on^k$ that gives evaluation of the literals in $C$ on $\sigma$ with $-1$ corresponding to TRUE and 1 to FALSE. Also by our definitions, $Q_\sigma(C) = \frac{2^k \cdot Q(\sigma(C))}{|X_k|}$.
Now, using $\ell$ to denote $|S|$,
\alequ{\E_{Q_\sigma}[h] &=  \sum_{C \in X_k} h(C) \cdot Q_\sigma(C) = \frac{2^k}{ |X_k|} \sum_{C \in X_k} h(C) \cdot Q(\sigma(C)) \nonumber \\
&= \frac{2^k}{ |X_k|} \sum_{S \subseteq [k]} \hat{Q}(S) \sum_{C \in X_k} \chi_S(\sigma(C)) \cdot h(C) \nonumber \\
&= \frac{2^k}{ |X_k|} \sum_{S \subseteq [k]} \hat{Q}(S)  \sum_{C_\ell \in X_\ell} \sum_{C \in X_k, C_{|S} = C_\ell} \chi_S(\sigma(C)) \cdot h(C) \label{eq:decomp}
  }

Note that if $C_{|S} = C_\ell$ then for $\ell\geq 1$,
$$\chi_S(\sigma(C))= \chi_{[\ell]}(\sigma(C_\ell)) = 1-2^\ell \cdot Z_\ell(\sigma(C_\ell)) \ $$ and for $\ell = 0$, $\chi_\emptyset(\sigma(C)) = 1$. Therefore, for $\ell \ge 1$ and $\ell =0$ respectively,
$$\sum_{C \in X_k, C_{|S} = C_\ell} \chi_S(\sigma(C)) \cdot h(C) = (1-2^\ell \cdot Z_\ell(\sigma(C_\ell))) \cdot \sum_{C \in X_k, C_{|S} = C_\ell}  h(C) \mbox{ and } $$ $$\frac{2^k}{ |X_k|} \sum_{C \in X_k} [\hat{Q}(\emptyset) h(C)] = 2^k \cdot \hat{Q}(\emptyset) \cdot \E_{U_k}[h(C)] = \E_{U_k}[h(C)] ,$$ where $\hat{Q}(\emptyset)=2^{-k}$ follows from $Q$ being a distribution over $\on^k$.
Plugging this into eq.(\ref{eq:decomp}) we obtain
\alequn{\Delta(\sigma,h) &= \E_{Q_\sigma}[h] - \E_{U_k}[h] \\ &=
\frac{2^k}{ |X_k|} \sum_{S \subseteq [k], S\neq \emptyset} \hat{Q}(S) \sum_{C_\ell \in X_\ell} \left[(1-2^\ell \cdot  Z_\ell(\sigma(C_\ell))) \cdot \sum_{C \in X_k, C_{|S} = C_\ell}  h(C) \right]\\
&=   \sum_{S \subseteq [k], S\neq \emptyset}\frac{2^k}{|X_\ell|} \hat{Q}(S) \sum_{C_\ell \in X_\ell} \left[(1-2^\ell \cdot Z_\ell(\sigma(C_\ell))) \cdot h_S(C_\ell) \right] \\
&= 2^k \sum_{S \subseteq [k], S\neq \emptyset} \hat{Q}(S) \left(\E_{U_{\ell}}[h_S] - \E_{Z_{\ell,\sigma}}[h_S]\right) \\
&= - 2^k \sum_{S \subseteq [k], S\neq \emptyset} \hat{Q}(S) \cdot \Gamma_\ell(\sigma,h_S),
}
where we used that, by definition of $Z_{\ell,\sigma}$, $\frac{1}{|X_\ell|}  
\cdot 2^\ell \cdot Z_\ell(\sigma(C_\ell)) = Z_{\ell,\sigma}(C_\ell)$.
\end{proof}

We now analyze $\Gamma_\ell(\sigma,h_S)$. For a clause $C$ let $I(C)$ denote the set of indices of variables in the clause $C$ and let $\overline\#(C)$ denote the number of negated variables is $C$. Then, by definition, $$Z_{\ell,\sigma}(C) = \frac{Z_\ell(\sigma(C))}{|X_\ell|} = \frac{1 - (-1)^{\overline\#(C)} \cdot \chi_{I(C)}(\sigma)}{|X_\ell|}. $$
This implies that $\Gamma_{\ell}(\sigma,h_S)$ can be represented as a linear combination of parities of length $\ell$.
\begin{lem}
\label{lem:gamma}
For $g:X_\ell \rightarrow \R$,
$$\Gamma_\ell(\sigma,g) = - \frac{1}{|X_\ell|} \sum_{A \subseteq [n], |A| = \ell} \left(\sum_{C_\ell \in X_\ell, I(C_\ell)=A} g(C_\ell) \cdot (-1)^{\overline\#(C_\ell)} \right) \cdot \chi_A(\sigma) .$$
\end{lem}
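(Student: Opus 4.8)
The plan is to derive the identity by directly expanding $\Gamma_\ell(\sigma,g)$ using the closed form for $Z_{\ell,\sigma}$ recorded just before the lemma; the argument is essentially bookkeeping rather than anything deep, so I would organize it in three short steps.

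First I would justify and record the closed form $Z_{\ell,\sigma}(C) = \bigl(1 - (-1)^{\overline\#(C)}\chi_{V(C)}(\sigma)\bigr)/|X_\ell|$. Writing $C=(l_1,\dots,l_\ell)$ and using the convention that $-1$ is TRUE, each literal value is $\sigma(l_i) = (-1)^{[l_i \text{ negated}]}\sigma_{v_i}$, so $\chi_{[\ell]}(\sigma(C)) = \prod_{i=1}^\ell \sigma(l_i) = (-1)^{\overline\#(C)}\chi_{V(C)}(\sigma)$. Since $Z_\ell$, viewed as a function on $\on^\ell$, equals $1-\chi_{[\ell]}$ (it is $2$ on XOR-satisfying strings and $0$ otherwise), dividing by $|X_\ell|$ and substituting gives the stated expression. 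I would also note in passing that this is a genuine probability distribution: for each fixed variable set $A$ the $2^\ell$ sign patterns contribute $\sum (-1)^{\overline\#(\cdot)} = \prod_{i\le \ell}(1+(-1)) = 0$, hence $\sum_{C\in X_\ell} Z_{\ell,\sigma}(C) = |X_\ell|/|X_\ell| = 1$.

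Next I would expand $\Gamma_\ell(\sigma,g) = \E_{Z_{\ell,\sigma}}[g] - \E_{U_\ell}[g] = \sum_{C\in X_\ell} g(C)Z_{\ell,\sigma}(C) - \sum_{C\in X_\ell} g(C)\cdot\tfrac{1}{|X_\ell|}$, plug in the closed form, and observe that the constant term $\tfrac{1}{|X_\ell|}$ inside $Z_{\ell,\sigma}(C)$ cancels exactly against the uniform term $U_\ell(C)=\tfrac{1}{|X_\ell|}$. This leaves $\Gamma_\ell(\sigma,g) = -\tfrac{1}{|X_\ell|}\sum_{C\in X_\ell} g(C)(-1)^{\overline\#(C)}\chi_{V(C)}(\sigma)$. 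Finally I would partition the sum over $C\in X_\ell$ according to the variable set $V(C)=A$, an $\ell$-subset of $[n]$; since $\chi_{V(C)}(\sigma)=\chi_A(\sigma)$ depends only on $A$ (not on the ordering or negation pattern), it factors out of the inner sum, yielding $-\tfrac{1}{|X_\ell|}\sum_{|A|=\ell}\bigl(\sum_{C\in X_\ell,\,V(C)=A} g(C)(-1)^{\overline\#(C)}\bigr)\chi_A(\sigma)$, which is exactly the claim.

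There is no substantial obstacle. The only point requiring care is the sign/convention bookkeeping in the expansion $\chi_{[\ell]}(\sigma(C)) = (-1)^{\overline\#(C)}\chi_{V(C)}(\sigma)$, i.e. correctly tracking the factor contributed by negated literals and applying the TRUE/FALSE $\leftrightarrow \mp 1$ convention consistently with the definition of $Z_\ell$; everything else is cancellation and regrouping.
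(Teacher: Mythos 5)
Your proposal is correct and follows essentially the same route as the paper: expand $\Gamma_\ell(\sigma,g)$ using the closed form $Z_{\ell,\sigma}(C) = \bigl(1-(-1)^{\overline\#(C)}\chi_{V(C)}(\sigma)\bigr)/|X_\ell|$, cancel the uniform term, and regroup the sum by the variable set $V(C_\ell)=A$. The only difference is that you explicitly verify the closed form and the sign bookkeeping, which the paper simply states in-line before the lemma.
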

\begin{proof}
\alequn{
\Gamma_\ell(\sigma,g) &= \E_{Z_{\ell,\sigma}}[g] - \E_{U_\ell}[g] \\
& = -\frac{1}{|X_\ell|} \sum_{C_\ell \in X_\ell} g(C_\ell) \cdot (-1)^{\overline\#(C_\ell)} \cdot \chi_{I(C_\ell)}(\sigma) \\
&= - \frac{1}{|X_\ell|} \sum_{A \subseteq [n], |A| = \ell} \left(\sum_{C_\ell \in X_\ell, I(C_\ell)=A} g(C_\ell) \cdot (-1)^{\overline\#(C_\ell)} \right) \cdot \chi_A(\sigma).
}
\end{proof}
For $\cS \subseteq \on^n$ we now bound $\E_{\sigma \sim \cS}[|\Gamma_\ell(\sigma,g)|]$ by exploiting its concentration properties as a degree-$\ell$ polynomial. To do this, we will need the following concentration bound for polynomials on $\on^n$. It can be easily derived from the hypercontractivity results of Bonami and Beckner \cite{bonami1970etude,beckner1975inequalities} as done for example in \cite{Janson:97,DinurFKO:06}.
\begin{lem}
\label{lem:polynomial-concentration}
Let $p(x)$ be a degree $\ell$ polynomial over $\on^n$. Then there is constant $c$ such that for all $t>0$,
$$\Pr_{x \sim \on^n}\left[ |p(x)| \geq t \|p\|_2 \right] \leq 2 \cdot \exp(-c \ell \cdot t^{2/\ell}),$$ where $\|p\|_2$ is defined as $(\E_{x \sim \on^n}[ p(x)^2])^{1/2}$.
\end{lem}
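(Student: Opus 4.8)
The plan is the standard moment method built on hypercontractivity. The one external ingredient is the $(2,q)$-hypercontractive inequality of Bonami and Beckner, in its low-degree form: for every polynomial $p$ of degree at most $\ell$ on $\on^n$ and every real $q\ge 2$,
$$\|p\|_q \;\le\; (q-1)^{\ell/2}\,\|p\|_2, \qquad \text{where}\ \ \|p\|_q=\left(\E_{x\sim\on^n}\big[\,|p(x)|^q\,\big]\right)^{1/q};$$
everything else in the argument is elementary.

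Next, Markov's inequality applied to the $q$-th power of $|p|$ gives, for any fixed $q\ge 2$,
\begin{align*}
\Pr_{x\sim\on^n}\!\big[\,|p(x)|\ge t\|p\|_2\,\big] &\;=\; \Pr_{x}\!\big[\,|p(x)|^q\ge t^q\|p\|_2^q\,\big] \;\le\; \frac{\|p\|_q^q}{t^q\|p\|_2^q} \\
&\;\le\; \left(\frac{(q-1)^{\ell/2}}{t}\right)^{\!q} \;\le\; \left(\frac{q^{\ell/2}}{t}\right)^{\!q}.
\end{align*}
It remains to choose $q$ well. Writing the last quantity as $\exp\!\big(q(\tfrac{\ell}{2}\ln q-\ln t)\big)$ and minimizing over $q$, the optimum is $q^\star=t^{2/\ell}/e$, at which $\tfrac{\ell}{2}\ln q^\star-\ln t=-\tfrac{\ell}{2}$, so the exponent equals $-\tfrac{\ell}{2e}\,t^{2/\ell}$. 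This gives $\Pr_x[\,|p(x)|\ge t\|p\|_2\,]\le\exp\!\big(-\tfrac{\ell}{2e}t^{2/\ell}\big)$ in the regime where $q^\star\ge 2$ is admissible, i.e.\ for $t\ge(2e)^{\ell/2}$. For $0<t<(2e)^{\ell/2}$ the asserted bound is vacuous: the left side is at most $1$, and $2\exp(-c\,\ell\,t^{2/\ell})\ge 2\exp(-2ec\,\ell)\ge 1$ once $c$ is small enough (e.g.\ $c=\min\{1/(2e),\,(\ln 2)/(2ek)\}$, a constant since $k\ge\ell$ is fixed throughout the paper). Combining the two ranges, and keeping the harmless factor $2$, yields the claim.

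The only step needing any care is the last one: performing the optimization of $(q^{\ell/2}/t)^q$ over $q$ and checking that the minimizer $q^\star$ respects the constraint $q^\star\ge 2$, so that the genuine tail estimate is asserted only for $t\gtrsim(2e)^{\ell/2}$ while the trivial bound $\Pr\le 1$ handles smaller $t$. Beyond that there is no real obstacle --- once hypercontractivity is in hand the lemma is one application of Markov plus a one-variable optimization --- which is why it can be quoted as an easily derived fact, following for instance \cite{Janson:97,DinurFKO:06}.
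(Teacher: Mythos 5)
Your proof is correct and is exactly the standard derivation that the paper invokes by citation rather than proving (it refers to the Bonami--Beckner hypercontractive inequality as used in the cited works of Janson and Dinur et al.): $(2,q)$-hypercontractivity, Markov applied to $|p|^q$, and optimization at $q\approx t^{2/\ell}/e$, giving $\exp(-\tfrac{\ell}{2e}t^{2/\ell})$ for $t\ge(2e)^{\ell/2}$. Your observation that covering all $t>0$ with the stated prefactor $2$ forces the constant $c$ to depend on $\ell$ (or on $k\ge\ell$) is a correct and careful reading of the statement, and it is harmless here since $k$ is fixed and every downstream bound is stated as $O_\ell(\cdot)$ or $O_k(\cdot)$.
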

In addition we will use the following simple way to convert strong concentration to a bound on expectation over subsets of assignments.
\begin{lem}
\label{lem:conc-to-exp}
Let $p(x)$ be a degree $\ell \geq 1$ polynomial over $\on^n$, let $\cS \subseteq \on^n$ be a set of assignments for which $d = 2^n/|\cS| \geq e^\ell$. Then
 $\E_{\sigma \sim \cS}[|p(\sigma)|] \leq 2 (\ln d/(c\ell))^{\ell/2} \cdot \|p\|_2$, where $c$ is the constant from Lemma \ref{lem:polynomial-concentration}.
\end{lem}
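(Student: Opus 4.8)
The plan is to derive the bound from the pointwise tail estimate of Lemma~\ref{lem:polynomial-concentration} by integrating it against the uniform measure restricted to $\cS$. By homogeneity we may assume $\|p\|_2 = 1$, so the goal becomes $\E_{\sigma\sim\cS}[|p(\sigma)|] \le 2(\ln d/(c\ell))^{\ell/2}$. The only property of $\cS$ I would use is that it occupies a $1/d$ fraction of $\on^n$, so that for every event $E \subseteq \on^n$ we have $\Pr_{\sigma\sim\cS}[\sigma\in E] \le d\cdot\Pr_{\sigma\sim\on^n}[\sigma\in E]$. Applying this with $E = \{|p(\sigma)|\ge t\}$ and invoking Lemma~\ref{lem:polynomial-concentration} gives, for every $t>0$,
\[
\Pr_{\sigma\sim\cS}\big[|p(\sigma)|\ge t\big] \ \le\ \min\!\Big\{1,\ 2d\exp\!\big(-c\ell\,t^{2/\ell}\big)\Big\}.
\]

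Next I would use the layer-cake identity $\E_{\sigma\sim\cS}[|p(\sigma)|] = \int_0^\infty \Pr_{\sigma\sim\cS}[|p(\sigma)|\ge t]\,dt$ and split the integral at the threshold $t_0$ at which the two arguments of the minimum agree, namely $c\ell\,t_0^{2/\ell} = \ln(2d)$, i.e.\ $t_0 = (\ln(2d)/(c\ell))^{\ell/2}$. On $[0,t_0]$ the integrand is at most $1$ and contributes at most $t_0$. On $[t_0,\infty)$ it is at most $2d\exp(-c\ell\,t^{2/\ell})$, and the substitution $w = c\ell\,t^{2/\ell}$ converts the remaining integral into $\tfrac{\ell}{2}(c\ell)^{-\ell/2}\int_{\ln(2d)}^{\infty}e^{-w}w^{\ell/2-1}\,dw$, a tail of the Gamma integral. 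Because the hypothesis $d\ge e^\ell$ says exactly that $\ln(2d) > \ell \ge \ell/2$, this incomplete Gamma value is in its rapidly decaying regime, and the elementary estimate $\int_x^\infty e^{-w}w^{a-1}\,dw \le 2x^{a-1}e^{-x}$ (valid for $x \ge 2(a-1)$) shows that the $[t_0,\infty)$ contribution is $O(\ell/\ln(2d))\cdot t_0 = O(t_0)$, in fact smaller than $t_0$. Summing the two contributions, and using $d\ge e^\ell$ once more to control $\ln(2d)$ in terms of $\ln d$, yields a bound of the claimed form, with the constant coming out to the stated $2$ once the computation is tracked carefully; undoing the normalization restores the $\|p\|_2$ factor.

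There is no real obstacle here --- the lemma is a routine ``concentration $\Rightarrow$ expectation over a subset'' statement and the work is just the bookkeeping of the Gamma tail. The one point worth checking is that the hypothesis $d\ge e^\ell$ is used essentially: it is precisely what makes the truncation level $t_0$ dominate the tail, and the conclusion genuinely fails for $d$ close to $1$ (there $\E_{\sigma\sim\cS}[|p(\sigma)|]$ is comparable to $\E_{\sigma\sim\on^n}[|p(\sigma)|] = \Theta(1)$ rather than tending to $0$). If one prefers to avoid the incomplete Gamma estimate, an equivalent route is to pass through moments: Lemma~\ref{lem:polynomial-concentration} gives $\|p\|_q \le O_\ell\!\big((q/(c\ell))^{\ell/2}\big)\|p\|_2$ for $q \gtrsim \ell$, and then $\E_{\sigma\sim\cS}[|p(\sigma)|] \le (\E_{\sigma\sim\cS}[|p(\sigma)|^q])^{1/q} \le d^{1/q}\|p\|_q$; choosing $q = \Theta(\ln d)$ makes $d^{1/q}$ a constant and reproduces the same bound up to constants.
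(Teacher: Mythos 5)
Your proposal is correct and follows essentially the same route as the paper: bound the restricted tail by $d$ times the uniform tail from Lemma~\ref{lem:polynomial-concentration}, truncate the layer-cake integral at roughly $(\ln d/(c\ell))^{\ell/2}$, and control the remaining incomplete-Gamma tail using $d \ge e^\ell$. The only differences (splitting at $\ln(2d)$ rather than $\ln d$, and an elementary Gamma-tail estimate in place of the paper's repeated integration by parts) are constant-factor bookkeeping, immaterial for the downstream $O_\ell(\cdot)$ use of the lemma.
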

\begin{proof}
Let $c_0 = \ell \cdot c$. By Lemma \ref{lem:polynomial-concentration} we have that for any $t > 0$,
$$\Pr_{x \sim \on^n}\left[ |p(x)| \geq t \|p\|_2 \right] \leq 2 \cdot \exp(-c_0 \cdot t^{2/\ell}) .$$
The set $\cS$ contains $1/d$ fraction of points in $\on^n$ and therefore
$$\Pr_{x \sim \cS}\left[ |p(x)| \geq t \|p\|_2 \right] \leq 2 \cdot d \cdot \exp(-c_0 \cdot t^{2/\ell}) .$$

For any random variable $Y$ and value $a \in \R$, $$\E[Y] \leq a   +  \int_a^\infty \Pr[Y \geq t] dt.$$ Therefore, for $Y = |p(\sigma)|/ \|p\|_2$ and $a = (\ln d/c_0)^{\ell/2}$ we obtain \alequn{\frac{\E_{\sigma \sim \cS}[|p(\sigma)|]}{\|p\|_2}  &\leq (\ln d/c_0)^{\ell/2} + \int_{(\ln d/c_0)^{\ell/2}}^\infty d \cdot e^{-c_0 t^{2/\ell}} dt =
(\ln d/c_0)^{\ell/2} +  \frac{\ell \cdot d}{2 \cdot c_0^{\ell/2}} \cdot \int_{\ln d}^\infty e^{-z} z^{\ell/2-1} dz \\ & =
(\ln d/c_0)^{\ell/2} +  \frac{\ell \cdot d}{2 \cdot c_0^{\ell/2}} \cdot \left.\left(-e^{-z} z^{\ell/2-1}\right)\right|_{\ln d}^\infty + (\ell/2-1)\int_{\ln d}^\infty e^{-z} z^{\ell/2-2} dz = \ldots \\
 & \leq  (\ln d/c_0)^{\ell/2} + \frac{\ell \cdot d}{2 \cdot c_0^{\ell/2}} \sum_{\ell'=1/2}^{\lceil\ell/2\rceil-1} \left.\left(-\frac{\lceil\ell/2\rceil!}{\ell'!}e^{-z} z^{\ell'}\right)\right|_{\ln d}^\infty \\
 &= (\ln d/c_0)^{\ell/2} + \frac{1}{2 \cdot c_0^{\ell/2}} \sum_{\ell'=0}^{\lceil\ell/2\rceil-1} \frac{\lceil\ell/2\rceil!}{\ell'!} (\ln d)^{\ell'} \leq 2 (\ln d/c_0)^{\ell/2},
}
where we used the condition $d \geq e^{\ell}$ to obtain the last inequality.
\end{proof}
We can now use the fact that $\Gamma_\ell(\sigma,g)$ is a degree-$\ell$ polynomial of $\sigma$ to prove the following lemma:
\begin{lem}
\label{lem:gamma-norm}
Let $\cS \subseteq \on^n$ be a set of assignments for which $d = 2^n/|\cS|$.
Then $$\E_{\sigma \sim \cS}[|\Gamma_\ell(\sigma,g)|] = O_\ell\left((\ln d)^{\ell/2} \cdot \|g\|_2 /\sqrt{|X_\ell|}\right),$$
where $\|g\|_2 = \sqrt{\E_{U_\ell}[g(C_\ell)^2]}$.
\end{lem}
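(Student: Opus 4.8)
The plan is to exploit the explicit formula from Lemma \ref{lem:gamma}, which exhibits $\sigma \mapsto \Gamma_\ell(\sigma,g)$ as a multilinear polynomial of degree $\ell$ on $\on^n$ whose Fourier support is concentrated entirely on subsets $A \subseteq [n]$ of size exactly $\ell$. I would then (i) compute the $L_2$ norm of this polynomial in terms of $\|g\|_2$ by Parseval plus a short Cauchy--Schwarz estimate, and (ii) feed the resulting norm bound into the concentration-to-expectation inequality of Lemma \ref{lem:conc-to-exp}.

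First I would read off the Fourier coefficients from Lemma \ref{lem:gamma}: the coefficient of $\chi_A$ (for $|A| = \ell$) is $-\frac{1}{|X_\ell|}\sum_{C_\ell \in X_\ell,\, V(C_\ell) = A} g(C_\ell)(-1)^{\overline\#(C_\ell)}$, and all other Fourier coefficients vanish. By Parseval over the uniform measure on $\on^n$, $\|\Gamma_\ell(\cdot,g)\|_2^2 = \frac{1}{|X_\ell|^2}\sum_{|A| = \ell}\bigl(\sum_{C_\ell:\, V(C_\ell) = A} g(C_\ell)(-1)^{\overline\#(C_\ell)}\bigr)^2$. For each variable set $A$ of size $\ell$ there are exactly $\ell!\,2^\ell$ clauses $C_\ell \in X_\ell$ with $V(C_\ell) = A$ (one for each ordering of $A$ and each sign pattern), so Cauchy--Schwarz gives $\bigl(\sum_{C_\ell:\, V(C_\ell)=A} g(C_\ell)(-1)^{\overline\#(C_\ell)}\bigr)^2 \le \ell!\,2^\ell \sum_{C_\ell:\, V(C_\ell)=A} g(C_\ell)^2$. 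Summing over all $A$ of size $\ell$ and using $\sum_{C_\ell \in X_\ell} g(C_\ell)^2 = |X_\ell|\cdot\|g\|_2^2$ yields $\|\Gamma_\ell(\cdot,g)\|_2^2 \le \frac{\ell!\,2^\ell}{|X_\ell|}\|g\|_2^2$, i.e.\ $\|\Gamma_\ell(\cdot,g)\|_2 = O_\ell\bigl(\|g\|_2/\sqrt{|X_\ell|}\bigr)$.

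Then I would apply Lemma \ref{lem:conc-to-exp} to the degree-$\ell$ polynomial $p(\sigma) = \Gamma_\ell(\sigma,g)$. Assuming $d \ge e^\ell$, that lemma gives $\E_{\sigma \sim \cS}[|\Gamma_\ell(\sigma,g)|] \le 2\bigl(\ln d/(c\ell)\bigr)^{\ell/2}\|\Gamma_\ell(\cdot,g)\|_2$, and substituting the norm bound from the previous step produces exactly $O_\ell\bigl((\ln d)^{\ell/2}\|g\|_2/\sqrt{|X_\ell|}\bigr)$. The complementary regime $d < e^\ell$ is not the one relevant to the application, and in any case it is handled directly: $\E_{\sigma \sim \cS}[|\Gamma_\ell(\sigma,g)|] \le \sqrt{d}\,\|\Gamma_\ell(\cdot,g)\|_2 = O_\ell\bigl(\|g\|_2/\sqrt{|X_\ell|}\bigr)$, which fits the stated bound once $d$ is bounded away from $1$ so that $(\ln d)^{\ell/2} = \Omega_\ell(1)$.

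The only delicate point is the bookkeeping in the middle step: correctly counting the $\ell!\,2^\ell$ clauses lying over a fixed variable set, and checking that the factor lost to Cauchy--Schwarz depends only on $\ell$ and not on $n$, so that it is absorbed by the $O_\ell(\cdot)$. Everything else is a routine application of Parseval and of the already-established concentration lemmas (\ref{lem:polynomial-concentration} and \ref{lem:conc-to-exp}).
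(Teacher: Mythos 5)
Your proposal is correct and follows essentially the same route as the paper's own proof: read the degree-$\ell$ Fourier expansion of $\Gamma_\ell(\cdot,g)$ from Lemma \ref{lem:gamma}, bound $\|\Gamma_\ell(\cdot,g)\|_2^2 \le \frac{2^\ell \ell!}{|X_\ell|}\|g\|_2^2$ via Parseval and Cauchy--Schwarz (with the same $2^\ell\ell!$ count of clauses over a fixed variable set), and then invoke Lemma \ref{lem:conc-to-exp}. Your extra remark about the regime $d < e^\ell$ (where Lemma \ref{lem:conc-to-exp} does not apply) is a minor point the paper glosses over, and your handling of it is fine.
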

\begin{proof}
By Lemma \ref{lem:conc-to-exp} we get that
$$\E_{\sigma \sim \cS}[|\Gamma_\ell(\sigma,g)|] \leq 2 (\ln d/(c\ell))^{\ell/2} \cdot \|\Gamma_{\ell,g}\|_2,$$
where $\Gamma_{\ell,g}(\sigma) \equiv \Gamma_\ell(\sigma,g)$.
Now, by Parseval's identity and Lemma \ref{lem:gamma} we get that
\alequn{\E_{\sigma \sim \on^n}\left[\Gamma_{\ell,g}(\sigma)^2\right] & = \sum_{A \subseteq [n]} \widehat{\Gamma_{\ell,g}}(A)^2 \\
& = \frac{1}{|X_\ell|^2} \sum_{A \subseteq [n], |A| = \ell} \left(\sum_{C_\ell \in X_\ell, I(C_\ell)=A} g(C_\ell) \cdot (-1)^{\overline\#(C_\ell)} \right)^2 \\
& \leq \frac{1}{|X_\ell|^2} \sum_{A \subseteq [n], |A| = \ell} \left|\{ C_\ell \cond  I(C_\ell)=A\}\right| \cdot \left( \sum_{C_\ell \in X_\ell, I(C_\ell)=A} g(C_\ell)^2  \right) \\
&= \frac{2^\ell \ell!}{|X_\ell|^2} \sum_{C_\ell \in X_\ell} g(C_\ell)^2 = \frac{2^\ell \ell!}{|X_\ell|} \E_{U_\ell}[g(C_\ell)^2].
}
\end{proof}
We are now ready to bound the discrimination norm.
\begin{lem}
\label{lem:bound-dc}
Let $Q$ be a clause distribution of  the distributional complexity $r= r(Q)$, let $\D' \subseteq \{Q_\sigma\}_{\sigma \in \on^n}$ be a set of distributions over clauses and $d = 2^n/|\D'|$. Then
$\dc(\D',U_k) = O_k\left((\ln d/n)^{r/2}\right)$.
\end{lem}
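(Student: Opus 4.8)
The plan is to push the discrimination norm $\dc(\D',U_k)$ through the decomposition of Lemma~\ref{lem:delta-decompose}, bound each resulting term with the polynomial-concentration estimate of Lemma~\ref{lem:gamma-norm}, and use the definition of distributional complexity to discard every term with $|S| < r$, so that all surviving terms carry a factor $n^{-\ell/2}$ with $\ell \ge r$.

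First I would fix a test function $h : X_k \to \R$ with $\|h\|_{U_k} = 1$ and let $\cS \subseteq \on^n$ be a set with $\D' = \{Q_\sigma : \sigma \in \cS\}$, so that $\E_{D' \sim \D'}[\,|\E_{D'}[h] - \E_{U_k}[h]|\,] = \E_{\sigma \sim \cS}[\,|\Delta(\sigma,h)|\,]$ and $2^n/|\cS| \le d$. Applying Lemma~\ref{lem:delta-decompose} and the triangle inequality,
\[
\E_{\sigma \sim \cS}[\,|\Delta(\sigma,h)|\,] \ \le\ 2^k \sum_{\emptyset \ne S \subseteq [k]} |\hat Q(S)| \cdot \E_{\sigma \sim \cS}\big[\,|\Gamma_{|S|}(\sigma, h_S)|\,\big].
\]
Since $Q$ has distributional complexity $r = r(Q)$, $\hat Q(S) = 0$ for all $S$ with $1 \le |S| \le r-1$, so only $S$ with $|S| \ge r$ contribute.

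Next I would bound each term with $\ell = |S| \ge r$ using Lemma~\ref{lem:gamma-norm}: $\E_{\sigma \sim \cS}[\,|\Gamma_\ell(\sigma, h_S)|\,] = O_\ell\big((\ln d)^{\ell/2}\, \|h_S\|_2 / \sqrt{|X_\ell|}\big)$. Two elementary facts finish the estimate. (i) The map $h \mapsto h_S$ is a conditional expectation: the number of $k$-clauses in $X_k$ restricting to a fixed $\ell$-clause on a fixed set of $\ell$ positions is $|X_k|/|X_\ell|$, independent of the $\ell$-clause and of which positions are chosen, so $h_S(C_\ell) = \frac{|X_\ell|}{|X_k|}\sum_{C :\, C_{|S} = C_\ell} h(C)$ is the average of $h$ over those extensions; by Jensen, $\|h_S\|_{U_\ell} \le \|h\|_{U_k} = 1$. (ii) $|X_\ell| = 2^\ell\, n(n-1)\cdots(n-\ell+1) = \Theta_\ell(n^\ell)$, hence $1/\sqrt{|X_\ell|} = \Theta_\ell(n^{-\ell/2})$. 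Together these give $\E_{\sigma \sim \cS}[\,|\Gamma_\ell(\sigma,h_S)|\,] = O_\ell\big((\ln d/n)^{\ell/2}\big)$.

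Finally I would assemble the bound: since $d = 2^n/|\D'| \le 2^n$ we have $\ln d < n$, so $(\ln d/n)^{\ell/2} \le (\ln d/n)^{r/2}$ for every $\ell \ge r$, i.e.\ the $\ell = r$ terms dominate. Using $|\hat Q(S)| \le 2^{-k}$ (whence $\sum_S |\hat Q(S)| \le 1$) and that there are at most $2^k$ subsets, the whole sum is $O_k\big((\ln d/n)^{r/2}\big)$ uniformly over $h$ with $\|h\|_{U_k} = 1$, and taking the supremum over such $h$ gives $\dc(\D', U_k) = O_k\big((\ln d/n)^{r/2}\big)$. The only genuinely delicate point I anticipate is fact~(i) --- checking that $|X_k|/|X_\ell|$ truly does not depend on the target $\ell$-clause or the position set, so that $h_S$ is a bona fide conditional expectation and the $L^2$ contraction applies; everything else is routine manipulation of the constants $2^k$, $|\hat Q(S)|$, and $|X_\ell| = \Theta_\ell(n^\ell)$. (Lemma~\ref{lem:gamma-norm} is used in the regime $d = \Omega_k(1)$ where its hypothesis $d \ge e^\ell$ holds; for bounded $d$ the claim follows at once from Cauchy--Schwarz, since $\E_{\sigma \sim \cS}[\,|\Gamma_\ell(\sigma,g)|\,] \le \sqrt{d}\, \big(\E_{\sigma \sim \on^n}[\Gamma_\ell(\sigma,g)^2]\big)^{1/2} = O_\ell\big(\sqrt{d}\, n^{-\ell/2}\big)$.)
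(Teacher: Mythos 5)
Your proposal is correct and follows essentially the same route as the paper's proof: decompose $\Delta(\sigma,h)$ via Lemma~\ref{lem:delta-decompose}, drop all $S$ with $|S|<r$ using the definition of distribution complexity, bound each remaining term by Lemma~\ref{lem:gamma-norm} together with the contraction $\|h_S\|_{U_\ell}\le\|h\|_{U_k}$ (your Jensen/conditional-expectation argument is the same fact the paper obtains by Cauchy--Schwarz and the count $|\{C : C_{|S}=C_\ell\}|=|X_k|/|X_\ell|$), and finish with $\ln d<n$ and $|X_\ell|=\Theta_\ell(n^\ell)$. Your parenthetical care about the $d\ge e^\ell$ hypothesis of the concentration step is a small extra precaution the paper leaves implicit, not a difference in approach.
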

\begin{proof}
Let $\cS = \{\sigma \cond Q_\sigma \in \D'\}$ and let $h:X_k \rightarrow \R$ be any function such that $\E_{U_k}[h^2] = 1$.
Let $\ell$ denote $|S|$. Using Lemma \ref{lem:delta-decompose} and the definition of $r$,
$$|\Delta(\sigma,h)| = 2^k \cdot \left|\sum_{S \subseteq [k]\setminus\{0\}} \hat{Q}(S) \cdot \Gamma_\ell(\sigma,h_S)\right| \leq 2^k \cdot  \sum_{S \subseteq [k], \ell=|S| \geq r} \left|\hat{Q}(S) \right| \cdot \left| \Gamma_\ell(\sigma,h_S)\right|.$$
Hence, by Lemma \ref{lem:gamma-norm} we get that,
\equ{ \E_{\sigma \sim \cS}[|\Delta(\sigma,h)|] \leq 2^k \cdot \sum_{S \subseteq [k],\ |S| \geq r} \left|\hat{Q}(S) \right| \cdot \E_{\sigma \sim \cS}\left[\left| \Gamma_\ell(\sigma,h_S)\right]\right| = O_k\left(\sum_{S \subseteq [k],\ |S| \geq r}\frac{(\ln d)^{\ell/2} \cdot \|h_S\|_2} {\sqrt{|X_\ell|}}\right) \label{eq:bound-kappa}}
By the definition of $h_S$,
\alequn{ \|h_S\|_2^2&= \E_{U_\ell}[h_S(C_\ell)^2]  \\
&= \frac{|X_\ell|^2}{|X_k|^2}\cdot\E_{U_\ell}\left[\left(\sum_{C\in X_k,\ C_{|S} = C_{\ell}} h(C)\right)^2\right] \\
& \leq  \frac{|X_\ell|^2}{|X_k|^2}\cdot \E_{U_\ell}\left[\frac{|X_k|}{|X_\ell|} \cdot \left(\sum_{C\in X_k,\ C_{|S} = C_{\ell}} h(C)^2\right)\right] \\
&  = \E_{U_k}[h(C)^2] = \|h\|_2^2 = 1,}
where we used Cauchy-Schwartz inequality together with the fact that for any $C_\ell$, $$\left|\{C \in X_k \cond  C_{|S} = C_{\ell} \}\right| = \frac{|X_k|}{|X_\ell|}. $$
By plugging this into eq.(\ref{eq:bound-kappa}) and using the fact that $\ln d < n$ we get,
$$ \E_{\sigma \sim \cS}[|\Delta(\sigma,h)|] = O_k\left(\sum_{\ell \geq r}\frac{(\ln d)^{\ell/2}} {\sqrt{2^\ell \cdot n!/(n-\ell)!}}\right) = O_k\left(\frac{(\ln d)^{r/2}}{n^{r/2}}\right).$$
By the definition of $\dc(\D',U_k)$ we obtain the claim.
\end{proof}

We are now ready to finish the proof of our bound on $\SDN$.
\begin{proof}(of Theorem \ref{thm:main-sdn-bound})
Our reference distribution is the uniform distribution $U_k$ and the set of distributions $\D = \D_{Q} = \{Q_\sigma\}_{\sigma \in \on^n}$ is the set of distributions for all possible assignments.
Let $\D' \subseteq \D$ be a set of distributions of size $|\D|/q$ and $\cS = \{\sigma \cond Q_\sigma \in \D'\}$. Then,  by Lemma \ref{lem:bound-dc}, we get $$\dc(\D',U_k) = O_k\left(\frac{(\ln q)^{r/2}}{n^{r/2}}\right).$$
By the definition of $\SDN$, this implies the claim.
\end{proof}

\section{Planted $k$-CSPs}
\label{sec:gen-planted}
While the focus of our presentation is on  planted satisfiability problems, the techniques can be applied to other models of planted constraint satisfaction. Here we describe describe how to apply our techniques to prove essentially identical lower bounds for the planted $k$-CSP problem.
Recall that our generalization of this planted $k$-CSP problem is defined by a function $P:\on^k \rightarrow [-1,1]$ and
we are given access to samples from a distribution $P_\sigma$, where $\sigma$ is a planted assignment in $\on^n$. A random sample from this distribution is a randomly and uniformly chosen ordered $k$-tuple of variables (without repetition) $x_{i_1},\ldots,x_{i_k}$ together with a randomly and independently chosen value $b\in \{-1,1\}$ such that $\E[b] = P(\sigma_{i_1},\ldots,\sigma_{i_k})$ (or $\Pr[b=1] = (1+ P(\sigma_{i_1},\ldots,\sigma_{i_k}))/2$). This captures the important special case when $P$ is a Boolean predicate.

Before going into the proof of the lower bound for this model we show two additional connections between this model and our planted satisfiability model. First we show that planted satisfiability can be easily reduced to the planted $k$-CSP above while preserving the complexity parameter (we remark that the reduction will always produce a non-boolean $P$ and hence requires our generalization). The second connection is that both of these models can be seen as special cases of a more general model of planted constraint satisfaction introduced by Abbe and Montanari \cite{AbbeM15}.

To describe the first reduction we start with some notation. Let $Y_k$ denote the set of all $k$-tuples of variables without repetition and let $X'_k = Y_k \times \{-1,1\}$. For a function $P:\on^k\rightarrow \R$ we use $r(P)$ to denote the degree of the lowest-degree non-zero Fourier coefficient of $P$ and refer to it as the complexity of $P$. For a clause $C=(l_1,\ldots,l_k)\in X_k$ we denote by $v(C)$ the $k$-tuple of variables in $C$ (in the same order).
For $j\in[k]$ let $s_j$ be the sign of literal $l_j$ (with $1$ meaning not negated and $-1$ meaning negated) and let $s(C) = s_1,\ldots,s_k$. We use ${\bf 1}_k$ to denote the $k$-dimensional vector $(1,1,\ldots,1)$.
\begin{lem}
\label{lem:reduce-ksat2kcsp}
There exists an algorithm that for every distribution $Q$ over $\on^k$ of complexity $r$, and any $\sigma \in \on^n$, given a random sample distributed according to $Q_\sigma$ outputs a random sample distributed according to $P_\sigma$, where $P \equiv Q - 2^{-k}$. Further, $r(Q) = r(P)$.
\end{lem}
\begin{proof}
Given a random clause $C$ the algorithm outputs the tuple of variables $v(C)$ together with a bit $b$ chosen according to the following rule. With probability $1/2$: if $s(C) = {\bf 1}_k$ then output 1, otherwise $-1$; with probability $1/2-2^{-k-1}$ output 1 and $-1$ with probability $2^{-k-1}$.

Let us analyze the resulting distribution. First we note that the output distribution is uniform over $Y_k$. This follows from the fact that for every $u\in Y_k$, $\sum_{v(C) =u} Q_\sigma(C) = \sum_{y\in\on^k} Q(y) = 1$. We now evaluate the expectation of the bit $b$ produced by our reduction as a function of $\sigma(u)$ (the values assigned by $\sigma$ to variables in $u$).  From the definition of $Q_\sigma$, for every $u\in Y_k$ and $z\in \on^k$,
\equ{\Pr_{C\sim Q_\sigma}[s(C) = z \cond v(C)=u] = Q(\sigma(C)) = Q(\sigma(u) \circ z), \label{eq:sign-distr}} where we use $\circ$ to denote the element-wise product of two vectors. In particular, $\Pr_{C\sim Q_\sigma}[s(C) = {\bf 1}_k \cond v(C)=u] = Q(\sigma(u))$. This means that $$\E[b] = \frac{1}{2}(Q(\sigma(u)) - (1-Q(\sigma(u)) + \left(\frac{1}{2} - 2^{-k-1}\right) - 2^{-k-1} = Q(\sigma(u)) - 2^{-k}.$$ This means that the reduction produces a random sample from $P_\sigma$ for $P(y) \equiv Q(y) - 2^{-k}$. Note that $\hat{P}(\emptyset) = 0$ and hence this reduction satisfies  $r(Q) = r(P)$.

\hide{
The function $w$ is implicit in the definition of the reduction: namely

\[
w(C) = \begin{cases}
1 \text{ with probability }  \forall j\in[k], s_j =1    \\
-\sgn(x_j) \text{ otherwise.}
\end{cases}
\]
}
\end{proof}

We now show how both of these models can be seen as special cases of the model in \cite{AbbeM15}. The model is specified by a collection of distributions $\{\Phi(\cdot \cond y)\}_{y\in\on^k}$ over some output alphabet $Z$. For a planted assignment $\sigma\in\on^n$ (their model allows a more general alphabet for each variable but $\on$ suffices to subsume the models discussed in this paper) the planted distribution $\Phi_\sigma$ is defined as follows. A random sample from this distribution is a randomly and uniformly chosen ordered $k$-tuple of variables $u \in Y_k$ together with value $z$ chosen randomly and independently according to $\Phi(\cdot \cond \sigma(u))$. We first observe that for any $P:\on^k \rightarrow[-1,1]$, setting $Z=\on$ and having $\Phi(b \cond y) = (1 + b\cdot P(y))/2$ recovers exactly the generalized Goldreich's planted $k$-CSP for function $P$.

To recover the planted satisfiability model for distribution $Q$, we let $Z=\on^k$ and then define $\Phi(z \cond y) = Q(y \circ z)$. Here the output alphabet represents the negation signs of variables. A $k$-tuple of variables $u\in Y_k$ with $k$ negation signs $z\in \on^k$ uniquely describes a clause $C\in X_k$ such that $v(C) =u$ and $s(C)=z$. Further, by Eqn.~\eqref{eq:sign-distr}, we get that $\Phi_\sigma$ for $\Phi$ defined as above is exactly $Q_\sigma$. It is not hard to see that the techniques in this work can also be applied to characterize the SQ complexity of solving planted $k$-CSPs in this more general model.

\subsection{Lower Bounds for Planted $k$-CSPs}
\label{sec:csp-extend}
We prove the analogue of Theorem \ref{thm:main-sdn-bound} for the planted $k$-CPS, which, in turn, immediately implies that the lower bounds stated in Theorem \ref{thm:lower-bound} apply to this problem verbatim. We first note that the reduction in Lemma \ref{lem:reduce-ksat2kcsp} implies the desired lower bound for all functions $P$ such that $P \equiv Q -2^{-k}$ for some distribution $Q$ over $\on^k$. Unfortunately, this is not sufficient to obtain a lower bound for all functions $P:\on^{-k} \rightarrow [-1,1]$. Indeed, this does not give a lower bound for any Boolean $P$. At the same time, we show that the reduction in Lemma \ref{lem:reduce-ksat2kcsp} can be used to reduce bounds on the discrimination norm of the planted $k$-CSP problem to the bounds on the discrimination norm for planted satisfiability that we gave in Section \ref{sec:plantstat-sdn}\footnote{A direct proof of this bound can be found in an earlier version of this work \cite[v5]{FeldmanPV:13}.}. We are not aware of similar reductions in the literature and our technique might be useful for relating the complexity of other problems for which standard reductions are not known.

We now give the formal details. Let $P:\on^k\rightarrow [-1,1]$ be a function on $k$-bits. Let $\D_P$ denote the set of all distributions $P_\sigma$, where $\sigma \in \on^n$ and $U'_k$ be the uniform distribution over $X'_k = Y_k \times \{-1,1\}$. Let $\B(\D_P,U'_k)$ denote the decision problem in which given samples from an unknown input distribution $D \in \D_P \cup \{U'_k\}$ the goal is to output $1$ if $D\in \D_P$ and 0 if $D=U'_k$. Our goal is to prove the following results.
\begin{thm}
\label{thm:sdn-bound-csp}
 For any function $P:\on^k\rightarrow [-1,1]$ of complexity $r$, there exist a constant $c > 0$ (that depends on $P$) such that for any $q \geq 1$, $$\SDN\left(\B(\D_P,U'_k), \frac{c (\log q)^{r/2}}{n^{r/2}}\right) \geq q.$$
\end{thm}
As in the case of Theorem \ref{thm:main-sdn-bound}, it suffices to prove the following analogue of  Lemma \ref{lem:bound-dc}.
\begin{lem}
\label{lem:bound-dc-csp}
  Let $P:\on^k\rightarrow [-1,1]$ be any function of complexity $r= r(P)$, let $\D' \subseteq \{P_\sigma\}_{\sigma \in \on^n}$ be a set of distributions over clauses and $d = 2^n/|\D'|$. Then
$\dc(\D',U'_k) = O_k\left((\ln d/n)^{r/2}\right)$.
\end{lem}
\begin{proof}
We first note that this bound does not say anything non-trivial for $r =0$ (and, indeed, the label distribution is biased in this case and can be distinguished from $U'_k$ using a constant number of samples). Therefore, from now on we assume that $r\geq 1$. Let $\cS = \{\sigma \cond P_\sigma \in \D'\}$ and let $h':X'_k \rightarrow \R$ be any function such that $\|h'\|_{U'_k} = 1$ and $\E_{\sigma \sim \cS}\left[\left| \E_{P_\sigma}[h']-\E_{U'_k}[h']\right| \right] = \dc(\D',U'_k)$.
We define a function $h$ on $X_k$ as follows. If for $C\in X_k$, $s(C) = {\bf 1}_k$ then $h(C) =h'(v(C),1)$, otherwise $h(C) =h'(v(C),-1)$. We now claim that for every $\sigma \in \on^n$,
\equ{\E_{P_\sigma}[h'] - \E_{U'_k}[h'] = 2^{2k-1} \cdot \left(\E_{Q_\sigma}[h] - \E_{U_k}[h]\right), \label{eq:reduce-delta}}
where $Q \equiv (P+1)/2^k$. Note that $Q$ defined in this way is a distribution since for all $y\in \on^k$, $Q(y) \geq 0$ and $\sum_{y\in \on^k} Q(y) = 2^k \cdot \hat{P}(\emptyset) +1 = 1$.

Distributions $Q_\sigma$,$P_\sigma$ $U_k$ and $U'_k$ are uniform over $k$-tuples of variables and therefore to prove eq.~\eqref{eq:reduce-delta}, it suffices to prove that for every $u \in Y_k$,
\alequ{\E_{(v,b)\sim P_\sigma}[h'(v,b) \cond v=u ] &- \E_{(v,b)\sim U'_k}[h'(v,b) \cond v=u ] \nonumber\\
&= 2^{2k-1} \cdot\left(\E_{C \sim Q_\sigma}[h(C) \cond v(C) = u] - \E_{C \sim U_k}[h(C) \cond v(C) = u] \right). \label{eq:reduce-delta-u}}
The left hand side of this equality is equal to $$h'(u,1) \cdot \frac{1+P(\sigma(u))}{2} + h'(u,-1) \cdot \frac{1-P(\sigma(u))}{2}  - \frac{1}{2} \cdot h'(u,1) +\frac{1}{2} \cdot h'(u,-1) = \frac{P(\sigma(u))\cdot (h'(u,1) - h'(u,-1))}{2} .$$
By equation \eqref{eq:sign-distr}, the right side of eq.~\ref{eq:reduce-delta-u} is equal to
\alequn{2^{2k-1} \cdot 2^{-k} \cdot &\sum_{C, v(C)=u} h(C) \cdot (Q(C)-1)) =\\
& = 2^{k-1} \cdot\left( h'(u,1) \cdot \frac{P(\sigma(u))}{2^k} +  \sum_{C, v(C)=u, s(C)\neq {\bf 1}_k} h'(u,-1) \cdot \frac{P(\sigma(u) \circ s(C))}{2^k} \right) \\ &= \frac{1}{2} \cdot\left( h'(u,1) \cdot P(\sigma(u)) -  h'(u,-1) \cdot P(\sigma(u)) \right) = \frac{P(\sigma(u))\cdot (h'(u,1) - h'(u,-1))}{2},
}
where we used the fact that $\sum_{y\in\on^k}P(y) = 2^k \cdot \hat{P}(\emptyset)=0$ to obtain the equality of the second line to the third one.

Now all we need to bound $\dc(\D',U'_k)$ is an upper bound on $\|h\|_{U_k}$.
First, note that by our assumption,
\equ{\E_{U'_k}[h'^2] = \frac{1}{|Y_k|} \cdot \sum_{u\in Y_k}\frac{1}{2} \left(h'(u,1)^2 + h'(u,-1)^2\right) = 1. \label{eq:old-norm-bound}}

For every $v \in Y_k$, $$\E_{C \sim U_k}[(h(C))^2 \cond v(C) = u] =  \frac{1}{2^k} \cdot \left(h'(u,1)^2 +  (2^k-1) h'(u,-1)^2\right) \geq \frac{1}{2^k}\left(h'(u,1)^2 +  h'(u,-1)^2\right).$$ Using eq.~\eqref{eq:old-norm-bound} we get that,
$$\|h\|_{U_k}^2 \geq \frac{1}{|Y_k|} \cdot \frac{1}{2^k} \cdot \sum_{u\in Y_k}  \left(h'(u,1)^2 +  h'(u,-1)^2\right) = \frac{1}{2^{k-1}}.$$

Using this bound on the norm and eq.~\eqref{eq:reduce-delta} we can now bound  $\dc(\D',U'_k)$ as follows. Let $\D'_Q \doteq \{Q_\sigma \cond \sigma \in \cS\}$.

\alequn{\dc(\D',U'_k) &= \E_{\sigma \sim \cS}\left[\left| \E_{P_\sigma}[h']-\E_{U'_k}[h']\right| \right] =  2^{2k-1} \cdot \E_{\sigma \sim \cS}\left[\left|\E_{Q_\sigma}[h] - \E_{U_k}[h]\right|\right] \\
&\leq2^{2k-1} \cdot \frac{\dc(\D'_Q,U_k)}{\|h\|_{U_k}} \leq 2^{2k-1 + k/2-1/2} \cdot \dc(\D'_Q,U_k) =  O_k\left((\ln d/n)^{r/2}\right),
}
where we used Lemma \ref{lem:bound-dc} to obtain the last bound.
\end{proof}

\hide{
The proof follows exactly the same approach. For a distribution $P_\sigma$ and query function $h:X'_k \rightarrow \R$, we denote by $\Delta(\sigma,h) = \E_{P_\sigma}[h] - \E_{U'_k}[h]$. Our goal is to first decompose  $\Delta(\sigma,h)$ into a linear combination of the differences in expectations of $h$ evaluated on XOR predicate distributions for $\sigma$. We will need the following notation
\begin{definition}
For $\ell \in [k]$,
\begin{itemize}
\item Let $Z_\ell$ be the $\ell$-XOR predicate over $\on^\ell$.
\item For $v \in Y_k$ and $S \subseteq [k]$ of size $\ell$ let $v_{|S}$ denote an $\ell$-tuple of variables in $Y_\ell$ consisting of variables in $v$ at positions with indices in $S$ (in the order of indices in $S$).
\item For $h:X'_k \rightarrow \R$, $S \subseteq [k]$ of size $\ell$, $b\in \on$ and $v_\ell \in Y_\ell$, let $$h_S(v_\ell,b) = \frac{|X'_\ell|}{|X'_k| }\sum_{v\in Y_k,\ v_{|S} = v_{\ell}} h(v,b).$$
\item For $g:X'_\ell \rightarrow \R$, let $\Gamma_\ell(\sigma,g) = \E_{Z_{\ell,\sigma}}[g] - \E_{U'_\ell}[g]$.
\end{itemize}
\end{definition}

We show that $\Delta(\sigma,h)$ (as a function of $h$) can be decomposed into a linear combination of $\Gamma_\ell(\sigma,h_S)$.
\begin{lem}
\label{lem:delta-decompose-csp}
For every $\sigma$ in $\on^n$ and $h:X'_k \rightarrow \R$,
$$\Delta(\sigma,h) = - 2^k \sum_{S \subseteq [k]} \hat{P}(S) \cdot \Gamma_\ell(\sigma,h_S) .$$
\end{lem}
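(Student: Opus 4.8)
The plan is to mirror the proof of Lemma~\ref{lem:delta-decompose} almost verbatim, with the literal-sign degrees of freedom of a $k$-clause replaced by the single appended label $b\in\{-1,1\}$ of a Goldreich sample. First I would write out the density of $P_\sigma$ with respect to $U'_k$: a sample $(C,b)$ has $C$ a uniform ordered $k$-tuple of (distinct) variables and $b$ tied to $P(\sigma(C))$, so this density is affine in $b$, with the constant term $1$ (it absorbs the predicate's bias and reproduces $\E_{U'_k}[h]$) and the $b$-coefficient a fixed multiple of $P(\sigma(C))$. Expanding $P$ in the Fourier basis on $\{\pm1\}^k$, $P(y)=\sum_{S\subseteq[k]}\hat P(S)\chi_S(y)$, turns $\Delta(\sigma,h)=\E_{P_\sigma}[h]-\E_{U'_k}[h]$ into a sum over $S$ of terms proportional to $\hat P(S)\cdot\E_{U'_k}\!\big[h(C,b)\,b\,\chi_S(\sigma(C))\big]$.

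The second step is the same combinatorial regrouping as in Lemma~\ref{lem:delta-decompose}. Since the $k$-CSP model has no literal negations, $\chi_S(\sigma(C))=\chi_{V(C_{|S})}(\sigma)$ depends on $C$ only through its restriction $C_{|S}$ to the coordinates in $S$, so I would sort the sum over $(C,b)$ by the value $C_\ell=C_{|S}\in Y_\ell$, where $\ell=|S|$; each $C_\ell$ has the same number $|X'_k|/|X'_\ell|$ of extensions, and the inner sum $\sum_{C:\,C_{|S}=C_\ell}h(C,b)$ equals $\tfrac{|X'_k|}{|X'_\ell|}h_S(C_\ell,b)$ by the definition of $h_S$. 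This collapses the $S$-term to a constant times $\hat P(S)\cdot\E_{U'_\ell}\!\big[h_S(C_\ell,b)\,b\,\chi_{V(C_\ell)}(\sigma)\big]$. Finally I would identify this expectation with $\Gamma_\ell(\sigma,h_S)$: the $\ell$-XOR predicate planted distribution $Z_{\ell,\sigma}$ is exactly the one whose density against $U'_\ell$ is $1+b\,\chi_{V(C_\ell)}(\sigma)$ (i.e.\ $b$ equals the parity of $\sigma$ on $C_\ell$), so $\Gamma_\ell(\sigma,h_S)=\E_{Z_{\ell,\sigma}}[h_S]-\E_{U'_\ell}[h_S]=\E_{U'_\ell}[h_S\,b\,\chi_{V(C_\ell)}(\sigma)]$. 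Summing over $S$ and matching constants to the normalization of $P_\sigma$ and of $\hat P$ yields $\Delta(\sigma,h)=-2^k\sum_{S\subseteq[k]}\hat P(S)\,\Gamma_\ell(\sigma,h_S)$. \emph{One genuine difference} from the $k$-SAT version is that the term $S=\emptyset$ is retained here: $h_\emptyset$ is still a nontrivial function of the label $b$, and $\Gamma_0(\sigma,h_\emptyset)=\E_{U'_0}[h_\emptyset\,b]$ records the correlation of $h$ with $b$, whereas in the $k$-SAT model the $S=\emptyset$ contribution is a constant that is precisely $\E_{U_k}[h]$ and cancels in $\E_{Q_\sigma}[h]-\E_{U_k}[h]$.

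The routine-but-delicate part — really the only place where care is needed — is the bookkeeping of normalizing constants (the overall $2^k$, the factor from the label bit, the ratio $|X'_k|/|X'_\ell|$, and the count of $k$-tuples extending a fixed $\ell$-tuple), together with writing the density of $P_\sigma$ against $U'_k$ in the right convention: unlike the $k$-SAT case, where the parity structure sits inside the literal signs, here it is split between the sampled variable tuple and the appended label $b$. Once Lemma~\ref{lem:delta-decompose-csp} is in hand, the rest of the proof of Theorem~\ref{thm:sdn-bound-csp} is literally the $k$-SAT argument: the analogues of Lemmas~\ref{lem:gamma}, \ref{lem:gamma-norm} and \ref{lem:bound-dc} (degree-$\ell$ polynomial concentration for $\Gamma_\ell(\cdot,h_S)$ as a function of $\sigma$, together with $\|h_S\|_{U'_\ell}\le\|h\|_{U'_k}$) bound $\dc(\D',U'_k)=O_k((\ln d/n)^{r/2})$, which plugged into the definition of $\SDN$ gives the stated bound.
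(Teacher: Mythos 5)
Your proposal follows essentially the same route as the paper's proof: write the density of $P_\sigma$ against $U'_k$ as $(1+b\,P(\sigma(C)))/|X'_k|$, Fourier-expand $P$, regroup the sum over $(C,b)$ by the restriction $C_{|S}$ (each $C_\ell \in Y_\ell$ having exactly $|X'_k|/|X'_\ell|$ extensions, which is precisely the normalization built into $h_S$), and identify each inner sum with $\Gamma_\ell(\sigma,h_S)$; your remark that the $S=\emptyset$ term survives here, recording the correlation of $h$ with the label $b$, is also consistent with the statement, which sums over all $S\subseteq[k]$. The one caveat is the final constant: under the paper's normalizations ($P_\sigma(C,b)=(1+bP(\sigma(C)))/|X'_k|$ and $\hat P(S)=2^{-k}\sum_y P(y)\chi_S(y)$) the computation yields $\Delta(\sigma,h)=\sum_{S}\hat P(S)\,\Gamma_\ell(\sigma,h_S)$ with no factor of $2^k$ (and a sign fixed by the convention for $Z_{\ell,\sigma}$), so your assertion that "matching constants" produces the $-2^k$ cannot actually be carried out; that factor appears to be carried over from the $k$-SAT version of the lemma, and the paper's own derivation likewise ends without it. The discrepancy is harmless downstream, since the later bounds only use $|\hat P(S)|\cdot|\Gamma_\ell(\sigma,h_S)|$ up to $O_k(1)$ factors when bounding the discrimination norm.
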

\begin{proof}
For a variable tuple $v$ we denote by $\sigma(v)$ the vector in $\on^k$ that gives evaluation of the variables in $v$ on $\sigma$. Also by our definitions, $$P_\sigma(v,b) = \frac{b \cdot P(\sigma(v))+1}{|X'_k|} = \frac{b \cdot P(\sigma(v))}{|X'_k|} + U'_k(v,b).$$
Now, using $\ell$ to denote $|S|$,
\alequ{\Delta(\sigma,h)&=\E_{P_\sigma}[h] - \E_{U'_k}[h] =  \sum_{(v,b) \in X'_k} h(v,b) \cdot (P_\sigma(v,b)- U'_k(v,b))= \frac{1}{ |X'_k|} \sum_{(v,b) \in X'_k} h(v,b) \cdot b \cdot P(\sigma(v)) \nonumber \\
&= \frac{1}{ |X'_k|}\sum_{S \subseteq [k]} \hat{P}(S)  \sum_{(v,b) \in X'_k} h(v,b) \cdot b \cdot \chi_S(\sigma(v))
\nonumber \\
&=
\frac{1}{ |X'_k|}\sum_{S \subseteq [k]} \hat{P}(S)  \sum_{v_\ell \in Y_\ell} \sum_{(v,b) \in X'_k, v_{|S} = v_\ell}  h(v,b) \cdot b \cdot \chi_S(\sigma(v))
 \label{eq:decomp-csp}
  }

Note that if $v_{|S} = v_\ell$ then $$\chi_S(\sigma(v))= \chi_{[\ell]}(\sigma(v_\ell)) = Z_\ell(\sigma(v_\ell)) . $$ Therefore,

$$\sum_{(v,b) \in X'_k, v_{|S} = v_\ell}  h(v,b) \cdot b \cdot \chi_S(\sigma(v)) = Z_\ell(\sigma(v_\ell)) \cdot \sum_{(v,b) \in X'_k, v_{|S} = v_\ell}  h(v,b) \cdot b  .$$
Plugging this into eq.(\ref{eq:decomp-csp}) we obtain
\alequn{\Delta(\sigma,h) &= \frac{1}{ |X'_k|}\sum_{S \subseteq [k]} \hat{P}(S)  \sum_{v_\ell \in Y_\ell}  Z_\ell(\sigma(v_\ell)) \cdot \sum_{(v,b) \in X'_k, v_{|S} = v_\ell}  h(v,b) \cdot b \\ &=
\sum_{S \subseteq [k]} \frac{\hat{P}(S)}{ |X'_\ell|} \sum_{v_\ell \in Y_\ell} \left[ Z_\ell(\sigma(v_\ell)) \cdot \sum_{b \in \on}  h_S(v_\ell,b) \cdot b \right]\\
&= \sum_{S \subseteq [k]} \frac{\hat{P}(S)}{ |X'_\ell|} \sum_{(v_\ell,b) \in X'_\ell} \left[Z_\ell(\sigma(v_\ell))\cdot b  \cdot  h_S(v_\ell,b) \right]\\
&=\sum_{S \subseteq [k]} \hat{P}(S) \left(\E_{U'_{\ell}}[h_S] - \E_{Z_{\ell,\sigma}}[h_S]\right) \\
&= \sum_{S \subseteq [k]} \hat{P}(S) \cdot \Gamma_\ell(\sigma,h_S)
}
\end{proof}

We now show that in this version $\Gamma_\ell(\sigma,h_S)$ is also a degree $\ell$ polynomial.
For a tuple of variables $v$ let $I(v)$ denote the set of indices of variables in $v$.
By definition, $Z_\ell(\sigma(v)) = \chi_{I(v)}(\sigma)$. This implies that $\Gamma_{\ell}(\sigma,h_S)$ can be represented as a linear combination of parities of length $\ell$.
\begin{lem}
\label{lem:gamma-csp}
For $g:X'_\ell \rightarrow \R$,
$$\Gamma_\ell(\sigma,g) = \frac{1}{|X'_\ell|} \sum_{A \subseteq [n], |A| = \ell} \left(\sum_{(v_\ell,b) \in X'_\ell, I(v_\ell)=A} g(v_\ell,b) \cdot b \right) \cdot \chi_A(\sigma) .$$
\end{lem}
\begin{proof}
\alequn{
\Gamma_\ell(\sigma,g) &= \E_{Z_{\ell,\sigma}}[g] - \E_{U'_\ell}[g] = \frac{1}{|X'_\ell|} \sum_{(v_\ell,b) \in X'_\ell} \left[Z_\ell(\sigma(v_\ell)) \cdot b  \cdot  g(v_\ell,b) \right]\\
&= - \frac{1}{|X'_\ell|} \sum_{A \subseteq [n], |A| = \ell} \left(\sum_{(v_\ell,b) \in X'_\ell,\ I(v_\ell)=A} g(v_\ell,b) \cdot b\right) \cdot \chi_A(\sigma)
}
\end{proof}

We can now use the fact that $\Gamma_\ell(\sigma,g)$ is a degree-$\ell$ polynomial of $\sigma$ to prove the following lemma:
\begin{lem}
\label{lem:gamma-norm-csp}
Let $\cS \subseteq \on^n$ be a set of assignments for which $d = 2^n/|\cS|$.
Then $$\E_{\sigma \sim \cS}[|\Gamma_\ell(\sigma,g)|] = O_\ell\left((\ln d)^{\ell/2} \cdot \|g\|_2 /\sqrt{|X'_\ell|}\right),$$
where $\|g\|_2 = \sqrt{\E_{U'_\ell}[g(v_\ell,b)^2]}$.
\end{lem}
\begin{proof}
By Lemma \ref{lem:conc-to-exp} we get that
$$\E_{\sigma \sim \cS}[|\Gamma_\ell(\sigma,g)|] \leq 2 (\ln d/(c\ell))^{\ell/2} \cdot \|\Gamma_{\ell,g}\|_2,$$
where $\Gamma_{\ell,g}(\sigma) \equiv \Gamma_\ell(\sigma,g)$.
Now, by Parseval's identity and Lemma \ref{lem:gamma-csp} we get that
\alequn{\E_{\sigma \sim \on^n}\left[\Gamma_{\ell,g}(\sigma)^2\right] & = \sum_{A \subseteq [n]} \widehat{\Gamma_{\ell,g}}(A)^2 \\
& = \frac{1}{|X'_\ell|^2} \sum_{A \subseteq [n], |A| = \ell, b\in\on} \left(\sum_{v_\ell \in Y_\ell, I(v_\ell)=A} g(v_\ell,b) \cdot b \right)^2 \\
& \leq \frac{1}{|X'_\ell|^2} \sum_{A \subseteq [n], |A| = \ell, b\in \on} \left|\{v_\ell \in Y_\ell, \cond  I(v_\ell)=A\}\right| \cdot \left( \sum_{v_\ell \in Y_\ell, I(v_\ell)=A} g(v_\ell,b)^2  \right) \\
&= \frac{\ell!}{|X'_\ell|^2} \sum_{(v_\ell,b) \in X'_\ell} g(v_\ell,b)^2 = \frac{\ell!}{|X'_\ell|} \E_{U_\ell}[g(v_\ell,b)^2].
}
\end{proof}

We proceed to bound the discrimination norm as before.
\begin{lem}
\label{lem:bound-dc-csp}
Let $P:\on^k\rightarrow [-1,1]$ be a function of complexity $r$, let $\D' \subseteq \{P_\sigma\}_{\sigma \in \on^n}$ be a set of distributions over variable $k$-tuples and $d = 2^n/|\D'|$. Then
$\dc(\D',U'_k) = O_k\left((\ln d/n)^{r/2}\right)$.
\end{lem}
\begin{proof}
Let $\cS = \{\sigma \cond Q_\sigma \in \D'\}$ and let $h:X'_k \rightarrow \R$ be any function such that $\E_{U'_k}[h^2] = 1$.
Let $\ell$ denote $|S|$. Using Lemma \ref{lem:delta-decompose-csp} and the definition of $r$,
$$|\Delta(\sigma,h)| = \left|\sum_{S \subseteq [k]} \hat{P}(S) \cdot \Gamma_\ell(\sigma,h_S)\right| \leq \sum_{S \subseteq [k], \ell=|S| \geq r} \left|\hat{P}(S) \right| \cdot \left| \Gamma_\ell(\sigma,h_S)\right|.$$
Hence, by Lemma \ref{lem:gamma-norm-csp} we get that,
\equ{ \E_{\sigma \sim \cS}[|\Delta(\sigma,h)|] \leq \sum_{S \subseteq [k],\ |S| \geq r} \left|\hat{P}(S) \right| \cdot \E_{\sigma \sim \cS}\left[\left| \Gamma_\ell(\sigma,h_S)\right]\right| = O_k\left(\sum_{S \subseteq [k],\ |S| \geq r}\frac{(\ln d)^{\ell/2} \cdot \|h_S\|_2} {\sqrt{|X'_\ell|}}\right) \label{eq:bound-kappa-csp}}
By the definition of $h_S$,
\alequn{ \|h_S\|_2^2&= \E_{U'_\ell}[h_S(v_\ell,b)^2]  \\
&= \frac{|X'_\ell|^2}{|X'_k|^2}\cdot\E_{U'_\ell}\left[\left(\sum_{v\in Y_k,\ v_{|S} = v_{\ell}} h(v,b)\right)^2\right] \\
& \leq  \frac{|X'_\ell|^2}{|X'_k|^2}\cdot \E_{U'_\ell}\left[\frac{|X'_k|}{|X'_\ell|} \cdot \left(\sum_{v\in Y_k,\ v_{|S} = v_{\ell}} h(v,b)^2\right)\right] \\
&  = \E_{U'_k}[h(v,b)^2] = \|h\|_2^2 = 1,}
where we used Cauchy-Schwartz inequality together with the fact that for any $v_\ell$, $$\left|\{v \in Y_k \cond  v_{|S} = v_{\ell} \}\right| = \frac{|Y_k|}{|Y_\ell|} = \frac{|X'_k|}{|X'_\ell|}. $$
By plugging this into eq.(\ref{eq:bound-kappa-csp}) and using the fact that $\ln d < n$ we get,
$$ \E_{\sigma \sim \cS}[|\Delta(\sigma,h)|] = O_k\left(\sum_{\ell \geq r}\frac{(\ln d)^{\ell/2}} {\sqrt{n!/(n-\ell)!}}\right) = O_k\left(\frac{(\ln d)^{r/2}}{n^{r/2}}\right).$$
By the definition of $\dc(\D',U'_k)$ we obtain the claim.
\end{proof}

We are now ready to finish the proof of our bound on $\SDN$.
\begin{proof}(of Theorem \ref{thm:sdn-bound-csp})
Our reference distribution is the uniform distribution $U'_k$ and the set of distributions $\D = \D_{P} = \{P_\sigma\}_{\sigma \in \on^n}$ is the set of distributions for all possible assignments.
Let $\D' \subseteq \D$ be a set of distributions of size $|\D|/q$ and $\cS = \{\sigma \cond P_\sigma \in \D'\}$. Then,  by Lemma \ref{lem:bound-dc-csp}, we get $$\dc(\D',U'_k) = O_k\left(\frac{(\ln q)^{r/2}}{n^{r/2}}\right).$$
By the definition of $\SDN$, this implies the claim.
\end{proof}
} 

\section{Lower Bounds using Statistical Dimension}
\label{sec:general-stat-bounds}
\subsection{Lower bound for $\VSTAT$}
We first prove an analogue of lower-bound for $\VSTAT$ from \cite{FeldmanGRVX:12} but using the statistical dimension based on discrimination norm instead of the average correlation. It is not hard to see that discrimination norm is upper-bounded by the square root of average correlation and therefore our result subsumes the one in \cite{FeldmanGRVX:12}. 
\begin{thm}
\label{thm:avgvstat-random-app-decision}
Let $X$ be a domain and $\B(\D,D)$ be a decision problem over a class of distributions $\D$ on $X$ and reference distribution $D$.
Let $d = \SDN(\B(\D,D),\kappa)$ and let $\D_D$ be a set of distributions for which the value $d$ is attained. Consider the following average-case version of the $\B(\D,D)$ problem: the input distribution $D'$ equals $D$ with probability $1/2$ and $D'$ equals a random uniform element of $\D_D$ with probability $1/2$. Any randomized statistical algorithm that solves $\B(\D,D)$ with success probability $\gamma > 1/2$ over the choice of $D'$ and randomness of the algorithm requires at least $(2\gamma - 1) d$ calls to $\VSTAT(1/(3\kappa^2))$.
\end{thm}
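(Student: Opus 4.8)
The plan is to follow the decision‑tree / adversarial‑oracle method for statistical query lower bounds, adapting the argument of \cite{feldman2013statistical} so that it is driven by the discrimination norm rather than by average correlations. First I would fix the internal randomness $\rho$ of the algorithm, so that it suffices to analyze, for every deterministic statistical algorithm $\A$ making at most $q$ queries to $\VSTAT(t)$ with $t=1/(3\kappa^2)$, the probability over the random choice of input distribution $D'$ that $\A$ outputs correctly when the oracle follows one particular strategy $O^\ast$. The strategy is: on query $h$ with true input distribution $D''$, return $\E_D[h]$ whenever that is a legal $\VSTAT(t)$ answer for $D''$ — i.e.\ $|\E_D[h]-\E_{D''}[h]|\le\max\{1/t,\sqrt{\E_{D''}[h](1-\E_{D''}[h])/t}\}$ — and return $\E_{D''}[h]$ otherwise. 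Since $O^\ast$ depends only on $(h,D'')$ and the fixed reference $D$, it is a single legitimate oracle strategy usable against every $\rho$. On input $D$ itself, $O^\ast$ always returns $\E_D[h]$, so $\A$ traces one fixed root‑to‑leaf ``canonical path'' $h_1,\dots,h_q$ and produces a fixed bit $b\in\{0,1\}$.

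The combinatorial core is the claim that for each query $h$ the set $A_h:=\{D'\in\D_D:\ |\E_D[h]-\E_{D'}[h]|>\max\{1/t,\sqrt{\E_{D'}[h](1-\E_{D'}[h])/t}\}\}$ has $|A_h|<|\D_D|/d$. Suppose not; then since $\D_D$ attains $\SDN(\B(\D,D),\kappa)=d$, Definition~\ref{def:sdima} gives $\dc(A_h,D)\le\kappa$. To contradict this I would use the elementary inequality: if $p,p'\in[0,1]$ and $|p-p'|>\max\{1/t,\sqrt{p'(1-p')/t}\}$, then $|p-p'|^2>p(1-p)/(3t)$; indeed $p(1-p)\le\min(p,1-p)\le\min(p',1-p')+|p-p'|\le 2p'(1-p')+|p-p'|<2t|p-p'|^2+|p-p'|<3t|p-p'|^2$, using that $x\mapsto\min(x,1-x)$ is $1$‑Lipschitz, that $p'(1-p')<t|p-p'|^2$, and that $|p-p'|>1/t$. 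With $t=1/(3\kappa^2)$ this reads $|\E_{D'}[h]-\E_D[h]|>\kappa\sqrt{\E_D[h](1-\E_D[h])}$ for every $D'\in A_h$, so the normalized test function $h^\ast:=(h-\E_D[h])/\sqrt{\E_D[h](1-\E_D[h])}$, which has $\|h^\ast\|_D=1$, witnesses $\E_{D'\sim A_h}[|\E_{D'}[h^\ast]-\E_D[h^\ast]|]>\kappa$, i.e.\ $\dc(A_h,D)>\kappa$, a contradiction. The degenerate queries with $\E_D[h]\in\{0,1\}$ are constant on $\mathrm{supp}(D)$ and can be discarded (or one assumes $D$ has full support, as in all our applications).

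Finally I would assemble the pieces. On input $D'\in\D_D$, as long as none of the queries $h_i$ on the canonical path puts $D'$ into $A_{h_i}$, the oracle $O^\ast$ returns exactly the values it returns on input $D$, so $\A$ traces the canonical path and outputs $b$. If $b=0$, then $\A$ is correct on $D$ but wrong on every such $D'$, hence the set of $D'\in\D_D$ on which $\A$ is correct is contained in $\bigcup_{i\le q}A_{h_i}$, of size $<q|\D_D|/d$; so $\Pr_{D'\sim\D_D}[\A\text{ correct}]<q/d$ and the overall success probability is at most $\tfrac12\cdot1+\tfrac12\cdot\tfrac{q}{d}$. If $b=1$, then $\A$ is wrong on input $D$, so the success probability is at most $\tfrac12\cdot0+\tfrac12\cdot1=\tfrac12\le\tfrac12+\tfrac{q}{2d}$. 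In either case, for every $\rho$ the success probability under $O^\ast$ is at most $\tfrac12+\tfrac{q}{2d}$; averaging over $\rho$ gives $\gamma\le\tfrac12+\tfrac{q}{2d}$, i.e.\ $q\ge(2\gamma-1)d$.

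I expect the main obstacle to be twofold. The delicate quantitative point is proving the $\VSTAT$‑tolerance inequality with exactly the constant that yields the clean threshold $\VSTAT(1/(3\kappa^2))$: this forces the centered, variance‑normalized test function $h^\ast$ (the naive choice $h/\sqrt{\E_D[h]}$ loses a constant factor near $p=1$) together with the Lipschitz comparison of $p(1-p)$ and $p'(1-p')$. The structural point is ensuring that the adversarial oracle is a single valid strategy — a function of the query and the true input distribution only — that works simultaneously against every fixing of the algorithm's randomness, and that the canonical‑path bookkeeping correctly charges the reference distribution $D$ at most $\tfrac12$ of the success probability regardless of the leaf label.
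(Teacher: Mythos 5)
Your proposal is correct and follows essentially the same argument as the paper's proof: fix the algorithm's randomness, run it against the answers $\E_D[h]$, bound each set $A_h$ of distributions violating the $\VSTAT(1/(3\kappa^2))$ tolerance by $|\D_D|/d$ via the discrimination norm (your centered test function $(h-\E_D[h])/\sqrt{\E_D[h](1-\E_D[h])}$ plays the role of the paper's $h_k/\sqrt{p_k}$ together with its $p_k\le 1/2$ symmetrization, yielding the same constant), and then count successes along the canonical path to get $q\ge(2\gamma-1)d$. Your explicit adversarial oracle $O^\ast$ and the separate treatment of the leaf label $b$ are presentational variants of the paper's ``successful for any valid responses'' bookkeeping, not a different route.
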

\begin{proof}
  We prove our lower bound for any deterministic statistical algorithm and the claim for randomized algorithms follows from the fact that the success probability of a randomized algorithm is just the expectation of its success probability for a random fixing of its  coins.

  Let $\A$ be a deterministic statistical algorithm that uses $q$ queries to $\VSTAT(1/(3\kappa^2))$ to solve $\B(\D,D)$ with probability $\gamma$ over a random choice of an input distribution described in the statement. Following an approach from \cite{Feldman:12jcss}, we simulate $\A$ by answering any query $h:X \rightarrow [0,1]$ of $\A$ with value $\E_D[h(x)]$. Let $h_1,h_2,\ldots,h_q$
  be the queries asked by $\A$ in this simulation and let $b$ be the
  output of $\A$. $\A$ is successful with probability $\gamma > 1/2$ and therefore $b=0$, that is $\A$ will certainly decide that the input distribution equals to $D$.

  Let the set $\D^+ \subseteq \D_D$ be the set of
  distributions on which $\A$ is successful (that is outputs $b=1$) and we denote these distributions
  by $\{D_1,D_2,\ldots,D_m\}$. We recall that, crucially, for $\A$ to be considered successful it needs to be successful for any valid responses of $\VSTAT$ to $\A$'s queries. We note that the success probability of $\A$ is $\frac{1}{2}+ \frac{1}{2} \frac{m}{|\D_D|}$ and therefore $m \geq (2\gamma -1)|\D_D|$.

  For every $k \leq q$, let  $A_k$ be the set of all distributions $D_i$ such
  that $$\left|\E_{D}[h_k(x)] - \E_{D_i}[h_k(x)]\right| > \tau_{i,k}
  \doteq \max\left\{\frac{1}{t},
    \sqrt{\frac{p_{i,k}(1-p_{i,k})}{t}}\right\},$$ where we use $t$ to
  denote $1/(3\kappa^2)$ and $p_{i,k}$ to denote
  $\E_{D_i}[h_k(x)].$ To prove the desired bound we first prove the
  following two claims:
  \begin{enumerate}
  \item $\sum_{k\leq q}|A_k| \geq m$;
  \item for every $k$, $|A_k| \leq |\D_D|/d$.
  \end{enumerate}
  Combining these two implies that $q \geq d \cdot m/|\D_D|$ and therefore $q \geq (2\gamma -1) d$ giving the desired lower bound.

  In the rest of the proof for conciseness we drop the subscript $D$
  from inner products and norms. To prove the first claim we assume, for the sake of contradiction, that there exists $D_i \not\in \cup_{k\leq q} A_k$. Then for every $k\leq q$, $|\E_{D}[h_k(x)] - \E_{D_i}[h_k(x)]| \leq \tau_{i,k}$. This implies that the replies of our simulation $\E_{D}[h_k(x)]$ are within $\tau_{i,k}$ of $\E_{D_i}[h_k(x)]$, in other words are valid responses. However we know that for these responses $\A$ outputs $b=0$ contradicting the condition that $D_i \in \D^+$.

To prove the second claim, suppose that for some $k \in [d]$, $|A_k| > |\D_D|/d$. Let $p_k = \E_{D}[h_k(x)]$ and assume that $p_k \leq 1/2$ (when $p_k>1/2$ we just replace $h_k$ by $1-h_k$ in the analysis below). We will next show upper and
  lower bounds on the following quantity
\equ{\Phi = \sum_{D_i \in A_k}\left[\left|\E_{D}[h_k(x)] - \E_{D_i}[h_k(x)]\right|\right] = \sum_{D_i \in A_k}|p_k - p_{i,k}|.\label{eq:define-phi}}

By our assumption for $D_i \in A_k$, $|p_{i,k}-p_k| > \tau_{i,k} = \max\{1/t, \sqrt{p_{i,k}(1-p_{i,k})/t}\}$. If $p_{i,k} \geq 2p_k/3$ then $$|p_k-p_{i,k}| > \sqrt{\frac{p_{i,k}(1-p_{i,k})}{t}} \geq \sqrt{\frac{\frac{2}{3}p_k \cdot \frac{1}{2}}{t}} = \sqrt{\frac{p_k}{3t}}.$$ Otherwise (when $p_{i,k} < 2p_k/3$),  $p_k - p_{i,k} > p_k - 2p_k/3 = p_k/3$.
We also know that $|p_{i,k}-p_k| > \tau_{i,k} \geq 1/t$ and therefore $|p_{i,k}-p_k| > \sqrt{\frac{p_k}{3t}}$.
Substituting this into eq.~(\ref{eq:define-phi}) we get that \equ{\Phi > |A_k| \cdot \sqrt{\frac{p_k}{3t}} = |A_k| \cdot \sqrt{p_k} \cdot \kappa. \label{eq:low-bound-phi}}

Now, by the definition of discrimination norm and its linearity we have that
$$\sum_{D_i \in A_k}\left[\left|\E_{D}[h_k(x)] - \E_{D_i}[h_k(x)]\right|\right] = |A_k| \cdot \E_{D' \sim A_k}\left[\left|\E_{D}[h_k(x)] - \E_{D'}[h_k(x)]\right|\right] \leq |A_k| \cdot \dc(A_k,D) \cdot \|h_k\|_2 .$$
We note that, $h_k$ is a $[0,1]$-valued function and therefore $\|h_k\|^2 = \E_D[h_k(x)^2] \leq \E_D[h_k(x)] = p_k$. Also by definition of $\SDN$, $\dc(A_k,D) \leq \kappa$. Therefore $\Phi \leq |A_k| \cdot \kappa \cdot \sqrt{p_k}$.
This contradicts the bound on $\Phi$ in eq.~(\ref{eq:low-bound-phi}) and hence finishes the proof of our claim.
\end{proof}

\subsection{Lower bounds for $\MVSTAT$ and $\MSAMPLE$}
\newcommand{\zL}{L_0}
We now describe the extension of our lower bound to $\MVSTAT$ and $\MSAMPLE(L)$ oracles. For simplicity we state them for the worst case search problems but all these results are based on a direct simulation of an oracle using a $\VSTAT$ oracle and therefore they equivalently apply to the average-case versions of the problem defined in Theorem \ref{thm:avgvstat-random-app-decision}.

Given the lower bound $\VSTAT$ we can obtain our lower bound for $\MVSTAT$ via the following simple simulation. For conciseness we use $\zL$ to denote $\{0,1,\ldots,L-1\}$.
\begin{thm}
\label{thm:simulate-mvstat}
Let $D$ be the input distribution over the domain $X$, $t , L > 0$ be integers.
For any multi-valued function $h:X \rightarrow \zL$ and any set $\cS$ of subsets of $ \zL$, $L$ queries to $\VSTAT(4L\cdot t)$ can be used to give a valid answer to query $h$ with set $\cS$ to $\MVSTAT(L,t)$.
\end{thm}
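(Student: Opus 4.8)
The plan is to answer the $\MVSTAT(L,t)$ query $(h,\cS)$ by running one $\VSTAT(4Lt)$ query per value of $h$. For $\ell\in\{0,\dots,L-1\}$ let $h_\ell: X\to\{0,1\}$ be the indicator $h_\ell(x)=\1[h(x)=\ell]$ and write $p_\ell=\E_D[h_\ell]=\Pr_D[h(x)=\ell]$. A call to $\VSTAT(4Lt)$ on $h_\ell$ returns a number $v_\ell$ with $|v_\ell-p_\ell|\le\tau_\ell:=\max\{\tfrac{1}{4Lt},\,\sqrt{p_\ell(1-p_\ell)/(4Lt)}\}$; doing this for all $\ell$ costs exactly $L$ queries to $\VSTAT(4Lt)$. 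I would then return the vector $v=(v_0,\dots,v_{L-1})$, first shifting all coordinates by a common additive constant so that $\sum_{\ell}v_\ell=1$ (possible since $\sum_\ell p_\ell=1$); the shift is needed only to handle sets of probability exceeding $1/2$, as explained below.

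To check that $v$ is a valid $\MVSTAT(L,t)$ answer, fix $Z\in\cS$. Since $p_Z=\sum_{\ell\in Z}p_\ell$, the triangle inequality gives $\bigl|\sum_{\ell\in Z}v_\ell-p_Z\bigr|\le\sum_{\ell\in Z}|v_\ell-p_\ell|\le\sum_{\ell\in Z}\tau_\ell$, so it suffices to show $\sum_{\ell\in Z}\tau_\ell\le\max\{\tfrac{1}{t},\sqrt{p_Z(1-p_Z)/t}\}$. Partition $Z=Z_1\cup Z_2$, where $Z_1=\{\ell\in Z:\tau_\ell=\tfrac{1}{4Lt}\}$. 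The indices in $Z_1$ contribute at most $|Z_1|\cdot\tfrac{1}{4Lt}\le\tfrac{1}{4t}$ because $|Z_1|\le L$. For $Z_2$, Cauchy--Schwarz together with $|Z_2|\le L$ and $\sum_{\ell\in Z}p_\ell(1-p_\ell)\le\sum_{\ell\in Z}p_\ell=p_Z$ gives $\sum_{\ell\in Z_2}\sqrt{p_\ell(1-p_\ell)/(4Lt)}\le\tfrac{1}{2\sqrt{Lt}}\sqrt{|Z_2|\sum_{\ell\in Z}p_\ell(1-p_\ell)}\le\tfrac{1}{2}\sqrt{p_Z/t}$. Hence $\sum_{\ell\in Z}\tau_\ell\le\tfrac{1}{4t}+\tfrac{1}{2}\sqrt{p_Z/t}$, and a two-line case split (according to whether $p_Z$ is above or below a constant multiple of $1/t$) shows this is $\le\max\{\tfrac{1}{t},\sqrt{p_Z(1-p_Z)/t}\}$ as long as $p_Z\le\tfrac{1}{2}$. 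For $p_Z>\tfrac{1}{2}$ I would instead use the complement: because $\sum_\ell v_\ell=1=\sum_\ell p_\ell$, we have $\sum_{\ell\in Z}v_\ell-p_Z=-\bigl(\sum_{\ell\notin Z}v_\ell-(1-p_Z)\bigr)$, and the bound of the previous sentence applied to $\{0,\dots,L-1\}\setminus Z$, which has probability $1-p_Z\le\tfrac{1}{2}$, is exactly $\max\{\tfrac{1}{t},\sqrt{(1-p_Z)p_Z/t}\}$ since $p_Z(1-p_Z)$ is invariant under $p_Z\mapsto 1-p_Z$.

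The whole theorem thus reduces to the elementary inequality $\sum_{\ell\in Z}\tau_\ell\le\tau_Z$, and this is where I expect essentially all of the work. The delicate point is that the $\VSTAT$ tolerance associated with the union $Z$ scales like $\sqrt{p_Z(1-p_Z)}$, whereas summing the per-index tolerances only naturally controls $\sqrt{\sum_{\ell}p_\ell(1-p_\ell)}\le\sqrt{p_Z}$; recovering the missing $\sqrt{1-p_Z}$ factor is precisely what forces the reduction to the smaller of $Z$ and $\{0,\dots,L-1\}\setminus Z$, and hence the coordinate shift that makes the returned vector sum to $1$ — one should re-examine that the per-index error bound $|v_\ell-p_\ell|\le\tau_\ell$ still behaves adequately after this shift, which is the main subtlety I would pay attention to. The factor $4L$ inside $\VSTAT(4Lt)$ is calibrated so that the crude bounds $|Z_1|,|Z_2|\le L$ absorb the number of summands; checking that the absolute constant $4$ (rather than some larger constant) suffices throughout the case split is a short finite computation.
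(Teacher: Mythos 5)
Your first two steps are the paper's proof: query the $L$ indicator functions $h_\ell$ with $\VSTAT(4Lt)$, use the triangle inequality, split off the $\frac{1}{4Lt}$ terms, and apply Cauchy--Schwarz. Your analysis of a set $Z$ with $p_Z\le 1/2$ on the \emph{raw} vector $(v_0,\dots,v_{L-1})$ is correct: $\sum_{\ell\in Z}\tau_\ell\le \frac{1}{4t}+\frac{1}{2}\sqrt{p_Z/t}\le\max\{\frac{1}{t},\sqrt{p_Z(1-p_Z)/t}\}$ once $1-p_Z\ge 1/2$. Where you diverge is at the end: the paper simply returns the raw $v_\ell$'s and closes its chain by replacing $\sum_{\ell\in Z}p_\ell(1-p_\ell)$ with $p_Z(1-p_Z)$, whereas you correctly observe that the bound one naturally gets this way is only $\sqrt{p_Z}$, and you try to recover the $(1-p_Z)$ factor for large $p_Z$ by complementation, which forces you to renormalize the output so that its coordinates sum to $1$.

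The genuine gap is that renormalization by a \emph{uniform additive shift} does not work, and this is not a detail that survives re-examination. The discrepancy $|1-\sum_\ell v_\ell|$ can be as large as $\sum_\ell\tau_\ell=\Theta(1/\sqrt t)$, so the shift changes $\sum_{\ell\in Z}v_\ell$ by up to $\frac{|Z|}{L}\cdot\Theta(1/\sqrt t)$, which overwhelms the tolerance $\max\{\frac1t,\sqrt{p_Z(1-p_Z)/t}\}$ whenever $Z$ contains a constant fraction of the $L$ values but carries little mass. Concretely, take $L$ even, put mass $2/L$ on each of the first $L/2$ values and mass $0$ on the rest, and let $Z$ be the last $L/2$ values, so $p_Z=0$ and the $\MVSTAT(L,t)$ tolerance for $Z$ is $\frac1t$. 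A legal $\VSTAT(4Lt)$ oracle may answer each of the first $L/2$ queries with $v_\ell=p_\ell+\tau_\ell$, where $\tau_\ell\approx\frac{1}{L\sqrt{2t}}$, and answer the remaining queries exactly; then $\sum_\ell v_\ell\approx 1+\frac{1}{2\sqrt{2t}}$, your shift subtracts about $\frac{1}{2L\sqrt{2t}}$ from every coordinate, and afterwards $\bigl|\sum_{\ell\in Z}v_\ell-p_Z\bigr|\approx\frac{1}{4\sqrt{2t}}\gg\frac1t$ for all moderately large $t$. So the shifted vector is not a valid $\MVSTAT(L,t)$ answer, and the failure already occurs for a set with $p_Z\le 1/2$, i.e., the shift also destroys the part of your argument that was correct. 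Any repair must distribute the correction in proportion to the masses (e.g., rescale $v_\ell\mapsto v_\ell/\sum_j v_j$, possibly at the cost of a larger constant inside $\VSTAT$) rather than equally across coordinates, so that low-probability sets absorb only a proportionally small share of the total discrepancy; as written, the proposal does not establish the theorem.
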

\begin{proof}
For $i \in \zL$ we define $h_i(x)$ as $h_i(x) = 1$ if $h(x) = i$ and $0$ otherwise.
Let $v_i$ be the response of $\VSTAT(4L\cdot t)$ on query $h_i$.
For any $Z \subseteq \zL$,
\alequn{\left|\sum_{\ell \in Z} v_\ell - p_Z \right| & \le \sum_{\ell \in Z} |v_i - p_i| \\
& \leq \sum_{\ell \in Z} \max \left\{\frac{1}{4Lt}, \sqrt{\frac{p_i(1-p_i)}{4Lt}}\right\} \\
& \leq \frac{|Z|}{4Lt} + \sum_{\ell \in Z} \sqrt{\frac{p_i(1-p_i)}{4Lt}} \\
& \leq \frac{|Z|}{4Lt} + \sqrt{|Z|} \cdot \sqrt{\frac{\sum_{\ell \in Z} p_i(1-p_i)}{4Lt}}\\
& \leq \frac{|Z|}{4Lt} + \sqrt{|Z|} \cdot \sqrt{\frac{ p_Z(1-p_Z)}{4Lt}}\\
& \leq \frac{1}{4t} + \sqrt{\frac{ p_Z(1-p_Z)}{4t}}\\
& \leq \max \left\{\frac{1}{t}, \sqrt{\frac{p_Z(1-p_Z)}{t}}\right\} ,}
where $p_Z = \Pr_D[h(x) \in Z]$.
\end{proof}

We now describe our lower bound for $\MSAMPLE(L)$ oracle.
\begin{thm}
\label{thm:avgsample-v}
  Let $\B(\D,D)$ be a decision problem. For $\kappa > 0$, let $d = \SDN(\B(\D,D),\kappa)$.
  Any (possibly randomized) statistical algorithm that solves $\B(\D,D)$ with probability $\gamma>1/2$ requires at least $m$ calls to $\MSAMPLE(L)$ for $$m  = \Omega\left(\frac{1}{L} \min\left\{d(2\gamma -1),\frac{\gamma^2}{\kappa^2} \right\}\right)\ .$$ In particular, any algorithm with success probability of at least $2/3$ requires at least $\Omega\left(\frac{1}{L} \cdot \min\{d,1/\kappa^2\}\right)$ samples from $\MSAMPLE(L)$.
\end{thm}
The proof of this result is based on the following simulation of $\MSAMPLE(L)$ using $\VSTAT$.
\begin{thm}
\label{th:unbiased-from-vstat}
Let $\Z$ be a search problem and let $\A$ be a (possibly randomized) statistical algorithm that solves $\Z$ with probability at least $\gamma$ using $m$ samples from $\MSAMPLE(L)$.  For any $\delta \in (0,1/2]$, there exists a statistical algorithm $\A'$ that uses at most $O(m \cdot L)$ queries to $\VSTAT(L \cdot m/\delta^2)$ and solves $\Z$ with probability at least $\gamma - \delta$.
\end{thm}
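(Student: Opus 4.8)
The plan is to have $\A'$ run $\A$ step by step, intercepting each of $\A$'s (at most) $m$ requests to $\MSAMPLE(L)$ and fabricating an answer out of $\VSTAT$ responses. Since a randomized statistical algorithm is a mixture of deterministic ones, it suffices to handle each fixing of $\A$'s coins, so assume $\A$ is deterministic. When $\A$ issues its $j$-th request with query function $h_j\colon X\to\{0,1,\dots,L-1\}$ (which may depend on the $j-1$ previously fabricated answers), $\A'$ poses the $L$ queries $h_{j,i}(x)=\1[h_j(x)=i]$ to $\VSTAT(t)$ with $t=c\,Lm/\delta^2$ for a suitable absolute constant $c$, obtaining values $v_0,\dots,v_{L-1}$. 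It then turns these into a probability vector $\tilde p^{(j)}$ on $\{0,\dots,L-1\}$ by clipping each $v_i$ to $[0,1]$, flooring it at $2/t$, and renormalizing; draws $a_j\sim\tilde p^{(j)}$; returns $a_j$ to $\A$; and finally outputs whatever $\A$ outputs. This uses $mL$ queries to $\VSTAT(cLm/\delta^2)$, as required, so everything reduces to controlling the error introduced by the fabrication.

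For that, I would compare in total variation the law $P$ of the transcript $(a_1,\dots,a_m)$ when $\A$ talks to a genuine $\MSAMPLE(L)$ oracle for $D$ with the law $Q$ produced by the simulation: once $\mathrm{TV}(P,Q)\le\delta$, then for every input distribution $\Pr[\A'\ \text{succeeds}]=\Pr_Q[\text{valid transcript}]\ge\Pr_P[\text{valid transcript}]-\delta\ge\gamma-\delta$. By the chain rule for KL divergence and Pinsker's inequality, $\mathrm{TV}(P,Q)^2\le\tfrac12\,\mathrm{KL}(P\,\|\,Q)=\tfrac12\sum_{j\le m}\E_{a_{<j}\sim P}[\,\mathrm{KL}(p^{(j)}\,\|\,\tilde p^{(j)})\,]$, where $p^{(j)}$ is the true conditional law of $a_j$, namely $p^{(j)}_i=\Pr_D[h_j(x)=i]$. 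Thus it is enough to prove the one-step bound $\mathrm{KL}(p^{(j)}\,\|\,\tilde p^{(j)})=O(L/t)=O(\delta^2/m)$ for every history; summing over the $m$ steps gives $\mathrm{KL}(P\,\|\,Q)=O(\delta^2)$, hence $\mathrm{TV}(P,Q)=O(\delta)$, and taking $c$ large enough makes it at most $\delta$.

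The one-step estimate is the crux, and the place where the variance-type tolerance of $\VSTAT$ is essential. Bound $\mathrm{KL}(p\,\|\,\tilde p)\le\chi^2(p\,\|\,\tilde p)=\sum_i(p_i-\tilde p_i)^2/\tilde p_i$ and split coordinates into \emph{heavy} ($p^{(j)}_i\ge 1/t$) and \emph{light} ($p^{(j)}_i<1/t$). For heavy $i$ the $\VSTAT(t)$ tolerance is $\sqrt{p^{(j)}_i(1-p^{(j)}_i)/t}\le\sqrt{p^{(j)}_i/t}$, and since by Cauchy--Schwarz $\|v-p^{(j)}\|_1\le L/t+\sqrt{L/t}=O(\delta/\sqrt m)$, the flooring and renormalization only perturb coordinates by a further $1+O(\delta/\sqrt m)$ relative factor; hence $\tilde p^{(j)}_i=\Theta(p^{(j)}_i)$ and each heavy term is $O(1/t)$, summing to $O(L/t)$. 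For light $i$ the floor forces $\tilde p^{(j)}_i=\Theta(1/t)$ while $p^{(j)}_i=O(1/t)$, so each of the at most $L$ light terms is again $O(1/t)$. The main obstacle is exactly these small-probability coordinates: there the additive $1/t$ part of the $\VSTAT$ tolerance dominates, relative errors (and naive KL) blow up, and flooring $\tilde p^{(j)}$ is what keeps the divergence finite and small; one should also note that an $\ell_1$-only analysis --- bounding each step's contribution by $\tfrac12\|v-p^{(j)}\|_1=O(\delta/\sqrt m)$ and summing --- would lose a $\sqrt m$ factor and force $t=\Omega(Lm^2/\delta^2)$, so passing through KL rather than $\ell_1$ is what yields the stated parameter. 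The remaining bookkeeping (flooring adds total mass $\le 2L/t$, the renormalization constants, the heavy/light split being well defined) is routine.
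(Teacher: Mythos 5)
Your proposal is correct, but it takes a genuinely different route from the paper's. The paper does not redo the simulation analysis at all: it first gives the rejection-sampling gadget of Lemma~\ref{lem:simulate-1-mstat} (due to Vondrak), which produces from $L+1$ calls to $\SAMPLE$ an exact sample of $h(x)$ conditioned on acceptance, with acceptance probability at least $1/(2e)$, so $m$ samples of $\MSAMPLE(L)$ are simulated by $O(mL)$ calls to $\SAMPLE$; it then invokes the known (and, as the paper notes, fairly technical) $L=2$ simulation of $\SAMPLE$ by $\VSTAT$ from \cite{feldman2013statistical} to conclude. You instead carry out the direct generalization the paper deliberately avoids: estimate the full $L$-valued histogram of each query with $L$ Boolean $\VSTAT$ queries, sample the fabricated answer from the clipped, $\Theta(1/t)$-floored and renormalized estimate, and control the $m$ adaptive steps by the KL chain rule plus Pinsker, the crux being the per-step bound $\mathrm{KL}(p^{(j)}\,\|\,\tilde p^{(j)})\le\chi^2(p^{(j)}\,\|\,\tilde p^{(j)})=O(L/t)$, which is exactly where the variance-sensitive tolerance of $\VSTAT$ and the flooring are needed (and which, as you observe, is what saves the factor of $m$ over a naive $\ell_1$ accounting). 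Your route is self-contained and makes transparent why $t=\Theta(Lm/\delta^2)$ suffices; the paper's route buys modularity and brevity by delegating the hard analysis to the $L=2$ case. Two minor bookkeeping points in your write-up: for heavy coordinates the renormalization by $Z=1+O(\sqrt{L/t})$ contributes $O(p_i L/t)$ to the $i$-th $\chi^2$ term rather than $O(1/t)$, although the total over all $i$ is still the required $O(L/t)$; and your oracle parameter carries an absolute constant, $\VSTAT(cLm/\delta^2)$, the same constant-factor slack that is implicit in the paper's own two-step derivation.
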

A special case of this theorem for $L=2$ is proved in \cite{FeldmanGRVX:12}. Their result is easy to generalize to the statement of Theorem \ref{th:unbiased-from-vstat} but is it fairly technical. Instead we describe a simple way to simulate $m$ samples of $\MSAMPLE(L)$ using $O(mL)$ samples from $\SAMPLE$. This simulation (together with the simulation of $\SAMPLE$ from \cite{FeldmanGRVX:12}) imply Theorem \ref{th:unbiased-from-vstat}. It also allows to easily relate the powers of these oracles. The simulation is based on the following lemma (proof by Jan Vondrak).
\begin{lem}
\label{lem:simulate-1-mstat}
Let $D$ be the input distribution over $X$ and let $h:X \rightarrow \zL$ be any function. Then using $L+1$ samples from $\SAMPLE$ it is possible to output a random variable $Y \in \zL \cup \{\bot\}$, such that
\begin{itemize}
\item $\Pr[Y \neq \bot] \geq 1/(2e)$,
\item for every $i \in \zL$, $\Pr[Y=i \cond Y\neq \bot] = p_i$.
\end{itemize}
\end{lem}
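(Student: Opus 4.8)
The plan is to use a rejection-sampling trick that turns a single unbiased sample $x\sim D$ together with an auxiliary coin into a draw from the distribution $(p_0,\dots,p_{L-1})$ on $\zL$, but without knowing the $p_i$'s — exactly the situation the $\SAMPLE$ oracle puts us in. First I would take one sample $x \sim D$ and compute $i = h(x) \in \zL$; this gives $i$ with probability exactly $p_i$, so conditionally it is already the right distribution, but of course a single query only realizes the randomness of one coordinate, not a fresh sample of $h$. The issue the lemma is really addressing is that we want to \emph{certify} the output with a known, distribution-independent acceptance probability $\geq 1/(2e)$, so that later this can be amplified and plugged into a reduction. So after drawing $i=h(x_0)$ from the first sample, I would draw $L$ further independent samples $x_1,\dots,x_L \sim D$ and use them to perform an accept/reject step: accept (output $Y=i$) with a probability that is engineered to be proportional to $1/p_i$ using the extra samples as an unbiased estimator, and otherwise output $Y=\bot$.

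Concretely, here is the mechanism I expect to use. Having fixed $i=h(x_0)$, note that $\mathbf 1[h(x_j)=i]$ over the fresh samples $x_1,\dots,x_L$ are i.i.d.\ Bernoulli$(p_i)$. The probability that none of the $L$ fresh samples hits value $i$ is $(1-p_i)^L$; using this as (the complement of) an acceptance rule gives an acceptance probability that depends on $p_i$ in a way that, after an additional independent fair-coin-type correction, can be made to equal $c/ \big( \text{something} \big)$ uniformly. The cleanest route: accept iff all $L$ fresh samples avoid value $i$ \emph{and} an independent $\mathrm{Bernoulli}(1/2)$ coin comes up heads is not quite right because $(1-p_i)^L$ is not proportional to $1/p_i$; instead I would have the algorithm, upon seeing $i$, accept with probability exactly $\tfrac{1}{2}(1-p_i)^{(1/p_i)-1}$-style quantity — but since $p_i$ is unknown we cannot evaluate that directly, which is the point of the $L$ extra samples: they let us sample the event ``a $\mathrm{Geom}(p_i)$ variable exceeds a threshold'' whose probability is $(1-p_i)^{\text{threshold}}$, and choosing the threshold appropriately (and using $L$ as the cap) yields an acceptance probability of the form $\lambda/p_i$ for a universal constant $\lambda$, valid whenever $p_i \geq 1/L$; coordinates with $p_i < 1/L$ are handled separately or absorbed into the $\bot$ event. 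Conditioned on acceptance, the output $Y=i$ then has probability proportional to $p_i \cdot (\lambda/p_i) = \lambda$, i.e.\ uniform weighting, which after renormalization is exactly $\Pr[Y=i\mid Y\neq\bot]=p_i$ — wait, one must be careful: we want it proportional to $p_i$, so the acceptance probability should be a \emph{constant} in $i$, not $\propto 1/p_i$. So the correct design is: accept with a probability that is bounded below by a universal constant and \emph{independent of $i$}, which the "avoid value $i$ for all $L$ samples" event is \emph{not}; hence the real construction rejects only rarely and in an $i$-independent way, e.g.\ accept iff an independent $\mathrm{Ber}(1/(2e))$ coin is heads, giving trivially $\Pr[Y\neq\bot]=1/(2e)$ and $\Pr[Y=i\mid Y\neq\bot]=p_i$. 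The role of the $L+1$ samples (rather than $2$) is then presumably to make the simulation robust when the later reduction needs the value computed from $h$ on a \emph{fresh} sample each of the $m$ times, i.e.\ to avoid reuse across the $m$ simulated $\MSAMPLE(L)$ calls.

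In short, I would prove the lemma by exhibiting the explicit sampler: draw $x_0,\dots,x_L \sim D$, set $i=h(x_0)$, flip an auxiliary coin with the universal bias to decide $\bot$ versus output, and argue (i) the $\bot$-probability is exactly the coin bias $\geq 1/(2e)$, hence distribution-independent, and (ii) conditioned on non-$\bot$, $Y=h(x_0)$ which has law $(p_i)_i$ by definition of $h$ and $D$. The main obstacle — and the reason the statement is credited to Vondr\'ak rather than being one line — is arranging the acceptance event so that it is simultaneously (a) computable from $\SAMPLE$ queries alone (no knowledge of $p_i$), (b) has an acceptance probability bounded below by an absolute constant $1/(2e)$ for \emph{every} input distribution $D$, and (c) is independent of which value $i$ was drawn, so the conditional law is undistorted; reconciling (b) and (c) is exactly where the $L$ extra samples and the $1/(2e)$ constant come from, and getting the constant right (rather than something that degrades with $L$ or with $\min_i p_i$) is the delicate part. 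Once the sampler and these three properties are in hand, the lemma follows immediately, and composing it $m$ times with a Chernoff bound over the $\bot$-events gives the reduction underlying Theorem~\ref{th:unbiased-from-vstat}.
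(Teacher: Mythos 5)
There is a genuine gap, and it is at the level of the model rather than the calculation: your sampler presupposes that you can draw raw samples $x_0,\dots,x_L \sim D$ and evaluate the $L$-valued function $h$ on them directly, at which point the lemma is indeed trivial (output $h(x_0)$ and reject with an independent $\mathrm{Ber}(1-1/(2e))$ coin). But the $\SAMPLE$ oracle does not give you samples: each call takes a \emph{Boolean} query function, draws a \emph{fresh} sample that you never see, and returns the single bit obtained by evaluating your query on it. You cannot "compute $i=h(x_0)$," and the $L+1$ budget is not there for "robustness across the $m$ simulated calls" --- it is there because recovering one draw from the law of $h(x)$ requires $L+1$ one-bit answers, each computed on a \emph{different} independent sample. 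None of the intermediate ideas in your write-up (acceptance probability $\propto 1/p_i$, geometric thresholds, the case split on $p_i < 1/L$) supplies the missing construction, and your final fallback does not type-check against the oracle, so properties (a)--(c) that you correctly identify as the crux are never established.

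The actual construction reconciles these constraints as follows. For each $i \in \zL$ query the indicator $h_i$ (where $h_i(x)=1$ iff $h(x)=i$) once; the answer is a $\mathrm{Ber}(p_i)$ bit on an independent sample, and you thin it with an independent fair coin to get $B_i \sim \mathrm{Ber}(p_i/2)$, independent across $i$. If the number of indices with $B_i=1$ is not exactly one, output $\bot$; otherwise, letting $j$ be the unique such index, make one more thinned query to $h_j$ and output $Y=j$ only if that bit is $0$ (probability $1-p_j/2$), else $\bot$. Then
$\Pr[Y=j] = \frac{p_j}{2}\bigl(1-\frac{p_j}{2}\bigr)\prod_{k\neq j}\bigl(1-\frac{p_k}{2}\bigr) = \frac{p_j}{2}\prod_{k}\bigl(1-\frac{p_k}{2}\bigr)$,
so the $j$-dependence of the acceptance event cancels exactly and $\Pr[Y=j\mid Y\neq\bot]=p_j$, while $\Pr[Y\neq\bot]=\frac12\prod_k\bigl(1-\frac{p_k}{2}\bigr)\ge \frac12 e^{-\sum_k p_k}=\frac{1}{2e}$ using $1-a\ge e^{-2a}$ for $a\in[0,1/2]$. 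This is where the constant $1/(2e)$ and the $L+1$ queries come from, and it is the step your proposal is missing.
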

\begin{proof}
$Y$ is defined as follows. For every $i \in \zL$ ask a sample for $h_i$ from $\SAMPLE$ and let $B_i$ be equal to the outcome with probability $1/2$ and $0$ with probability $1/2$ (independently). If the number of $B_i$'s that are equal to $1$ is different from 1 then $Y=\bot$. Otherwise let $j$ be the index such that $B_j = 1$. Ask a sample for $h_j$ from $\SAMPLE$ and let $B'_j$ be the outcome with probability $1/2$ and 0 with probability $1/2$. If $B'_j =0$ let $Y=j$, otherwise $Y=\bot$.
From the definition of $Y$, we obtain that for every $i \in \zL$,
$$\Pr[Y=i]=\frac{p_i}{2} \cdot \prod_{k\neq i} (1-\frac{p_k}{2}) \cdot (1-\frac{p_i}{2}) = \frac{p_i}{2} \cdot \prod_{k \in \zL} (1-\frac{p_k}{2}). $$
This implies that for every $i \in \zL$, $\Pr[Y=i \cond Y\neq \bot] = p_i$. Also
$$\Pr[Y \neq \bot] = \sum_{i \in \zL}\frac{p_i}{2} \cdot \prod_{i \in \zL} (1-\frac{p_i}{2}) \geq \frac{1}{2}\prod_{k \in \zL} e^{-p_i} =e^{-1}/2, $$
where we used that for $a\in [0,1/2]$, $(1-a)\leq e^{-2a}$.
\end{proof}
Given this lemma we can simulate $\MSAMPLE(L)$ by sampling $Y$ until $Y \neq \bot$. It is easy to see that simulating $m$ samples from $\MSAMPLE(L)$ will require at most $4e \cdot m(L+1)$ with probability at least $1-\delta$ for $\delta$ exponentially small in $m$.

We now combine Theorems \ref{thm:avgvstat-random-app-decision} and \ref{th:unbiased-from-vstat} to obtain the claimed lower bound for  statistical algorithms using $\MVSTAT$.
\begin{proof}[Proof of Theorem \ref{thm:avgsample-v}]
Assuming the existence of a statistical algorithm using less than $m$ samples we apply Theorem \ref{th:unbiased-from-vstat} for $\delta = \gamma/2 - 1/4$ to simulate the algorithm using $\VSTAT$. The bound on $m$ ensures that the resulting algorithm uses less than $\Omega{\left(d(2\gamma-1)\right)}$ queries to $\VSTAT(\frac{1}{3\kappa^2})$ and has success probability of at least $\gamma/2 + 1/4$. By substituting these parameters into Theorem \ref{thm:avgvstat-random-app-decision} we obtain a contradiction.
\end{proof}

Finally we state an immediate corollary of Theorems \ref{thm:avgvstat-random-app-decision}, \ref{thm:simulate-mvstat} and \ref{thm:avgsample-v} that applies to general search problems and generalizes Theorem \ref{thm:avgvstat-random}.
\begin{thm}
\label{thm:avgvstat-random-general}
  Let $\B(\D,D)$ be a decision problem. For $\kappa > 0$, let $d = \SDN(\B(\D,D),\kappa)$ and let $L \geq 2$ be an integer. Any randomized statistical algorithm that solves $\B(\D,D)$ with probability $\geq2/3$ requires either
  \begin{itemize}
  \item   $\Omega(d/L)$ calls to $\MVSTAT(L,1/(12 \cdot \kappa^2 \cdot L))$;
  \item   at least $m$ calls to $\MSAMPLE(L)$ for $m = \Omega\left(\min\left\{d,1/\kappa^2 \right\}/L\right)$.
  \end{itemize}
\end{thm}


\section{Algorithmic Bounds}
\label{sec:algsec}
In this section we prove Theorem \ref{thm:algo1}.  The algorithm is a variant of the subsampled power iteration from \cite{FeldmanPV14} that can be implemented statistically. We describe the algorithm for the planted satisfiability model, but it can be adapted to solve Goldreich's planted $k$-CSP by considering only the $k$-tuples of variables that the predicate $P$ evaluates to $1$ on the planted assignment $\sigma$.   

\subsection{Set-up}

Lemma 1 from \cite{FeldmanPV14} states that subsampling $r$ literals from a distribution  $Q_\sigma$ on $k$-clauses with distribution complexity $r$ and planted assignment $\sigma$ induces a parity distribution over clauses of length $r$, that is a distribution over $r$-clauses with planting function $Q^\delta: \{\pm 1\}^r \to \R^+$ of the form
$Q^\delta(x) = \del/2^{r}$ for $|x|$ even, $Q^\delta(x) = (2-\del)/2^{r}$ for $|x|$ odd,
for some $\del \in [0,2]$ , $\del \ne 1$, where $|x|$ is the number of $+1$'s in the vector $x$.  The set of $r$ literals to subsample from each clause is given by the set $S \subset \{ 1, \ldots, k \}$ with $\hat Q(S) \ne 0$.

From here on the distribution on clauses will be given by $Q_\sigma^{\delta}$, for $\delta \neq 1$ and planted assignment $\sigma$. For ease of analysis, we define $Q_{\sigma,p}$  as the distribution over $k$-clause formulas in which each possible $k$-clause with an even number of true literals under $\sigma$ appears independently in $Q_{\sigma,p}$ with probability $\del p$, and each clause with and odd number of true literals appears independently with probability $(2- \del )p$, for an overall clause density $p$.  We will be concerned with $p = \tilde \Theta(n^{-k/2})$. Note that it suffices to solve the algorithmic problem for this distribution instead of that of selecting exactly $m = \tilde \Theta (n^{k/2})$ clauses independently at random.  In particular, with probability $1- \exp( - \Theta(n))$, a sample from $Q_{\sigma, p}$ will contain at most $2p \cdot  \frac{2^k n!}{(n-k)!} = O(n^k p)$ clauses.

We present statistical algorithms to recover the partition of the $n$ variables into positive and negative literals.  We will recover the partition, which gives $\sigma$ up to a sign change.

The algorithm proceeds by constructing a  biadjacency matrix $M$ of size $N_1 \times N_2$ with
$N_1 = 2^{\lceil k/2 \rceil} \frac{n!}{(n - \lceil k/2 \rceil)!}$, $N_2 = 2^{\lfloor k/2 \rfloor} \frac{n!}{(n-\lfloor k/2 \rfloor)!}$. We set $N = \sqrt{N_1 N_2}$. For even $k$, we have $N_1 = N_2 = N$ and thus $M$ is a square matrix.  The rows of the matrix are indexed by ordered subsets $S_1, \ldots, S_{N_1}$ of $\lceil k/2 \rceil$ literals and columns by subsets $T_1, \ldots, T_{N_2}$ of $\lfloor k/2 \rfloor$ literals.  For a formula $\mathcal F$, we construct a matrix $\hat M(\mathcal F)$ as follows.  For each $k$-clause $(l_1, l_2, \dots, l_k)$ in $\mathcal F$, we put a $1$ in the entry of $\hat M$ whose row is indexed by the set $(l_1, \dots, l_{\lceil k/2 \rceil}) $ and column by the set $( l_{\lceil k/2 \rceil +1}, \dots, l_{k})$.

Let $M_{\sigma,p}$ denote the distribution  on random $N_1 \times N_2$ matrices induced by drawing a random formula according to $Q_{\sigma,p}$ and forming the associated matrix $ M(Q_{\sigma,p})$ as above.

For $k$ even, let $u \in \{ \pm 1 \}^N$ be the vector with a $+1$ entry in every coordinate indexed by subsets containing an even number of true literals under $\sigma$, and a $-1$ entry for every odd subset.  For $k$ odd,  define the analogous vectors $u_y \in  \{ \pm 1\}^{N_1}$ and $u_x \in \{ \pm 1 \}^{N_2}$, again with $+1$'s for even subsets and $-1$ for odd subsets.

The algorithm will apply a modified power iteration procedure with rounding to find $u$ or $u_x$ (up to a change of sign).  From these vectors the partition $\sigma$ into true and false literals can be determined by solving a system of linear equations.

For even $k$, the discrete power iteration begins by sampling a random vector $x^0 \in \{ \pm 1 \}^N$ and multiplying by a sample of $M_{\sigma,p}$.  We then randomly round each coordinate of $M_{\sigma,p} x^0$ to $\pm 1$ to get $x^1$, and then repeat, drawing a fresh sample from $M_{\sigma,p}$ at each step.  The rounding at each is probabilistic and depends on the value of each coordinate and the maximum value of all the coordinates.  The number of clauses used by the algorithm is the sum of the number of clauses used in each sampled matrix, or in other words, if we use $T$ samples from $M_{\sigma, p}$, our original formula needs density $T \cdot p$. \footnote{Obtaining $T$ (nearly) independent samples of $M_{\sigma,p}$ from one sample of $M_{\sigma, Tp}$ is a subtle issue and is addressed in \cite{FeldmanPV14}; for the purposes of the implementation by statistical oracles this is irrelevant and so we avoid the discussion here.}

For odd $k$, we begin with a random $x^ 0 \in \{\pm 1 \}^{N_2}$ and a random sample $M_{\sigma,p}$, then form $y^0$ by deterministically rounding $M_{\sigma,p} x^0$ to a vector with entries $-1, 0,$ or $+1$.  Then we form $x^1$ by taking a fresh sample of $M_{\sigma,p}$ and perform a randomized $\pm 1$ rounding of $M_{\sigma,p}^T y^0$, and repeat.  
  There is a final rounding step to find a $\pm 1$ vector that matches $u$ or $u_x$.

 In Section~\ref{sec:statImplement} we will prove that this algorithm can be implemented {\em statistically} in any of the following ways:
\begin{enumerate}
\item Using $O(n^{r/2} \log^2 n)$ calls to $\MSAMPLE(n^{\lceil r/2\rceil})$;
\item For even $r$: using $O(\log n)$ calls to $\MVSTAT(n^{r/2},n^{r/2} \log \log n)$;
\item For odd $r$: using $O(\log n)$ calls to $\MVSTAT(O(n^{\lceil r/2\rceil}), O(n^{r/2}\log n))$. 
\end{enumerate}

\subsection{Algorithm Discrete-Power-Iterate (even $k$).}

\begin{enumerate}
\item  Pick $x^0 \in \{ \pm 1 \}^N$ uniformly at random.
For $i = 1, \dots \log N$, repeat the following:
\begin{enumerate}
\item Draw a sample matrix $M \sim M_{\sigma,p}$.
\item Let $x = M x^{i-1} $.
\item Randomly round each coordinate of $x$ to $\pm 1$ to get $x^i$ as follows:
let
\[
x^i_j = \begin{cases}
\sgn(x_j) \text{ with probability }   \frac{1}{2} + \frac{ |x_j|  } {2 \max_j |x_j| }   \\
-\sgn(x_j) \text{ otherwise.}
\end{cases}
\]
\end{enumerate}
\item Let $x = M x^{\log N} $ and set $u^* = \sgn(x)$ by rounding each coordinate to its sign.
\item Output the solution by solving the system of parity equations defined by $u^*$.
\end{enumerate}

\begin{lem}
\label{oddnewlem}
If $p = \frac{K \log N}{(\del-1)^2 N}$ for a sufficiently large constant $K$, then with probability $1-o(1)$ the above algorithm returns the planted assignment.
\end{lem}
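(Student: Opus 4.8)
The plan is to reduce the whole analysis to tracking a single scalar, the correlation $\rho_i := \langle u, x^i\rangle / N$ of the $i$-th iterate with the target vector $u$, and to show that $|\rho_i|$ grows geometrically from its typical initial order $1/\sqrt N$ until it is bounded below by an absolute constant, after which one further multiplication recovers $u$ exactly up to a global sign. The starting point is the ``signal'' decomposition of the random matrix: writing $\Lambda$ for the $0/1$ matrix indicating which (row, column) pairs of half-clauses are variable-disjoint, a direct computation from the definition of $Q_{\sigma,p}$ gives $\E[M_{\sigma,p}] = p\,\Lambda + p(\delta-1)\,(uu^{\top})\odot\Lambda$, so that $u$ is, up to an $O(1/n)$ relative error coming from $\Lambda$ not being the all-ones matrix $J$, an eigenvector of $\E[M_{\sigma,p}]$ with eigenvalue $p(\delta-1)N$, while the all-ones vector $\mathbf 1$ is an eigenvector with the larger but uninformative eigenvalue $pN$. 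Using $\langle u,\mathbf 1\rangle = 0$, one must check (as in the analysis of the un-rounded subsampled iteration in \cite{feldman2014algorithm}) that the iterates stay essentially orthogonal to $\mathbf 1$, so that the dynamics are governed by the $u$-eigenspace and not hijacked by the trivial direction.

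The core is a one-step amplification lemma. Conditioned on $x^{i-1}$, with probability $1 - N^{-\omega(1)}$ over the \emph{fresh} sample $M\sim M_{\sigma,p}$, every coordinate of $x := Mx^{i-1}$ equals $p(\delta-1)\langle u,x^{i-1}\rangle\, u_j$ plus a mean-zero fluctuation of order $\sqrt{pN}$ (Bernstein, since the entries are independent Bernoullis), so $\max_j |x_j| = \Theta\!\big(\max\{\,p|\delta-1|\,|\langle u,x^{i-1}\rangle|,\ \sqrt{pN\log N}\,\}\big)$. The rounding rule yields $\E[x^i\mid x] = x/\max_j|x_j|$, hence $\E[\langle u,x^i\rangle\mid x] = \langle u,x\rangle/\max_j|x_j|$, and since $\langle u,x\rangle$ concentrates around $p(\delta-1)\langle u,x^{i-1}\rangle N$ this gives, in the noise-dominated range $|\rho_{i-1}|\lesssim 1/\sqrt K$, $\ \E[\rho_i\mid x^{i-1}] = \pm(1+o(1))\,\tfrac{\sqrt K}{c_1}\,\rho_{i-1}$ for an absolute constant $c_1$ (using $\sqrt{pN/\log N} = \sqrt K/|\delta-1|$ together with the constant in the maximum of $N$ near-Gaussian coordinates), and $\E[\rho_i\mid x^{i-1}]\to\pm1$ in the signal-dominated range. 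Given $x$, $\rho_i$ is a normalized sum of $N$ independent $\{\pm1\}$ variables, so Hoeffding gives $|\rho_i - \E[\rho_i\mid x]| = O(\sqrt{\log N/N})$ with probability $1 - N^{-\omega(1)}$. Taking $K$ a sufficiently large constant makes the growth factor $\sqrt K/c_1$ exceed, say, $2$, so $O(\log N)$ iterations drive $|\rho_i|$ from order $1/\sqrt N$ to a constant; one then union-bounds the $O(\log N)$ per-step failure events. For the final step, once $|\rho_{\log N}|$ is constant the per-coordinate signal-to-noise ratio of $Mx^{\log N}$ is $\Theta(\sqrt{K\log N})\gg\sqrt{\log N}$, so a Chernoff bound and a union bound over the $N$ coordinates give $\sgn\big((Mx^{\log N})_j\big) = \pm u_j$ simultaneously for all $j$ with probability $1 - N^{-\Omega(K)} = 1 - o(1)$; thus $u^* = \pm u$, and the parities of $(k/2)$-subsets encoded by $u^*$ determine (via the pairwise XORs of literal values) the planted assignment up to a global sign, recovered by Gaussian elimination.

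The step I expect to be the real obstacle is the ``burn-in'' regime where $|\rho_i|$ is of order $1/\sqrt N$: there the additive Hoeffding error $\Theta(\sqrt{\log N/N})$ in $\rho_i$ is a $\sqrt{\log N}$ factor \emph{larger} than $\rho_i$ itself, so the clean multiplicative one-step bound above cannot be asserted with high probability, and the scalar recursion behaves like a random walk with multiplicative drift that is repeatedly reset near the $1/\sqrt N$ noise floor. Handling this requires (i) a local-CLT / anti-concentration estimate showing $|\rho_i|$ never collapses to $o(1/\sqrt N)$ (so that each burn-in step fails only with probability $o(1/\log N)$, keeping the union bound $o(1)$), and (ii) an argument that within the $\log N$ budget the walk escapes the noise floor and enters the regime where geometric growth and the high-probability estimates take over. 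Verifying $|\langle\mathbf 1,x^i\rangle| = o(N)$ throughout — so the dominant-but-useless $pJ$-component never takes over — is the other delicate bookkeeping point. Both are exactly where this discretized, statistically implementable variant must reuse and adapt the analysis of \cite{feldman2014algorithm}; the genuinely new ingredient is that the $\{\pm1\}$-rounding injects a $\Theta(\sqrt N)$ fluctuation into every linear functional of the iterate, and one must confirm the $u$-signal continues to dominate it at each of the $O(\log N)$ steps.
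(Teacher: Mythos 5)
Your overall strategy (track $u\cdot x^i$, show one multiplication plus rounding amplifies it geometrically, finish with a final sign-rounding and Gaussian elimination) is the same as the paper's, and your one-step calculation in the signal-dominated regime matches the paper's Proposition on amplification: there the expected number of agreements after rounding is bounded via the concentration of $\max_j|z_j|$ and of $u\cdot z$, and Chebyshev gives doubling of $|u\cdot x^i|$ with failure probability $O(N/(u\cdot x^i)^2)$, which is summable once $|u\cdot x^i|\ge \sqrt N\log\log N$. (Also, the worry about the all-ones direction is moot: the iteration subtracts $p\sum_j x_j$, i.e.\ the $p(|X^+|-|X^-|)$ term in the coordinate decomposition, which centers out the uninformative component.)

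The genuine gap is exactly the ``burn-in'' regime you flag and then leave unresolved. Your proposed fix --- an anti-concentration argument showing $|\rho_i|$ never collapses to $o(1/\sqrt N)$ with per-step failure probability $o(1/\log N)$ --- is both unavailable and unnecessary: conditioned on the signal being of order $\sqrt N$, the rounding step adds a centered sum of $N$ independent signs, so the correlation does fall below any fixed multiple of $\sqrt N$ with \emph{constant} probability, and no $o(1/\log N)$ per-step guarantee of the kind you ask for can hold. The paper's resolution is different and is the key missing idea: call a step good if $|u\cdot x^{i+1}| \ge \max\{\sqrt N/10,\, 2|u\cdot x^i|\}$; using Berry--Esseen (at least $9N/10$ coordinates have rounding probabilities in $[2/5,3/5]$, so the agreement count $Z$ has variance at least $N/5$), each step is good with probability at least $1/2$, independently across steps since each uses a fresh matrix and fresh rounding coins. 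Within $\log N$ steps there is whp a run of $\log\log N$ consecutive good steps, and after any such run one has $|u\cdot x^i|\ge \sqrt N\log\log N$ deterministically --- the $\max$ in the definition means a good step restores the correlation to $\ge\sqrt N/10$ even if it had previously collapsed, so no ``no-collapse'' lemma is needed. From that threshold the doubling proposition takes over with geometrically decreasing error probabilities, and the final multiplication-plus-sign step recovers $\pm u$ by Chernoff and a union bound. Without this (or an equivalent) escape-from-the-noise-floor argument, your proof does not establish the lemma; as a minor additional point, your claimed $1-N^{-\omega(1)}$ per-step success at the threshold $|\rho_i|\approx\log\log N/\sqrt N$ is stronger than Hoeffding actually gives there, though that quantitative slip alone would not be fatal.
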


The main idea of the analysis is to keep track of the random process $(u \cdot x^i)$.  It starts at $\Theta(\sqrt N)$ with the initial randomly chosen vector $x^0$, and then after an initial phase, doubles on every successive step whp until it reaches $N/9$.

We will use the following Chernoff bound several times (see eg. Corollary A.1.14 in \cite{alon2011probabilistic}).
\begin{prop}
\label{ChernoffProp}
Let $X = \sum_{i=1}^m \xi_i Y_i$ and $Y = \sum_{i=1}^m Y_i$, where the $Y_i$'s are independent Bernoulli random variables and the $\xi_i$'s are fixed $\pm 1$ constants.  Then
\[ \Pr[ |X - \E[X]| \ge \alpha \E[Y] ] \le e^{ - \alpha^2\E[Y]/3}.  \]
\end{prop}

\begin{prop}
\label{evenauxprop}
If $|x^i \cdot u| = \beta N \ge \sqrt N  \log \log N$, then with probability $1- O(1/ N\beta^2 )$,
\[ |x^{i+1} \cdot u| \ge \min \left \{ \frac{ N}{9}, 2|x^i \cdot u|  \right \}. \]
\end{prop}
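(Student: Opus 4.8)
The plan is to fix all the randomness generated through iteration $i$ (in particular $x^i$, which after possibly replacing the starting vector $x^0$ by $-x^0$ we may assume satisfies $p(\del-1)(u\cdot x^i)>0$, and we write $u\cdot x^i=\beta N$), and then to expose in turn the fresh sample $M\sim M_{\sigma,p}$, which produces $x:=Mx^i$, followed by the independent coordinatewise rounding producing $x^{i+1}$. I would first record that the rounding rule gives $\E[x^{i+1}_j\mid x]=x_j/\|x\|_\infty$ exactly, hence
\[
\mu\ :=\ \E[x^{i+1}\cdot u\mid x]\ =\ \frac{x\cdot u}{\|x\|_\infty}.
\]
Thus it suffices to (i) lower bound $x\cdot u$ over the choice of $M$, (ii) upper bound $\|x\|_\infty$, and (iii) control the rounding fluctuation of $x^{i+1}\cdot u$ about $\mu$.

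For the signal term I would write $x\cdot u=\sum_{S,T}u_Sx^i_TM_{S,T}$ and use $\E[M_{S,T}]=p\bigl(1+(\del-1)u_Su_T\bigr)$ on valid (distinct-variable) pairs. The contribution proportional to $\sum_S u_S$ drops out because $\sum_S u_S=0$ — an easy induction on $k/2\ge1$, using that the two literals of any single variable contribute opposite signs — which also records the structural fact $\mathbf 1\perp u$ that makes the whole argument work. What is left is $\E_M[x\cdot u]=(1-o(1))\,p(\del-1)N\,(u\cdot x^i)=(1-o(1))\tfrac{K\log N}{\del-1}\beta N$, the $o(1)$ absorbing the no-repetition correction. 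Since $x\cdot u$ is then a $\pm1$-weighted sum of $\Theta(N^2)$ independent Bernoulli$(\Theta(p))$ variables, Proposition~\ref{ChernoffProp} gives $\Pr\bigl[x\cdot u<\tfrac12\E_M[x\cdot u]\bigr]\le e^{-\Omega(\beta^2KN\log N)}$, which is negligible next to $1/(N\beta^2)$.

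To bound $\|x\|_\infty$ I would fix a row $S$: then $(Mx^i)_S=\sum_T M_{S,T}x^i_T$ has mean $p\langle\mathbf1,x^i\rangle+p(\del-1)u_S(u\cdot x^i)$ and variance at most $2pN$. Here I would invoke, as an auxiliary invariant maintained by an analogous but easier argument (the $u$-direction never feeds the $\mathbf1$-direction, precisely because $\sum_S u_S=0$, so $\langle\mathbf1,x^i\rangle$ only performs a random-walk-type evolution), the bound $|\langle\mathbf1,x^i\rangle|=\tilde O(\sqrt N)$, so that the coordinate mean is at most $(\beta+o(1))\tfrac{K\log N}{|\del-1|}$. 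A second application of Proposition~\ref{ChernoffProp} bounds the probability that $(Mx^i)_S$ exceeds its mean by $t$ by $e^{-\Omega(t^2(\del-1)^2/(K\log N))}$; taking $t=c_1\sqrt K\,\log N/|\del-1|$ for a suitable absolute constant $c_1$ and a union bound over the $N$ rows, with probability $1-N^{-\Omega(K)}$,
\[
\|x\|_\infty\ \le\ \frac{K\log N}{|\del-1|}\Bigl(\beta+c_1K^{-1/2}+o(1)\Bigr).
\]

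Finally I would combine these on the intersection of the two good events to get $\mu=(x\cdot u)/\|x\|_\infty\ge (1-o(1))\beta N/(\beta+c_1K^{-1/2})$: if $\beta\ge c_1K^{-1/2}$ this forces $\mu\ge(1-o(1))N/2$, exceeding $\min\{N/9,2\beta N\}$ by $\Omega(N)$; if $\beta<c_1K^{-1/2}$ then $\mu\ge(1-o(1))\tfrac{\sqrt K}{2c_1}\beta N\ge 4\beta N$ once the clause-density constant $K$ is large enough, exceeding $\min\{N/9,2\beta N\}=2\beta N$ by at least $\beta N$. Conditioned on $x$, the quantity $x^{i+1}\cdot u$ is a sum of $N$ independent $\{-1,1\}$ terms, so by Proposition~\ref{ChernoffProp} it falls below its mean $\mu$ by more than $\mu-\min\{N/9,2\beta N\}$ with probability at most $e^{-\Omega((\mu-\min\{N/9,2\beta N\})^2/N)}\le e^{-\Omega(\beta^2N)}$ (using $\beta<1/9$). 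Summing the three failure probabilities and using $\beta^2N\ge(\log\log N)^2\to\infty$ — whence $e^{-\Omega(\beta^2N)}=O(1/(\beta^2N))$ while the other two are $N^{-\Omega(K)}$, far smaller — gives the claimed $O(1/(N\beta^2))$. The main obstacle will be exactly this near-threshold regime $\beta\approx(\log\log N)/\sqrt N$, where the target probability is only polynomially-in-$\log\log N$ small and there is essentially no slack: one must secure $\mu\ge 4\beta N$, which in turn forces $K$ to be large enough to dominate the noise floor $\Theta(\sqrt K\,\log N/|\del-1|)$ in $\|x\|_\infty$ coming from the maximum of $\sim N$ near-Gaussian coordinate fluctuations, all while keeping the $\ell_1$-type invariant $|\langle\mathbf1,x^i\rangle|=\tilde O(\sqrt N)$ in force so the $\mathbf1$-direction does not inflate $\|x\|_\infty$.
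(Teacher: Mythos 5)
There is a genuine gap, and it is exactly the point the paper's proof is engineered to avoid. You analyze the \emph{uncentered} product $x=Mx^i$, and your bound on $\|x\|_\infty$ (hence on $\mu=(x\cdot u)/\|x\|_\infty$) requires the auxiliary invariant $|\langle\mathbf{1},x^i\rangle|=\tilde O(\sqrt N)$, which you justify by saying that since $\sum_S u_S=0$ the $u$-direction never feeds the $\mathbf{1}$-direction, so $\langle\mathbf{1},x^i\rangle$ ``only performs a random-walk-type evolution.'' That justification is incorrect: the issue is not cross-feeding but self-amplification. The all-ones vector is (up to the no-repetition correction) an eigendirection of $\E[M]$ with eigenvalue $pN^*$, so $\E_M[\langle\mathbf{1},Mx^i\rangle]\approx pN^*\langle\mathbf{1},x^i\rangle$, and after dividing by $\|x\|_\infty\approx\Theta(\sqrt{pN\log N})=\Theta(\sqrt K\log N/|\del-1|)$ (in the regime where both components are small) the per-step multiplier of the $\mathbf{1}$-component is $\approx pN/\|x\|_\infty=\Theta(\sqrt K/|\del-1|)$, which is at least as large as the multiplier $|\del-1|pN/\|x\|_\infty$ of the signal component (larger by the factor $1/|\del-1|\ge 1$). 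Starting from its typical initial size $\Theta(\sqrt N)$, the $\mathbf{1}$-component therefore grows geometrically along with (indeed faster than) $u\cdot x^i$; the invariant is not maintained, and once $|\langle\mathbf{1},x^i\rangle|$ is comparable to $|\del-1|N$ one has $\|x\|_\infty\approx p|\langle\mathbf{1},x^i\rangle|$ and $\mu\approx|\del-1|N\beta N/|\langle\mathbf{1},x^i\rangle|<2\beta N$, so the doubling conclusion fails. Indeed, for the uncentered iteration the proposition cannot be proved from the hypothesis $|x^i\cdot u|=\beta N$ alone: a vector $x^i$ close to $\mathbf{1}$ with $|x^i\cdot u|=\beta N$ gives $\mu\le(1+o(1))|\del-1|\beta N<2\beta N$ with high probability.

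The paper sidesteps all of this by centering: in its proof the coordinate $z_j$ includes the term $-p(|X^+|-|X^-|)=-p\sum_j x^i_j$ (and the statistical implementation explicitly subtracts $p\sum_i x_i$), so each coordinate has mean $\pm(\del-1)p(u\cdot x^i)+O(n^{k/2-1}p)$ with no $\mathbf{1}$-contribution, and no control of $\langle\mathbf{1},x^i\rangle$ is ever needed; the rest of your outline (exact formula $\E[x^{i+1}_j\mid x]=x_j/\|x\|_\infty$, Chernoff for $x\cdot u$ and for $\max_j|x_j|$, then concentration of the rounding, with the rounding step supplying the dominant $e^{-\Omega(\beta^2N)}=O(1/(N\beta^2))$ failure probability) is sound and parallels the paper, which uses Chebyshev at the last step instead of Hoeffding. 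A secondary, fixable point: dropping the term $p\sum_T x^i_T\sum_{S\,\mathrm{disj}\,T}u_S$ from $\E_M[x\cdot u]$ cannot be justified merely by $\sum_S u_S=0$ plus an $o(1)$ for the no-repetition correction --- a crude bound $O(pN^2/n)$ on that correction exceeds the signal $p|\del-1|\beta N^2$ near the threshold $\beta\approx\log\log N/\sqrt N$ for $k\ge 6$; you should instead note that flipping one literal of $S$ preserves its variable set, so $\sum_{S\,\mathrm{disj}\,T}u_S=0$ exactly for every $T$.
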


\begin{proof}
We assume WLOG that $\del >1$ and $x^0\cdot u>0$ in what follows.  Let $U ^+ = \{ i : u_i = +1\}$, $U ^- = \{ i : u_i = -1\}$, $X ^+ = \{ i : x_i = +1\}$, and $X ^- = \{ i: x_i = +1\}$.  For a given $j\in [N]$, let $A_j = \{ i : \text {sets } i \text{ and } j \text{ share no variables} \}$.  We have $|A_j| = N^*$ for all $j$.

Let $z=Mx^i$. Note that the coordinates $z_1, \dots z_N$ are independent and if $j \in U^+$,
\begin{align*}
z_j \sim Z_{++} + Z_{-+} - Z_{+-} - Z_{--} -  p(|X^+| - |X^-| )
\end{align*}
where
\begin{align*}
Z_{++} &\sim Bin( |U^+ \cap X^+ \cap A_j|, \del p), \\
Z_{-+} &\sim Bin( |U^- \cap X^+ \cap A_j|, (2-\del) p), \\
Z_{+-} &\sim Bin( |U^+ \cap X^- \cap A_j|, \del p) ,\\
Z_{--} &\sim Bin( |U^- \cap X^- \cap A_j|, (2-\del) p).
\end{align*}
We can write a similar expression if $j \in U^-$, with the probabilities swapped. For $j \in U^+$ we calculate,
\begin{align*}
\E[z_j] &= \del p |U^+ \cap X^+ \cap A_j| +(2-\del)p |U^- \cap X^+ \cap A_j|  - \del p |U^+ \cap X^- \cap A_j| - (2-\del)p |U^- \cap X^- \cap A_j|   \\
& -  p \left(|X^+ | - |X^- |   \right)\\
&= \del p |U^+ \cap X^+| + (2-\del)p | U^- \cap X^+| - \del p |U^+ \cap X^-| \\
& - (2-\del)p |U^- \cap X^-|  - p \left(|X^+| - |X^-|   \right) + O((N-N^*)p)\\
&= (\del - 1)p (u \cdot x) + O( n^{k/2 -1} p).
\end{align*}
For $j \in U^-$ we get $\E[z_j] = (1- \del) p (u \cdot x) + O( n^{k/2 -1} p) $.

To apply Proposition \ref{ChernoffProp}, note that there are $N$ entries in each row of $M$, half of which  have probability $\del p$ of being $1$, and  the other half with probability $(2- \del)p$ of being a $1$, so $\E[Y] = N p$.  Using the proposition with $\alpha=(\del -1)/26$ and union bound, we have that with probability $1- o(N^{-1})$,
\begin{align}
\label{eq:maxjeven}
\max_j |z_j| &\le (\del -1) p \cdot (u \cdot x)  +\frac{(\del-1) Np}{26}  + O(n^{k/2 -1} p) \\
\nonumber
&\le (\del -1) p \cdot (u \cdot x)  +\frac{(\del-1) Np}{25}.
\end{align}
For each ordered set of $k/2$ literals indexed by $j \in U^+$, there is a set indexed by $j^\prime \in U^-$ that is identical except the first literal in $j^\prime$ is the negation of the first literal in $j$.  Note that $A_j = A_{j^\prime}$, and so we can calculate:
\begin{align*}
\E[z_j] - \E[z_{j^\prime}] &= 2 (\del - 1)p \left [ |U^+ \cap X^+ \cap A_j| + | U^-\cap X^- \cap A_j| -|U^- \cap X^+ \cap A_j| - | U^+\cap X^- \cap A_j|  \right ]
\end{align*}
which is simply  $2 (\del -1) p$ times the dot product of $u$ and $x$ restricted to the coordinates $A_j$.
  Summing over all $j \in [N]$ we get
\begin{align*}
\E[u \cdot z] &= |A_j| (\del - 1)p (u \cdot x) \\
&=   N^*(\del -1)p (u \cdot x)\\
&= N(\del -1)p (u \cdot x) (1+o(1)).
\end{align*}

Applying Proposition \ref{ChernoffProp} to $(u \cdot z)$ (with $\E[Y] = N^2p$, and $\alpha =\frac{(\del-1) (u \cdot x)}{2 N}$), we get
\begin{align}
\label{eq:doteven}
\Pr [ (u \cdot z) < N (\del - 1)p (u \cdot x) /2 ] &\le \exp \left [ - \frac{ N^2 p (\del -1)^2 (u \cdot x)^2 }{12 N^2   }   \right ]  \\
\nonumber
&= \exp \left[  - \frac{ K \log N  (u \cdot x)^2}{12N }\right  ] = o\left(\frac{1}{N}\right).
\end{align}

Now we round $z$ to a $\pm 1$ vector $x^\prime$ as above. Let $Z$ be the number of $j$'s so that $x^\prime_j = u_j$.  Then, conditioning on $u\cdot z$ and $\max |z_j|$ as above,
\begin{align*}
\E[Z] &= \sum_{j =1}^N \left(\frac{1}{2} + \frac{ u_j z_j } {2 \max |z_j| }    \right)\\
&= \frac{N}{2} + \frac{ u\cdot z  }{ 2 \max |z_j| } \\
&\ge \frac{N}{2} + \frac{ N (\del - 1)p (u \cdot x)   }{ 4 ( (\del -1) p(u \cdot x) + \frac{(\del - 1) Np}{25} )}.
\end{align*}
If $(\del -1) p (u \cdot x) \le \frac{(\del-1)Np}{25}$, we have
\begin{align*}
\E[Z] & \ge  \frac{N}{2} + \frac{ N (\del - 1)p (u \cdot x)   }{ 8 (\del-1) Np /25 } \ge \frac{N}{2}  + 3 (u \cdot x).
\end{align*}
If $(\del-1) p(u \cdot x) \ge  \frac{(\del-1)Np}{25}$, we have
\begin{align*}
\E[Z] & \ge  \frac{N}{2} + \frac{ N (\del - 1)p (u \cdot x)   }{ 8(\del-1) p (u \cdot x) } = \frac{5 N}{8}.
\end{align*}

Note that the variance of $Z$ is at most $N/4$.  From Chebyshev's inequality, with probability $1- O(N/ (u \cdot x)^2  )$, $Z \ge \min \left \{ \frac{N}{2}  + (u \cdot x)  , \frac{5 N}{9}  \right \}$, which completes the proof of Proposition \ref{evenauxprop}.
\end{proof}

\paragraph{Finishing:}
We consider two phases.  When $|u \cdot x| < \sqrt N \log \log N$,  with probability at least $1/2$, $|u\cdot x^{i+1}| \ge \max \{ \sqrt N/10  , 2|u \cdot x^i| \}$.  This follows from Berry-Esseen bounds in the Central Limit Theorem: $Z$ is the sum of $N$ independent $0,1$ random variables with different probabilities, and we know at least $9N/10$ have a probability between $2/5$ and $3/5$ (comparing a typical $|z_i|$ with $\max |z_i|$).  This shows the variance of $Z$ is at least $N/5$ when $u\cdot x$ is this small.

Now call a step `good' if $|u\cdot x^{i+1}| \ge \max \{ \sqrt N/10  , 2|u \cdot x^i| \}$.  Then in $\log  N$ steps whp there is at least one run of at least $\log \log N$ good steps, and after any such run we have $|u \cdot x| \ge \sqrt N \log \log N$ with certainty, completing the first phase.

 Once we have $|u \cdot x| \ge \sqrt N \log \log N$, then according to Proposition \ref{evenauxprop}, after $O( \log N )$ steps the value of $|x^u \cdot u|$ successively doubles with error probabilities that are geometrically decreasing, and so whp at the end we have a vector $x \in \{ \pm 1\}^{N}$ so that $|u \cdot x | \ge \frac{N}{9}$.  In the positive case, when we multiply $x$ once more by $M \sim M_{\sigma,p}$, we have for $i: u_i =1$, $\E[(Mx)_i] \ge (\del -1) p N /9 $.  Using Proposition \ref{ChernoffProp} (with $\E[Y] =Np$ and $\alpha =(\del-1)/10$),
\begin{align*}
\Pr[ (Mx)_i \le 0 ] & \le e^{-cNp} =o( N^{-2})
\end{align*}
Similarly, if $u_i = -1$, $\Pr[(Mx)_i \ge 0] = o(N^{-2})$, and thus whp rounding to the sign of $x$ will give us $u$ exactly.  The same holds in the negative case where we will get $-u$ exactly.

\subsection{Algorithm Discrete-Power-Iterate (odd $k$)}

\begin{enumerate}
\item Pick $x^0 \in \{ \pm 1 \}^{N_2}$ uniformly at random.  For $i = 1, \dots \log N$, repeat the following:
\begin{enumerate}
\item Draw a sample matrix $M \sim M_{\sigma,p}$.
\item Let $\overline{y}^i = M x^{i-1}$; round $\overline{y}^i$ to a vector $y^i$ with entries $0, +1,$ or $-1$, according to the sign of the coordinates.
\item Draw another sample  $M \sim M_{\sigma,p}$.
\item Let $\overline{x}^i = M^T y^i $. Randomly round each coordinate of $\overline{x}^i$ to $\pm 1$ as follows to get $x^i$:
\[
x^i_j = \begin{cases}
\sgn(\overline{x}_j) \text{ with probability }   \frac{1}{2} + \frac{ |\overline{x}_j|  } {2 \max_j |\overline{x}_j| }   \\
-\sgn(\overline{x}_j) \text{ otherwise.}
\end{cases}
\]
\end{enumerate}
\item Set $u^* = \sgn(x^{\log N})$ by rounding each coordinate to its sign.
\item Output the solution by solving the system of parity equations defined by $u^*$.
\end{enumerate}

\begin{lem}
\label{newoddlem}
Set $p = \frac{K \log N}{(\del-1)^2 N}$.  Then whp, the algorithm returns the planted assignment.
\end{lem}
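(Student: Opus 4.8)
The plan is to mirror the proof of Lemma~\ref{oddnewlem} (the even-$k$ case), tracking the two scalar random processes $u_x\cdot x^i$ and $u_y\cdot y^i$ through the iteration. The situation is a little cleaner than one might fear because each step of Discrete-Power-Iterate (odd $k$) uses two \emph{independent} fresh draws $M\sim M_{\sigma,p}$: conditioned on $x^{i-1}$ the coordinates of $\overline y^i=Mx^{i-1}$ are mutually independent (distinct row-prefixes index disjoint sets of clauses), and conditioned on $y^i$ the coordinates of $\overline x^i=M^Ty^i$ are mutually independent. The heart of the argument is a one-step amplification lemma analogous to Proposition~\ref{evenauxprop}: there is a constant $c>0$ such that if $\sqrt{N_2}\,\log\log N\le|u_x\cdot x^{i-1}|$, then, except with probability $o(1/\log N)$ and in fact with failure probability geometrically decreasing in $|u_x\cdot x^{i-1}|/\sqrt{N_2}$, we have $|u_x\cdot x^i|\ge\min\{cN_2,\;C\,|u_x\cdot x^{i-1}|\}$ for an absolute constant $C>1$; and once $|u_x\cdot x^{i-1}|\ge cN_2$ a single further step produces $x^i=\pm u_x$ exactly. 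Assume for simplicity $\del>1$.

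For the one-step analysis, write $\mu=(\del-1)p\,(u_x\cdot x^{i-1})$. The first half-step is treated as in the even case: one expands $\E[\overline y^i_j]=u_y(j)\,\mu+(\text{common bias})+O(n^{\lceil k/2\rceil-1}p)$, where the common bias is handled by the same centering device used in the even-case analysis and the $O(n^{\lceil k/2\rceil-1}p)$ term collects the contribution of clauses with repeated variables; pairing a row $j$ with the row $j'$ obtained by negating its first literal (so $u_y(j')=-u_y(j)$ and the two rows see the same set of non-overlapping columns) recovers the full dot product upon summation. We then sign-round each coordinate to a value in $\{-1,0,+1\}$ to get $y^i$. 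Because $N_2 p\to 0$, the coordinate $\overline y^i_j$ is a difference of two \emph{sparse} binomials, so a local-limit (Poisson-type) estimate gives $\E[\sgn(\overline y^i_j)]=(1+o(1))\bigl(u_y(j)\mu+\text{common bias}_j\bigr)$, whence, after the pairwise cancellation, $\E[u_y\cdot y^i]=(1+o(1))\,N_1\mu$, with variance $O(N_1 N_2 p)$. Proposition~\ref{ChernoffProp} with a union bound over the $\le N$ rows controls $\max_j|\overline y^i_j|$, and concentration of $u_y\cdot y^i$ around its mean follows since the signal $N_1|\mu|$ exceeds its own standard deviation whenever $|u_x\cdot x^{i-1}|$ is at or above the stated threshold. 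The second half-step $\overline x^i=M^Ty^i$ (fresh $M$) is the same computation in the other direction: $\E[\overline x^i_l]=u_x(l)(\del-1)p\,(u_y\cdot y^i)+(\text{bias})$, with each $\overline x^i_l$ a sum of $\Theta(N_1 N_2 p^2)=\Theta((\log N)^2)$ nonzero $\{-1,0,+1\}$-valued terms; the randomized $\pm1$ rounding gives $\E[u_x\cdot x^i]=(u_x\cdot\overline x^i)/\max_l|\overline x^i_l|$, and Chebyshev (variance at most $N_2/4$) yields the growth. The one quantitative input specific to odd $k$ is that, with $p=K\log N/((\del-1)^2N)$ and $N_1/N=\Theta(\sqrt n)=N/N_2$, the two half-steps compose to multiply the per-coordinate signal-to-noise ratio by a factor $\Theta(\sqrt{\log N})$, which is $\ge C>1$ for all large $n$ and, once $|u_x\cdot x^{i-1}|\ge cN_2$, is large enough ($\Theta(\log N)$) that every coordinate of $\overline x^i$ has the correct sign (each wrong with probability $N^{-\Omega(K)}$, union over $\le N$ coordinates), giving exact recovery.

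It remains to handle the initial phase and assemble. While $|u_x\cdot x^i|<\sqrt{N_2}\,\log\log N$, the Berry--Esseen/CLT argument from the ``Finishing'' paragraph of the even case applies: $Z=|\{l:x^{i+1}_l=u_x(l)\}|$ has variance $\Omega(N_2)$ and mean $\ge N_2/2+\Omega(|u_x\cdot x^i|)$, so with probability at least $1/2$ one has $|u_x\cdot x^{i+1}|\ge\max\{\Omega(\sqrt{N_2}),\,2|u_x\cdot x^i|\}$; since a uniformly random $x^0$ has $|u_x\cdot x^0|=\Theta(\sqrt{N_2})$ whp, within $\log N$ steps there is whp a run of $\log\log N$ consecutive such ``good'' steps, after which $|u_x\cdot x|\ge\sqrt{N_2}\,\log\log N$ with certainty. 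The amplification lemma then carries $|u_x\cdot x|$ past $cN_2$ within $O(\log N)$ more steps with geometrically summable failure probability, and one final step yields $x^{\log N}=\pm u_x$ exactly, so $u^{*}=\sgn(x^{\log N})=\pm u_x$. From $u^{*}$ one reads off the partition of the literals into the two parity classes; equivalently $u^{*}$ supplies the parities $\bigoplus_{i\in S}\sigma(i)$ for all $\lfloor k/2\rfloor$-subsets $S$, and Gaussian elimination over $\mathbb{F}_2$ on this system returns $\sigma$ up to a global sign. A union bound over the $O(\log N)$ iterations gives overall success probability $1-o(1)$.

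The main obstacle is the intermediate $\{-1,0,+1\}$ sign-rounding, which has no counterpart in the even case. Since $N_2 p\to 0$, most coordinates of $\overline y^i$ are exactly $0$, so the Gaussian-like heuristic ``$\overline y^i_j$ concentrates around $(\del-1)p(u_x\cdot x^{i-1})$'' that drives the even-case proof does not apply literally; one must instead argue, via a local-limit estimate in this sparse regime, that the (tiny) signal passes through the sign map essentially linearly, and that the dimensional asymmetry $N_1=\Theta(\sqrt n\,N_2)$ restores on the return multiplication by $M^T$ the polynomial factor apparently lost. The genuinely delicate point is that the per-coordinate common bias in $\E[\overline y^i_j]$ can exceed $|\mu|$ when $|u_x\cdot x^{i-1}|$ is only of order $\sqrt{N_2}$, so its cancellation against $u_y$ (clean at the level of expectations in the even case) must be shown to survive the nonlinear rounding up to a controllable error — handled by a uniform version of the local-limit estimate together with the symmetry of the planted instance (e.g.\ that the relevant bias sums over the two parity classes essentially vanish). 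Everything else is bookkeeping of the rectangular dimensions layered on top of the already-established even-case estimates.
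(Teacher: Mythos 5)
Your plan follows essentially the same route as the paper's proof: the paper establishes exactly your two-half-step amplification (Propositions \ref{oddxtoyprop} and \ref{oddytoxprop}), computing $\Pr[y_j\ge 1]$ and $\Pr[y_j\le -1]$ directly in the sparse regime (your ``local-limit'' step) so that the common bias cancels against the balanced $u_y$ in the aggregate, then using Proposition \ref{ChernoffProp} and the randomized rounding to gain a factor $\Theta(\sqrt{\log N})$ per iteration, and finishing with one more multiplication, exact sign recovery of $\pm u_x$, and Gaussian elimination. The only substantive difference is that your initial Berry--Esseen phase is unnecessary: the paper's amplification hypothesis is $|x^i\cdot u_x|\ge \sqrt{N_2}/\log\log N$ (not $\sqrt{N_2}\,\log\log N$), which a uniformly random $x^0$ already satisfies whp, so the $\sqrt{\log N}$-growth regime applies from the very first iteration.
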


We will keep track of the inner products $x^i \cdot u_x$ and $y^i \cdot u_y$ as the algorithm iterates.

\begin{prop}
\label{oddxtoyprop}
If $|x^i \cdot u_x| = \beta N_2 \ge \sqrt{N_2}/\log\log N$, then with probability $1- o(1/\log N)$,

\begin{enumerate}
\item $|y^{i+1} \cdot u_y| \ge  N \beta \log N $
\item $ \|y^{i+1}\|_1 = N^2 p(1  +o (1)) $.
\end{enumerate}
\end{prop}

\begin{proof}
Let $x \in \{ \pm 1 \}^{N_2}$ and $M \sim M_{\sigma,p}$.  Let $y = M x$.  We will assume $\del >1$ and $x \cdot u_x >0$ for simplicity.

If $j \in U_y^+$, then
\begin{align*}
\Pr[ y_j \ge1 ] &= \del p |X^+ \cap U_x^+| + (2-\del) p|X^+ \cap U_x^-| + O((N_2-N_2^*)p) + O(p^2 N_2) , \\
\text{and} \\
\Pr[ y_j \le-1 ] &=\del p |X^- \cap U_x^+| + (2-\del) p|X^- \cap U_x^-| +O((N_2-N_2^*)p) + O(p^2 N_2),
\end{align*}
and similarly for $j \in U_y^-$:
\begin{align*}
\Pr[ y_j \ge1 ] &= \del p |X^+ \cap U_x^-| + (2-\del) p|X^+ \cap U_x^+| + O((N_2-N_2^*)p) + O(p^2 N_2),  \\
\text{and} \\
\Pr[ y_j \le-1 ] &=\del p |X^- \cap U_x^-| + (2-\del) p|X^- \cap U_x^+| + O((N_2-N_2^*)p) + O(p^2 N_2).
\end{align*}
Rounding $y$ by the sign of each coordinate gives a $0, +1, -1$ vector $y^\prime$.  Let $Y^+$ be the set of $+1$ coordinates of $y^\prime$, and $Y^-$ the set of $-1$ coordinates.  An application of Proposition \ref{ChernoffProp} with $\E[Y] = N^2p $ and $\alpha =1/ \log N$ immediately gives $\| y^\prime \|_1 =N^2p (1+o(1)) $ with probability $1-o(N^{-1})$.

 We can write
\begin{align*}
y^\prime \cdot u_y &= | Y^+ \cap U_y^+| +  | Y^- \cap U_y^-| -| Y^+ \cap U_y^-| -| Y^- \cap U_y^+|,
\end{align*}
and compute
\begin{align*}
\E[y^\prime \cdot u_y] &= \frac{N_1^*}{2} \left[ (2 \del -2) (|X^+ \cap U_x^+| + |X^- \cap U_x^-|  -|X^+ \cap U_x^-| - |X^- \cap U_x^+|  )  \right ] + O(N_1 N_2 p^2 )   \\
&= N_1 p (\del -1) (x \cdot u_x)(1+o(1)). 
\end{align*}
Another application of Proposition \ref{ChernoffProp} with $\E[Y] =N_1 N_2 p$ and $\alpha =\frac{ (\del-1) (x \cdot u_x) }{ 2 N_2}$ shows that with probability $1- o(N^{-2})$,
\begin{align}
y^\prime \cdot u_y &\ge N_1 p (\del -1) (x \cdot u_x)/2  = \frac{N \beta C \log N }{2 (\del-1)} \ge N \beta \log N . \label{333}
\end{align}
 \end{proof}

 \begin{prop}
\label{oddytoxprop}
If $|y^i \cdot u_y| = \gamma N \ge \sqrt{N_1} \log N/ \log \log N$ with $\|y\|_1 =N^2p (1+o(1)) $, then with probability $1- o(1/\log N)$,
\[ |x^{i} \cdot u_x| \ge \min \left \{ \frac{ N_2}{9}, \frac{ N_2 c \gamma }{ \sqrt{\log N} } \right \}. \]
for some constant $c = c(\del, K)$.
\end{prop}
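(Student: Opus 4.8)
The plan is to analyze the second half‑step of the odd‑$k$ iteration, the map $y\mapsto x^i$ obtained by forming $\bar x=M^T y$ (with the same $-p\mathbf{1}\mathbf{1}^T$ recentering that produces the $-p(|X^+|-|X^-|)$ term in the even case) and then applying the randomized $\pm1$ rounding, in close analogy with Propositions \ref{evenauxprop} and \ref{oddxtoyprop}. Assume without loss of generality $\del>1$ and $y\cdot u_y>0$. First I would compute, for a column index $j$ (an ordered $\lfloor k/2\rfloor$-subset of literals), the mean of the recentered coordinate $\bar x_j=\sum_i(M_{ij}-p)y_i$. Since $\E M_{ij}=p+(\del-1)p\,u_y(i)u_x(j)$ whenever the row-subset $i$ and column-subset $j$ are variable-disjoint (and $M_{ij}=0$ otherwise), one gets $\E\bar x_j=(\del-1)p\,u_x(j)\sum_{i\in A_j}u_y(i)y_i$, where $A_j$ is the set of $i$ disjoint from $j$. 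The discrepancy $\sum_{i\notin A_j}u_y(i)y_i$ is at most $O(n^{\lceil k/2\rceil-1})$ in absolute value (the number of $\lceil k/2\rceil$-subsets meeting $j$), and multiplying by $(\del-1)p=\Theta(\log N/n^{k/2})$ leaves an $o(1)$ term; hence $\E\bar x_j=(\del-1)p\,u_x(j)\,(u_y\cdot y)\,(1+o(1))=(\del-1)p\,u_x(j)\,\gamma N\,(1+o(1))$, uniformly in $j$.

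Next I would control $\max_j|\bar x_j|$ and $u_x\cdot\bar x$. For fixed $j$ the random part of $\bar x_j$ is a signed sum of independent Bernoulli's with $\E Y\le 2p\|y\|_1=\Theta(N^2p^2)=\Theta(\log^2 N)$, so Proposition \ref{ChernoffProp} and a union bound over the $N_2$ columns give $|\bar x_j|\le(\del-1)p\,\gamma N\,(1+o(1))+O(\log^{3/2}N)$ for every $j$ with failure probability $n^{-\Omega(1)}$. For $u_x\cdot\bar x=\sum_j u_x(j)\bar x_j$, distinct columns index disjoint blocks of entries of $M$, so the summands are independent with variance $O(\log^2 N)$, and summing the $A_j$-corrections exactly gives $\E[u_x\cdot\bar x]=(\del-1)p\,N_2^*\,(u_y\cdot y)=N_2(\del-1)p\,\gamma N\,(1+o(1))$ with $N_2^*=N_2(1+o(1))$ (using $u_x(j)^2=1$). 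Chebyshev then yields $u_x\cdot\bar x\ge\tfrac12 N_2(\del-1)p\,\gamma N$ with failure probability $O(1/(N_2\gamma^2))=O(\log^2\log N/\log^2 N)=o(1/\log N)$, where I use the hypothesis $\gamma N\ge\sqrt{N_1}\log N/\log\log N$, equivalently $\sqrt{N_2}\,\gamma\ge\log N/\log\log N$.

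Combining these, the randomized rounding satisfies $\E[x^i\cdot u_x]=(u_x\cdot\bar x)/\max_j|\bar x_j|\ge \tfrac12 N_2(\del-1)p\,\gamma N\big/\bigl((\del-1)p\,\gamma N(1+o(1))+O(\log^{3/2}N)\bigr)$. When $(\del-1)p\,\gamma N\gtrsim\log^{3/2}N$ (i.e.\ $\gamma\gtrsim\sqrt{\log N}$) this is $\ge N_2/4$; otherwise, using $(\del-1)pN=\Theta(\log N)$, it is $\ge\Omega\bigl(N_2(\del-1)p\,\gamma N/\log^{3/2}N\bigr)=\Omega(N_2\gamma/\sqrt{\log N})$. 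Thus $\mu:=\E[x^i\cdot u_x]\ge\min\{N_2/4,\,cN_2\gamma/\sqrt{\log N}\}$ for a constant $c=c(\del,K)$. Finally $x^i\cdot u_x=2\sum_j u_x(j)B_j$ with independent Bernoulli's $B_j$ (and $\sum_j u_x(j)=0$ since $u_x$ is balanced between even and odd subsets), so Proposition \ref{ChernoffProp} with $\E Y\ge N_2/2$ gives $x^i\cdot u_x\ge\tfrac12\mu$ with failure probability $\exp(-\Omega(\mu^2/N_2))$; since $\mu^2/N_2\ge c^2N_2\gamma^2/\log N\ge c^2\log N/\log^2\log N=\omega(\log\log N)$, this probability is $o(1/\log N)$, and after replacing $1/4$ by $1/9$ the stated bound follows.

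The step I expect to be the main obstacle is the uniform-in-$j$ control of the signal, i.e.\ showing $\bigl|\sum_{i\in A_j}u_y(i)y_i\bigr|\le\gamma N+o(1/p)$ and hence that $\max_j|\bar x_j|$ is governed by $\max\{(\del-1)p\gamma N,\log^{3/2}N\}$: one must verify that the terms from row- and column-subsets that share a variable (and the $O(p^2)$ double-count terms analogous to those in Proposition \ref{oddxtoyprop}) are genuinely negligible using only $\|y\|_1=N^2p(1+o(1))$, which works precisely because the clause density $p=\Theta(\log N/N)$ makes each such correction sub-constant. The remaining care is in tracking the two independent matrix samples used to form $y^i$ and then $x^i$, and in checking that the $O(\log N)$ iterations of the full algorithm accumulate failure probability $o(1)$; both are routine once the Chernoff estimates above are in place.
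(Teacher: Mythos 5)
Your proposal is correct and follows essentially the same route as the paper's proof: the same recentering and per-coordinate means ($\E \bar x_j = (\del-1)p\,u_x(j)(u_y\cdot y)$ up to an additive $O((N_1-N_1^*)p)$ correction), the same whp bound $\max_j|\bar x_j|\le (\del-1)p(u_y\cdot y)+O(N^2p^2/\sqrt{\log N})$, the same aggregate identity $\E[u_x\cdot\bar x]=(\del-1)N_2^*p(u_y\cdot y)$, and the same two-case analysis of the randomized rounding followed by a final concentration step (you use Chebyshev for $u_x\cdot\bar x$ where the paper applies Proposition \ref{ChernoffProp}, which is immaterial since $o(1/\log N)$ failure probability suffices). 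The one caveat is that your intermediate claim $\E\bar x_j=(\del-1)p\,u_x(j)\gamma N(1+o(1))$ \emph{uniformly in $j$} is stronger than your crude $O(p\,n^{\lceil k/2\rceil-1})$ bound justifies when $\gamma$ is near its minimum (for $k\ge 5$ the additive correction can exceed the per-coordinate signal), but this does not affect the argument, since you only use the additive form inside the $\max_j$ bound and obtain $\E[u_x\cdot\bar x]$ by exact summation, which is precisely the paper's level of treatment.
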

\begin{proof}
For $j \in U_x^+$ as above we calculate,
\begin{align*}
\E[x_j] &= \del p |U_y^+ \cap Y^+| + (2-\del)p | U_y^- \cap Y^+| - \del p |U_y^+ \cap Y^-|\\
& - (2-\del)p |U_y^- \cap Y^-|  - p \left(|Y^+| - |Y^-|   \right) + O((N_1 - N_1^*)p)\\
&= (\del - 1)p (u_y \cdot y) + O((N_1 - N_1^*)p) \\
&= (\del - 1)p (u_y \cdot y) + O(N_2p)
\end{align*}
 And for $j \in U_x^-$, $\E[x_j] = - (\del-1) p (u_y \cdot y)+ O(N_2p)$.  We also have $\E[u_x \cdot x] = (\del - 1) N_2^* p (u_y \cdot y)  $.

 Proposition \ref{ChernoffProp} with $\E[Y] = N_1p$ and $\alpha = \frac{ N_2 p}{\sqrt {\log N}}$ shows that with probability $1 - o(N^{-1})$,
 \begin{align*}
\max_j |x_j| &\le |\del -1| p (u_y \cdot y) + \frac{N^2p^2}{\sqrt{\log N}},
\end{align*}
and applied with $\E[Y] =N_1 N_2^2 p^2$ and $\alpha = \frac{ 1}{\sqrt{ N_2 \log N}}$ shows that with probability $1 - o(N^{-1})$,
\begin{align*}
(u_x \cdot x) &\ge (\del - 1) N_2 p (u_y \cdot y)  - \frac{N^2 p^2 \sqrt{N_2}}{\sqrt {\log N}} \\
&= (\del - 1) N_2 p (u_y \cdot y) (1+ o(1))
\end{align*}
for $(u_y \cdot y) \ge \sqrt{N_1} \log N/ \log \log N$.
Again we randomly round to a vector $x^*$, and if $Z$ is the number of of coordinates on which $x^*$ and $u_x$ agree,
\begin{align*}
\E[Z] &= \frac{N_2}{2} + \frac{ u_x\cdot x  }{ 2 \max |x_j| } \ge \frac{N_2}{2} + \frac{ N_2 (\del - 1)p (u_y \cdot y)   }{ 4 ( (\del -1) p(u_y \cdot y) +\frac{N^2p^2}{\sqrt{\log N}} )}.
\end{align*}
If $(\del -1) p (u_y \cdot y) \le \frac{N^2p^2}{\sqrt{\log N}}$, we have
\begin{align*}
\E[Z] & \ge  \frac{N_2}{2} + \frac{ N_2 (\del - 1)p (u_y \cdot y)   }{ 8 N^2p^2 /\sqrt{\log N} } =  \frac{N_2}{2} + \frac{ N_2  \gamma (\del-1)^3}{8 K \sqrt{ \log N }}.
\end{align*}
If $(\del-1) p(u \cdot x) \ge \frac{N^2p^2}{\sqrt{\log N}}$, we have
\begin{align*}
\E[Z] & \ge  \frac{N_2}{2} + \frac{ N_2 (\del - 1)p (u \cdot x)   }{ 8(\del-1) p (u \cdot x) } = \frac{5 N_2}{8}.
\end{align*}

Another application of Proposition \ref{ChernoffProp} with $\E[Y] = \E[Z]$ and $\alpha =\frac{ (\del-1)^3 \gamma}{100 K\sqrt{\log N}}$ shows that with probability $1- o(1)$,
\[ Z \ge \min \left \{ \frac{N_2}{2} +  \frac{ N_2  \gamma (\del-1)^3}{9K \sqrt{ \log N }}  , \frac{5 N_2}{9}  \right \} ,  \]
which shows that $x^* \cdot u_x \ge \min \left \{\frac{ N_2 c \gamma}{\sqrt{ \log N }}  , \frac{N_2}{9}\right \}$ for some constant $c= c(\del, K)$.
\end{proof}

\paragraph{Finishing:}

Choosing $x^0$ at random gives $|x^0 \cdot u_x| \ge \frac{\sqrt{N_2}}{\log \log N}$ whp.  After a pair of iterations, Propositions \ref{oddxtoyprop} and \ref{oddytoxprop} guarantee that whp the value of $x^i \cdot u_x$ rises by a factor of $\sqrt{\log N}/K$, so after at most $\log N$ steps we have a vector $x \in \{ \pm 1 \}^{N_2}$ with $| x \cdot u_x | \ge \frac{N_2}{9}$.  One more iteration gives a vector $y$ with $|u_y \cdot y|\ge \frac{N \log N}{9}$.  Now consider $(M^T y)$.  In the positive case (when $u_y \cdot y\ge \frac{N \log N}{9}$),  we have for $i \in U_x^+$, $\E[(M^Ty)_i] \ge (\del -1)  Np \log N /9 $.  Using Proposition \ref{ChernoffProp},   $\Pr[ (M^Ty)_i \le 0 ]   =o( N^{-2})$.  Similarly, if $i \in U^-_x$, $\Pr[(M^T y)_i \ge 0] = o(N^{-2})$, and thus whp rounding to the sign of the vector will give us $u_x$ exactly.  The same holds in the negative case where we will get $-u_x$ exactly.

\subsection{Implementing the algorithms with the statistical oracle}
\label{sec:statImplement}

We complete the proof of Theorem \ref{thm:algo1} by showing how to implement the above algorithms with the statistical oracles $\MSAMPLE$ and $\MVSTAT$.

\begin{lem}[Even $k$]
There is a randomized algorithm that makes $O(N \log^2 N)$ calls to the $\MSAMPLE( N)$ oracle and returns the planted assignment with probability $1-o(1)$.  There is a randomized algorithm that makes $O(\log N)$ calls to the $\MVSTAT(t, L)$ oracle with $L=N$ and $t= N \log \log  N$, and returns the planted assignment with probability $1-o(1)$.
 \end{lem}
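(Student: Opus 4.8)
The plan is to show that every step of the algorithm \textsc{Discrete-Power-Iterate} (even $k$) can be carried out through the statistical oracle at the claimed cost, so that correctness is inherited: for the $\MSAMPLE$ version we reproduce the algorithm's distribution essentially on the nose and quote Lemma~\ref{oddnewlem}; for the $\MVSTAT$ version we rerun the same concentration estimates with the oracle's tolerance in place of sampling error. Recall that one iteration amounts to forming $Mx^{i-1}$ for $M\sim M_{\sigma,p}$, randomly rounding to $x^i$, repeating $\log N$ times, then one more product followed by a Gaussian elimination step that reads no clauses.

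For the $\MSAMPLE(N)$ implementation, a call returns $h(C)$ on a single clause $C\sim Q_\sigma$ for a query $h:X_k\to\{0,\dots,N-1\}$ chosen by the algorithm. For iteration $i$ I would use the \emph{partner-relabeled row index}: writing $C=(l_1,\dots,l_k)$, let $S_C=(l_1,\dots,l_{k/2})$, $T_C=(l_{k/2+1},\dots,l_k)$, and let $\bar S_C$ be $S_C$ with its first literal negated; set $h(C)$ to the index of $S_C$ if $x^{i-1}_{T_C}=+1$ and the index of $\bar S_C$ otherwise. Making $m=\Theta(N\log N)$ such calls and tallying returned values, the difference (count of value $v$) $-$ (count of the partner value $\bar v$) equals $(Mx^{i-1})_v-(Mx^{i-1})_{\bar v}$ for a fresh draw $M$ of a $\Theta(N\log N)$-clause formula (interchangeable with $M_{\sigma,p}$, as already noted). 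This quantity has mean $2(\del-1)\,p\,u_v\langle u,x^{i-1}\rangle$ — the $\langle\mathbf 1,x^{i-1}\rangle$ contribution cancels automatically because $v,\bar v$ have opposite parity and $u_{\bar v}=-u_v$ — so it plays exactly the role of the centered product, and random rounding of it produces $x^i$ (consistently, since $u$ is itself partner-antisymmetric). The Chernoff estimates in the proof of Lemma~\ref{oddnewlem} (Propositions~\ref{ChernoffProp} and \ref{evenauxprop}) then apply verbatim, up to harmless constants. Over $\log N$ iterations this is $O(N\log^2 N)$ calls; the final product and Gaussian elimination add $O(N\log N)$, giving the first statement.

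For the $\MVSTAT(N,N\log\log N)$ implementation there is no sampled matrix: one call with the same $h$ and with $\mathcal S$ the $N$ singletons returns, for each row $v$, an estimate $\hat p_v$ of $p_v=\Pr_{C\sim Q_\sigma}[h(C)=v]$ to tolerance $\max\{1/t,\sqrt{p_v(1-p_v)/t}\}$ with $t=N\log\log N$ and $p_v\approx 1/N$. Since $\hat p_v-\hat p_{\bar v}$ estimates $p_v-p_{\bar v}=\frac{2(\del-1)}{|X_k|}u_v\langle u,x^{i-1}\rangle+(\text{lower-order shared-variable terms})$, one call implements a step of a rounded power iteration on the expected centered operator $\bar M-pJ=(\del-1)p\,uu^{\top}+E$, with $E$ the sparse, small-entry shared-variable correction. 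Because this operator is rank one up to the negligible $E$, its signal is spread uniformly across all $N$ coordinates, so once $\langle u,x^{i-1}\rangle=\Omega(N)$ the sign of the estimate is correct in every coordinate, a single rounding recovers $u$ up to global sign, and Gaussian elimination finishes; iterating $O(\log N)$ times (plus handling of the early low-correlation phase) gives the second statement, using $O(\log N)$ calls each of cost $N$.

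The main obstacle lies entirely in the $\MVSTAT$ version: verifying that the tolerance of $\MVSTAT(N,N\log\log N)$ is small enough throughout, and in particular during the initial amplification phase, where $\langle u,x^{i-1}\rangle$ is only $\Theta(\sqrt N)$ and the per-coordinate tolerance $\approx 1/(N\sqrt{\log\log N})$ is not obviously below the per-coordinate signal. I expect this to require choosing the sets $\mathcal S$ more carefully than the naive singletons — exploiting that $\mathcal S$ may contain large, overlapping subsets so that the queried probabilities are $\Theta(1)$, matched to aggregate quantities (such as $\langle u,x^{i-1}\rangle$) that concentrate — together with a careful accounting of how the adversarial tolerance error propagates across the $O(\log N)$ iterations, ensuring the effective operator stays essentially its rank-one part. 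The $\MSAMPLE$ version, by contrast, should be routine once the partner-relabeling is set up, since it reproduces exactly the random process analyzed for Lemma~\ref{oddnewlem}.
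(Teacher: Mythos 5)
Your $\MSAMPLE(N)$ implementation is correct and essentially the paper's: the paper uses two $N$-valued queries $h^+,h^-$ (row index when the column's coordinate of $x^{i-1}$ is $+1$, resp.\ $-1$) and centers by subtracting $p\sum_j x_j$, while you use a single partner-relabeled query and let the antisymmetrization over the pair $(v,\bar v)$ do the centering; both recover the (centered) matrix--vector product from $\Theta(N\log N)$ calls per iteration, and the analysis of Lemma~\ref{oddnewlem} carries over with only cosmetic changes (your rounding is coupled within partner pairs, which changes the variance bookkeeping of $Z$ slightly but harmlessly). So the first statement of the lemma is fine.

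The second statement, however, is exactly where you stop, and that is a genuine gap rather than a routine verification. As you observe, with $t=N\log\log N$ the singleton tolerances are $\approx 1/(N\sqrt{\log\log N})$ while the per-coordinate signal in the early phase is only $(\del-1)\beta/N$ with $\beta=\Theta(1/\sqrt N)$, so coordinate-wise estimates alone cannot drive the iteration, and an adversarial oracle could kill it. The missing idea is the specific use of the freedom in the definition of $\MVSTAT$: the collection $\cS$ may contain subsets that depend on the unknown input distribution and are unknown to the algorithm. The paper queries, per iteration, the $N$ singletons \emph{plus the two parity classes of the planted assignment}, i.e.\ $\{j: u_j=+1\}$ and $\{j: u_j=-1\}$. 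The guarantee on these two aggregate sets yields $|u\cdot v-(\del-1)\beta|\le O(1/\sqrt t)=O(1/\sqrt{N\log\log N})$, which is below the signal already for $\beta\ge 2/\sqrt{N\log\log N}$, hence from the very first step (a random $x^0$ has $\beta\ge 1/(\sqrt N\log\log\log N)$ whp). One must then redo the rounding computation with the oracle's values: $\max_j|v_j|\le (\del-1)\beta/N+2/(N\sqrt{\log\log N})$, so $\E Z\ge N/2+\Omega\bigl(\sqrt{\log\log N}\bigr)\beta N$ while $\beta$ is small (and $\E Z\ge 5N/8$ once it is not), and since the only randomness left is the algorithm's rounding coins, Chebyshev with variance at most $N/4$ shows the correlation grows whp by a factor $\Omega(\sqrt{\log\log N})$ per call, reaching a constant fraction of $N$ within $O(\log N)$ calls; a final sign rounding and Gaussian elimination finish. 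Your proposal correctly anticipates that ``aggregate subsets matched to $\langle u,x^{i-1}\rangle$'' are needed, but without identifying that these subsets can be the parity classes of $\sigma$ itself and without the modified amplification/propagation analysis, the $\MVSTAT(N,\,N\log\log N)$ claim is not established.
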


\begin{proof}
We can run the above algorithm using the $\MSAMPLE(N)$ oracle.  Given a vector $x \in \{ \pm 1\}^N$, we compute $x^{\prime}$, the next iteration, as follows:  each $j \in [N]$ corresponds to a different value of the query functions $h^+$ and $h^-$ defined as
$h^+(X)=i$ if the clause $X= (i,j)$ for $j: x_j = +1$ and zero otherwise, and similarly $h^-(X) =i$ if $X= (i,j)$ for $j: x_j = -1$ and zero otherwise.  For use in the implementation, we define the Boolean functions $h^+_i$ as $h^+_i(X) =1$ iff $h^+(X)=i$.
Let $v_i^+, v_i^-$ denote the corresponding oracle's responses to the two queries, and $v_i = v_i^+ - v_i^-$.  Now to compute $x^\prime$, for each coordinate  we sum $v_i$ over all samples and subtract $p \sum x_i$.  We use $O(\log N)$ such iterations, and we use $O(N \log N)$ clauses per iteration (corresponding to $p = \frac{K \log N}{(\del-1)^2 N}$).

To use the $\MVSTAT$ oracle, we note that for each query function $v$, we make $t=O(N\log N)$ calls to $\MSAMPLE(N)$. We can replace each group of $t$ calls with a one call to $\MVSTAT(N,t)$. Let the response be a vector $p$ in $[0,1]^L$, with $L+2$ subsets, namely singleton subsets for each coordinate as well as for the subsets with positive parity and with negative parity on the
unknown assignment $\sigma$.  For each coordinate $l$, we set $v_l =  \mbox{Binom}(1,p_l)$, the output of an independent random coin toss with bias $p_l$. The guarantees on $\MVSTAT$ imply that the result of this simulation are equivalent for our purposes to
directly querying $\MSAMPLE$. Here we give a direct simulation with smaller $t$.

For $t= N \log \log  N$, versions of equations (\ref{eq:maxjeven}) and (\ref{eq:doteven}) (properly scaled) hold due to the oracle's bound on $|v_i|$ and the bound on $\sum_V v_i$.

In particular, we can calculate that $\E[h_i^+ - h_i^- - \frac{1}{N} \sum_j x_j] = u_i \cdot \frac{(\del -1)\beta}{N} + O\left (\frac{1}{N (N-N^*)} \right) $ where $u \cdot x = \beta N$.  The oracle bounds then give
$\max_i |v_i| \le \frac{(\del -1)\beta}{N} + \frac{2}{\sqrt{tN}   } =  \frac{(\del -1)\beta}{N} + \frac{2}{N \sqrt{\log \log N}   } $
since $t \gg (N- N^*)$. 
The oracle also guarantees that $\left |u \cdot v - (\del -1)\beta   \right | \le \frac{1}{\sqrt t},$
and so for $\beta \ge \frac{2}{\sqrt{N \log \log N}}$, $u \cdot v \ge (\del-1)\beta /2$.

Now we do the same randomized rounding as above, and we see that

\begin{align*}
\E[Z] &= \sum_{j =1}^N \left(\frac{1}{2} + \frac{ u_j v_j } {2 \max |v_j| }    \right)\\
&= \frac{N}{2} + \frac{ u\cdot v  }{ 2 \max |v_j| } \\
&\ge \frac{N}{2} + \frac{  (\del - 1) \beta   }{ 4 ( \frac{(\del -1)\beta}{N} + \frac{2}{N\sqrt{\log \log N}   }  )}.
\end{align*}

If $\frac{(\del -1)\beta}{N} \le \frac{2}{N \sqrt{\log \log N}   } $, we have
\[ \E[Z]  \ge  \frac{N}{2} + \frac{  (\del - 1) \beta   }{ \frac{16}{N \sqrt{\log \log N}   }  } = \frac{N}{2} + \frac{ \sqrt{\log \log N} (\del - 1)    }{ 16  } \beta N .\]

If $\frac{(\del -1)\beta}{N} \ge \frac{2}{N \sqrt{\log \log N}   } $, we have
\[ \E[Z]  \ge  \frac{N}{2} + \frac{  (\del - 1)\beta   }{ 8(\del-1) \beta /N } = \frac{5 N}{8} .\]

The variance of $Z$ is at most $N/4$, and with probability $1-o(1)$ we start with  $|x^0 \cdot u| \ge \sqrt N/ \log \log \log N$. Then successive applications of Chebyshev's inequality as above show that whp after  at most $\log N$ steps, we have $|x^i \cdot u| \ge \frac{5 N}{8}$.

\end{proof}

\begin{lem}[Odd $k$]
There is a randomized algorithm that makes $O(n^{k/2} \log^2 n)$ calls to the $\MSAMPLE(L)$ oracle for $L = N_1$, and returns the planted assignment with probability $1-o(1)$.  
\end{lem}

\begin{proof}
We run the algorithm using $\MSAMPLE$,  alternately querying $N_1$-valued functions and $N_2$-valued functions, each with $t=O(N\log N)$ samples per iteration. Since there are $O(\log N)$ iterations in all, this gives the claimed bound of $O(N\log^2N)$ calls to $\MSAMPLE(N_1)$.

To implement using $\MVSTAT$, we do as described in proof for the even case. Evaluation of an $L$-valued query $h$ with $t$ samples via $t$ calls to $\MSAMPLE(L)$ is replaced by one call to $\MVSTAT(L,t)$ and this response is used to generate a $0/1$ vector, each time with subsets corresponding to all singletons and the two subsets with different parities according to the planted assignment $\sigma$. This gives the bounds claimed in Theorem~\ref{thm:algo1}. To see that the algorithm converges as claimed, we note that Prop.~\ref{oddxtoyprop} continues to hold, with a lower order correction term in Equation~\eqref{333} for the difference when $y'\cdot u_y$ when $y'$ is obtained by the above simulation. This difference is small as guaranteed by the $\MVSTAT$ oracle on the two subsets corresponding to the positive support and negative support of $u_y$.
\end{proof}

\section{Discussion and open problems}
By querying well-chosen sequences of functions, statistical query algorithms can be efficient and just as powerful as unconstrained algorithmic approaches, in spite of not being able to directly examine samples from an input distribution. As far as we know, there is only one counterexample, namely solving equations over finite fields, which can be done easily by Gaussian elimination but not with any efficient statistical query algorithm. Here we have given a unifying model of planted constraint satisfaction problems and characterized their SQ complexity. Our bounds correspond closely to known upper bounds for unconstrained algorithms.

Our work also gives a new technique for proving lower bounds on SQ algorithm that strengthens and generalizes previous techniques. It has already been crucial in getting tight lower bounds on SQ complexity of stochastic linear optimization and high-dimensional mean estimation \cite{FeldmanGV:15}. It also served as a step toward a characterization of the SQ complexity of solving general problems over distributions given in \cite{Feldman:16sqd}.

We conclude with some candidate directions for future research.
\begin{enumerate}
\item A long-standing and intriguing question is to find an additional example (besides solving equations over finite fields) of a natural problem over distributions for which there exists an efficient algorithm that beats the lower bound for statistical algorithms, and disproves our conjecture.
\item Which additional problems can be addressed using the methods of this paper? One interesting candidate is the problem of detection in a stochastic block model with $k>2$ blocks? There is currently a gap between the information-theoretic and algorithmic thresholds for the number of edges needed for detection, but the gap is only a factor of roughly $k/\log k$. A special case of this problem is planted $k$-coloring.
\item It would be interesting to better understand the relationship of our lower bounds for convex program relaxations to those known for hierarchies of LP and SDP relaxations. Does there exist a unifying approach?
\end{enumerate} 

\section*{Acknowledgments} We thank Amin Coja-Oghlan, Florent Krzakala, Ryan O'Donnell, Prasad Raghavendra, and Lenka Zdeborov\'{a} for insightful comments and helpful discussions. We also thank Jan Vondrak for the proof idea of Lemma \ref{lem:simulate-1-mstat}.

\begingroup
\raggedright
\sloppy
\newcommand{\etalchar}[1]{$^{#1}$}

\endgroup

\end{document}